\newcommand\bh{{\bf h}}
\newcommand\bI{{\bf I}}
\newcommand\bT{{\bf T}}
\newcommand\bX{{\bf X}}
\newcommand\bY{{\bf Y}}
\newcommand\mbH{{\mathbb H}}
\newcommand\mbK{{\mathbb K}}
\newcommand\mbR{{\mathbb R}}
\newcommand\mcS{{\mathcal S}}
\newcommand\tw{{\tilde w}}
\newcommand\seq{\stackrel{\mathclap{\normalfont\mcS}}{=}}
\newcommand\norm[1]{\|#1 \|}
\newcommand\chb{\widecheck {\beta}}
\newcommand\stb{\boldb^{\star}}
\newcommand\ep{\boldsymbol{\epsilon}}
\newcommand\boldb{\boldsymbol{\beta}}
\newcommand\tboldb{\boldsymbol{\tilde{\beta}}}
\newcommand\hboldb{\boldsymbol{\hat{\beta}}}
\newcommand\oboldb{\boldsymbol{\tilde{\boldb}_o}}
\newcommand\tbet{\tilde{\beta}}
\newcommand\normh{\norm{.}_{\mbH}}
\newcommand\normk{\norm{.}_{\mbK}}
\newcommand\haf{\frac{1}{2}}
\newcommand\bi{\beta_{i}}
\newcommand\hbi{\hat{\beta}_{i}}
\newcommand\tbi{\tilde{\beta}_{i}}
\newcommand\sbi{{\beta}_{i}^{\star}}
\newcommand\la{\lambda_{K}}
\newcommand\laa{\lambda_{H}}
\newcommand\sumiS{\sum_{i \in \mcS}}
\newcommand\eit{e_{i}^{\top}}
\newcommand\given[1][]{\:#1\vert\:}
\newcommand\bbeta{\boldsymbol{\beta}}
\newcommand\hsig{{\hat{\small\Sigma}}_{11}}
\newcommand\hsigg{{\hat{\small\Sigma}}_{21}}
\newcommand\sbb{\tilde{\boldsymbol{s}}}
\newcommand\KK{K^{\nicefrac{1}{2}}}
\newcommand\KKn{K^{-\nicefrac{1}{2}}}
\DeclareMathOperator\E{E}
\newtheorem{theorem}{Theorem}
\newtheorem{lemma}{Lemma}
\newtheorem{definition}{Definition}
\newtheorem{assumption}{Assumption}
\title{Adaptive Function-on-Scalar Regression with a Smoothing Elastic Net}
\author{%
  Ardalan Mirshani \\
  Department of Statistics\\
  The Pennsylvania State University\\
  University Park, PA 16802 \\
  \texttt{azm245@psu.edu} \\
  \And
  Matthew Reimherr\thanks{Corresponding Author} \\
  Department of Statistics\\
  The Pennsylvania State University\\
  %411 Thomas Building \\
  University Park, PA 16802 \\
  \texttt{mreimherr@psu.edu}
  % examples of more authors
  % \And
  % Coauthor \\
  % Affiliation \\
  % Address \\
  % \texttt{email} \\
  % \AND
  % Coauthor \\
  % Affiliation \\
  % Address \\
  % \texttt{email} \\
  % \And
  % Coauthor \\
  % Affiliation \\
  % Address \\
  % \texttt{email} \\
  % \And
  % Coauthor \\
  % Affiliation \\
  % Address \\
  % \texttt{email} \\
}
\begin{document}

\maketitle

% \textcolor{red}{
% \begin{itemize}
%     \item Need to fix references and there needs to be like twice as many.  Have to find ways to work it in.  could also cite varying coefficients literature.  Look at FLAME or other papers for some ideas.   Easiest way to piss someone off is to not cite them.
%     \item I have gone through everything up to the Emperical study.  It is reading reasonably well.
% \end{itemize}
% }

\begin{abstract}
This paper presents a new methodology, called \textit{AFSSEN}, to simultaneously select significant predictors and produce smooth estimates in a high-dimensional function-on-scalar linear model with a sub-Gaussian errors. 
Outcomes are assumed to lie in a general real separable Hilbert space, $\mbH$, while parameters lie in a subspace known as a {\it Cameron Martin space}, $\mbK$, which are closely related to Reproducing Kernel Hilbert Spaces, so that parameter estimates inherit particular properties, such as smoothness or periodicity, without enforcing such properties on the data. 
We propose a regularization method in the style of an adaptive Elastic Net penalty that involves mixing two types of functional norms, providing a fine tune control of 
both the smoothing and variable selection in the estimated model. 
Asymptotic theory is provided in the form of a functional oracle property, and the paper concludes with a simulation study demonstrating the advantage of using AFSSEN over existing methods in terms of prediction error and variable selection. 
\end{abstract}

\textbf{KEYWORDS}: Variable Selection, Functional Data Analysis, Hilbert Space, Elastic Net, Smooth Estimate, Reproducing Kernel Hilbert Space , Oracle Property

% \textcolor{red}{
% \begin{itemize}
%     % \item Need to fix references.  You should use \citep{reiss2010fast} and not (\citet{reiss2010fast}).  Also, multiple citations are separated with a comma  \citep{reiss2010fast,morris2015functional}.
%     \item Punctuation around equations/math is sloppy. Remember you have to act as though the equation was part of the sentence, which means proper use of periods and commas. 
% \end{itemize}
% }

\section{Introduction}
In recent years, rapid advances in data gathering technologies and new complex modern studies have presented substantial challenges for extracting information from increasingly large and sophisticated data sets. \textit{Functional data analysis}, FDA, is a branch of statistics for conducting statistical inferences on the complicated objects specially in high dimensional spaces \citep{ramsay2007applied,hsing2015theoretical,Kokoszka2017Introduction}.
% \todo{check Amazon, but Ramsay's plus Tailen's are good starting points.}
In addition, the emergence of inexpensive genotyping technologies has produced a substantial need for tools capable of handling large numbers of scalar predictors \citep{bierut2006novel,scott2007genome,repapi2010genome,hebiri2011smooth,algamal2015regularized,craig2018child}. 
% \todo[color=gray]{CITE SOME FDA GWAS PAPERS}
%and the importance of results, genetic data sets became one of the most significant applications for FDA and a motivation for our studies. 
%For instance in many problems of the area, a few number of genetic markers out of a millions are associated with some specific outcomes, such as blood pressure or obesity, and a principle problem is to find those significant mutations and estimate a prediction model. 
In this paper, we consider the function-on-scalar regression problem when the number of predictors is much larger than the number of subjects/units. We present a new approach, called \textit{AFSSEN}, for \textit{Adaptive Function-on-Scalar Smoothing Elastic Net}, which can separately control the smoothness of the underlying functional parameter estimates as well as select important predictors. 

As in classic statistical analyses, the functional linear model, FLM, is one the principle modeling tools when working with functional data \citep{morris2015functional}.
% \todo[color=gray]{CITE MORRIS OVERVIEW}. 
In these cases, at least one of the outcomes or predictors are functional \citep{reiss2010fast}. When the number of predictors is fixed, then techniques for fitting FLM and their statistical properties are now well understood 
\citep{morris2015functional,Kokoszka2017Introduction}
% \todo{Barber isn't low dimensional, so I removed the citation here. If you can do a quick google search and find additional citations, then great.  Otherwise, you can leave the two you have now} 
in low dimensional FLM, %\citet{morris2015functional} has done and proved most of the significant works. 
However, in high dimensional cases where the number of predictors are relatively larger than the number of statistical units, little work has been done which most of them are for Scalar-on-Function settings (\cite{matsui2011variable} ; 
\cite{gertheiss2013variable};
\cite{lian2013shrinkage} ;  \cite{fan2015functional}). 

In Function-on-Scalar regression, which is the problem we consider, \citet{chen2016variable} considered a functional least squares with a \textit{Minimax Concave Penalty}, MCP \citep{zhang2010nearly} with fixed number of predictors. A \textit{pre-whitening} technique was used to exploit the within function dependence of the outcomes.
\citet{barber2017function} presented the Function-on-Scalar LASSO, FSL, which combines a functional least squares with an $\mathcal{L}_1$ penalty introduced in a separable Hilbert Space $\mbH$. Since FSL is a convex optimization problem, it is computationally efficient even with a large number of predictors ($I \gg N$). Additionally, FSL estimates achieve optimal convergence rates, but as with  traditional LASSO, these estimates suffer from an asymptotic bias and do not achieve the \textit{functional oracle property}.
\citet{fan2017high} suggested \textit{Adaptive Function-on-Scalar LASSO}, AFSL, which uses a functional least squares with an adaptive $\mathcal{L}_1$ penalty to reduce the bias problem in FSL. They showed AFSL is computationally as efficient as FSL, but achieves a {\it strong functional oracle property.} However, AFSL provides limited control of the smoothness of the functional parameter estimates, which can affect on the prediction error.
%(defined in section \eqref{3:theorem})
 \citet{parodi2017functional} developed the \textit{Functional Linear Adaptive Mixed Estimation}, FLAME, which simultaneously selects important predictors and estimates the smooth parameters. They assume that while the data lie in a general real separable Hilbert space, $\mbH$, the model parameters lie in a Reproducing Kernel Hilbert Space (RKHS), $\mbK$. The RKHS is a subspace of $\mbH$, which can be identified with a linear operator, $K.$
%  They assume the data come from an arbitrary Hilbert space, $\mbH$, but the parameters live in a smaller subspace which is a Reproducing Kernel Hilbert Space (RKHS), $\mbK$, which is induced by a kernel function, $K$. 
 %Different operators for the RKHS induce different properties such as smoothness or periodicity on parameters without enforcing such structure on the data. FLAME uses a regularization methods with combining a functional least square in $\mbH$ with an adaptive $\mathcal{L}_1$ penalty introduced in $\mbK$. 
 They demonstrated that FLAME achieved a {\it weak functional oracle property}, meaning it recovered the correct support with probability tending to one and FLAME estimator is equivalent to the oracle estimator only on certain nice projections. To show that FLAME achieved the strong oracle property required stronger structural assumptions. In their framework, they used a \textit{coordinate descent} algorithm which made it computationally very efficient.

The main obstacle for FLAME is having to  simultaneously control the smoothness and sparsity with a single penalty and tuning parameter. In particular, a tuning parameter value that practically works well for smoothing may not work well for variable selection, and vice versa. To address this issue we propose a method that more carefully controls smoothing and sparsity separately. 
We assume the data live in an arbitrary Hilbert space, $\mbH$, but that some linear constraint of the parameters are enforced to lie in a {\it Cameron-Martin} space, CMS, $\mbK$.  CMS are closely related to Reproducing Kernel Hilbert Spaces, RKHS, when $\mbH=L^2[0,1]$, and the two terms are often used interchangeably \citep{bogachev1998gaussian}.
% \todo[color=gray]{cite Bogachev here again, he makes this comment}.
However, since our $\mbH$ will be more general, we refrain from using the term RKHS to avoid confusion. In our approach, AFSSEN, we use an idea similar to the scalar adaptive elastic net penalty \citep{zou2005regularization,zou2006adaptive}, but for functional data. In particular, AFSSEN exploits a combination of a penalized functional least squares and an adaptive smoothing elastic net penalty containing a $\mathcal{L}_1$ term in $\mbH$ for variable selection and a separate $\mathcal{L}_2$ term in $\mbK$ for controlling the smoothness of the estimated parameters. The AFSSEN parameter estimates inherit the nature properties of the kernel function of $\mbK$, such as smoothness or periodicity. We also show that AFSSEN enjoys better mathematical properties than AFSL or FLAME, even when relaxing the Gaussian error assumption to $C$-subgaussian.  In particular, we show that AFSSEN achieves a strong oracle property in both $\mbH$ and the strong norm $\mbK$. We also provide a very fast coordinate descent algorithm in the {\it R} programming language \citep{R},
% \todo[color=gray]{add R citation here},  
whose backend is written in {\it C++} \citep{Rcpp}.

The remainder of the paper is organized as follows. In Section \ref{3:back}
% \todo[color=gray]{only equations use eqref. theorems, sections, etc all just use ref} 
we propose some primary materials and main assumptions for the results presented in next sections. Section \ref{3:theorem} provides the framework and demonstrates the strong oracle property for AFSSEN under some non-strong assumptions. In section \ref{3:implementation} we introduce the implementation and numerical illustration including the coordinate descent algorithm and practical considerations. A simulation study with a discussion on comparing the performance of AFSSEN and FLAME in two different smooth and rough scenarios is given in section \ref{3:ems}. You can see a conclusion in Section \ref{3:conclusion}. All mathematical proofs and derivations can be found in the supplemental.

\section{Background and Methodology} \label{3:back}
Throughout this paper we consider $\mbH$ is a real separable Hilbert space with inner product $\langle . , . \rangle_{\mbH}$ and induced norm $\norm{.}_{\mbH}$. Let $K:\mbH \to \mbH$ 
% \todo[color=gray]{never put "math" in text mode, it should always be put between dollar signs.} 
be a compact, positive definite, self-adjoint linear operator, meaning $\langle Kx , x \rangle_{\mbH} > 0$ when $x \neq 0$, $\langle Kx , y \rangle_{\mbH} = \langle x , Ky \rangle_{\mbH}$ for all $x,y\in\mbH$, and it has a finite trace.  
%Denote $\theta_1 \geq \theta_2 \geq \dots \geq 0$ and $v_1, v_2, \dots \in \mbH$ as the ordered eigenvalues and their corresponding eigenfunctions of $K$, i.e. $K(v_i) = \theta_i v_i$. 
According to the spectral theorem \citep{dunford1963linear}, we can decompose K as $K(x) = \sum_{i=1}^{\infty} \theta_i \langle v_i , x \rangle_{\mbH} v_i$,
where $\{v_1,v_2,\dots\}$ is an orthonormal basis of $\mbH$ and $\theta_1 \geq \theta_2 \geq \dots \geq 0$ is a positive sequence of real numbers. The eigenvalues, $\{\theta_i\}$ and eigenfunctions, $\{v_i\}$ of $K$ induce a subspace of $\mbH$, denoted $\mbK$, called the {\it Cameron-Martin} \citep{bogachev1998gaussian}
% \todo[color=gray]{cite Bogachev again}
space defined as
\begin{align*}
%\label{Intro:mbK:app2}
\mbK=\left\{ h \in \mbH ; \ \ \sum\limits_{i=1}^{\infty} \frac{\langle h , v_i \rangle_{\mbH}^2}{\theta_i} < \infty \right\}. 
\end{align*}
Equivalently, $\mbK$ can be viewed as the image $K^{1/2}(\mbH)$ \citep{bogachev1998gaussian}.
% \todo[color=gray]{cite bogachev here}.
Then $\mbK$ is also a Hilbert space under the inner product 
$\langle x , y \rangle_{\mbK} = \sum\limits_{i=1}^{\infty} \dfrac{\langle x , v_i \rangle_{\mbH} \langle y , v_i \rangle_{\mbH}}{\theta_i}$.
% \begin{align*}
% \langle x , y \rangle_{\mbK} = \sum\limits_{i=1}^{\infty} \frac{\langle x , v_i \rangle_{\mbH} \langle y , v_i \rangle_{\mbH}}{\theta_i}.
% \end{align*}
The most commonly encountered Hilbert space in FDA is $L^2[0,1],$ which is the space of real valued square integrable functions over $[0,1]$ with corresponding norm 
$\norm{x}_{\mbH}^2 = \int_{0}^{1}x^2(t)dt$. When $\mbH=L^2[0,1]$ and $K$ is an integral operator with kernel $K(t,s)$, then $\mbK$ is isomorphic to a Reproducing Kernel Hilbert Space \citep{berlinet2011reproducing}.

In previous works \citep{fan2017high,parodi2017functional}, the modelling noise was assumed to be a Gaussian process. In this paper, we relax this assumption by consider a $C$-subgaussian noise, % which is a generalized version of a Gaussian process. 
which, to the best of our knowledge, has not been considered before in functional data models.  % This can increase the application of our method on non-gaussian examples. 
 A mean zero random element $X$ in $\mbH$ is called $C$-subgaussian if
\begin{align*}
\E \left[ \exp{ \langle x , X \rangle_{\mbH} } \right] \leq \exp\left(\frac{1}{2}\langle x , C(x) \rangle_{\mbH} \right) \ \ \ \ \ \forall x \in \mbH,
\end{align*}
where $C$ is a covariance operator in $\mbH$ \citep{buldygin1980sub,Antonioni1997sub}.
% \todo[color=gray]{We need to cite something here, it isn't our invention.}
In other words, the moment generating function of the $X$ is dominated, uniformly across $\mbH$, by the moment generating function of a Gaussian process with covariance $C$.  This is a convenient assumption as it provides the necessary tail probability inequalities \citep{hsu2012tail} without making explicit assumptions on the distribution of $X$. In particular, one can see a Gaussian process in $\mbH$ with covariance operator $C$ will be a $C$-subgaussian process in $\mbH$ \citep{Antonioni1997sub}.
% \todo[color=green]{for more emphasis!}

%We say that C is a Covariance operator of centered random element $X \in \mbH$ if
%\begin{align*}
%C(x) = E[\langle X , x \rangle_{\mbH} X] \in \mbH \ \ \ \  \ \ \ \forall x \in \mbH
%\end{align*}

We now introduce our primary modelling assumptions.
\begin{assumption} \label{assumption:AFSSEN}
Let $Y_{n} \in \mbH \ $ for $n \in 1 , \dots , N$ satisfy
\begin{align*}
Y_{n}=\sum_{i=1}^{I} X_{n,i}\sbi+\epsilon_{n},
\end{align*}
where $\bX=\{X_{n,i}\} \in \mbR^{N \times I}$ is the design matrix with standardized columns and $\epsilon_{n}$ are i.i.d $C$-subgaussian random elements of $\mbH$. We assume that only the first $I_{0}$ predictors are significant, meaning their corresponding coefficient functions, $\beta^{\star}_{1} , \dots , \beta^{\star}_{I_{0}}$ are nonzero. We denote $\bX=(\bX_{1},\bX_{2})$ to partition the predictors into $\bX_{1}$ and $\bX_{2}$ which are called the \textit{significant} and \textit{null} predictors respectively.  The true support we denote as $\mcS = \{1,\dots,I_0\}$.
\end{assumption}
% Just as scalar multiplication can be viewed as a continuous linear operator over $\mbH$, the matrix $(\bX_1^\top \bX_1)^{-1} \bX_1$ can be viewed as a continuous linear operator from $\mbH^N$ to $\mbH^{I_0}$, which simply forms $I_0$ linear combinations of $N$ elements of $\mbH$.  Define $\bY = (Y_1,\dots, Y_N) \in \mbH^N$, then the oracle estimate is defined as
% \begin{align*}
% \oboldb = \{(\bX_1 ^\top \bX_1)^{-1}\bX_1^\top\bY,\textbf{0}\},    
% \end{align*}
% \todo[color=green]{I comment out the operator character of $(X'X)^{-1}X'$, may include in other place!}
The oracle estimate is defined as $\oboldb = \{\oboldb_{1},\textbf{0}\}$ where $\oboldb_{1}$ is the $\mathcal{L}_2$ penalized estimate given the true support and $\textbf{0} \in \mbH^{I-I_0}$ consists of $I-I_0$ zero functions.
In this paper, we provide an estimator of $\bbeta$ that achieves two types of {\it strong oracle properties,} namely, our estimator asymptotically has the correct support and also is equivalent to the oracle estimator in the $\mbH$ topology as well as the stronger $\mbK$ topology. 
% Since the correct support is usually unknown, we need to use other methods to estimate the parameters.

We propose estimating $\bbeta^*$ by minimizing the the following target function over $\mbH^I$
\begin{align} \label{AFSSEN}
L_{\lambda}(\bbeta)=\dfrac{1}{2N} \|\bY-\bX\boldb\|_{\mbH^I}^2+\frac{\la}{2} \sum_{i=1}^{I} \|L(\beta_{i})\|_{\mbK}^2 + \laa \sum_{i=1}^{I} \tw_{i} \| \beta_{i}\|_{\mbH},
\end{align}
where $\bY \in \mbH^{N}$, $\bX \in \mbR^{N \times I}$ and $\boldb \in \mbK^{I}$. 
% We note a slight abuse of notation where $\|\cdot \|_{\mbH}$ denotes both the norm over $\mbH$ and $\mbH^N$, which will be clear given the context.  %Let $\mbK$ denote the Cameron-Martin space (equivalently an RKHS in specific contexts) produced by a symmetric, positive definite, trace class operator $K$. and 
The operator $L:\mbK \to \mbK$ is a 
% self-adjoint nonnegative definite 
continuous linear operator
% with the same eigenfunctions as $K$ and
which is included to provide slightly more generality.
% \todo[color=green]{I added this for L}
In particular, if one wishes to only penalize the second derivative of $\beta_i$, then that is equivalent to using a Sobolev kernel for $\mbK$ and choosing $L$ as a projection onto the orthogonal compliment of the constant and linear functions (since they have second derivative zero) \citep{bawa2005spline,yuan2010reproducing}.
% \todo{Cite Tony Cai's RKHS paper(s) and the spline book by Waba, they do this sort of thing.  It is a common technique when working with the second derivative.  You move everything over to the right RKHS}
%assumed to be self-adjoint 
%Notationally, we define $\|\bY-\bX\boldb\|_{\mbH}^2 =  \sum\limits_{n=1}^{N} \norm{Y_n - X_n^\top \boldb}_{\mbH}^2$ and in fact for ease of exposition, $\norm{.}_{\mbH}$ denotes the induced Hilbert space norm in both $\mbH$ and $\mbH^N$. Similarly the notation $\norm{.}_{\mbK}$ works for both norm in $\mbK$ and $\mbK^N$.

The estimator produced by minimizing \eqref{AFSSEN} we call \textit{Adaptive Function-on-Scalar Smoothing Elastic Net}, AFSSEN, as it is an extension of the classic Elastic Net \citep{zou2005regularization,zou2006adaptive} to functional response models.
%  Here, we added some weights on $\mathcal{L}_1$ penalty in $\mbH$ to reduce the bias and achieve the oracle property and used $\norm{.}_{\mbK}^2$ in $\mathcal{L}_2$ penalty in order to enforce the parameter estimates lie in the $\mbK$ space and so inherit its properties.
 Setting $\la = 0$, AFSSEN reduces to adaptive function-on-scalar lasso, AFSL, \citep{fan2017high}. Comparing AFSSEN with another approach, FLAME \citep{parodi2017functional} utilizes an $\mathcal{L}_1$ penalty with the $\mbK$ norm, which simultaneously selects significant predictors and produces smooth estimates of the parameters, while in AFSSEN, this task is split between two penalties.  The first is an $\mathcal{L}_1$ penalty using the ${\mbH}$ norm, which is responsible for variable selection, while smoothing is achieved by using an $\mathcal{L}_2$ penalty with the squared ${\mbK}$ norm. %Indeed, the $\mathcal{L}_2$ norm separately control the smoothness of parameter estimation and does not play a role on the variable selection. In addition, 
 We show that tuning sparsity and smoothness of the estimates separately produces stronger asymptotic results and can dramatically increase statistical utility.

The AFSSEN target function \eqref{AFSSEN} requires a kernel, weights $\tw_i$, and the values of penalty parameters $\la$ and $\laa$.
When $\mbH = L^2[0,1]$, there are many options for choosing the kernel functions, each of which imparts different properties to the parameter estimates. We explore four popular kernels:
\begin{align} \label{Kernels}
K_{1}(t,s) & =\exp\left\{\dfrac{-|t-s|^2}{\rho} \right\},  \\
K_{2}(t,s) & =\left(1+\dfrac{\sqrt{5}|t-s|}{\rho}+\dfrac{5(t-s)^2}{3\rho^2}\right)\exp\left\{\dfrac{-\sqrt{5}|t-s|}{\rho}\right\}, \nonumber \\
K_{3}(t,s) & =\left(1+\dfrac{\sqrt{3}|t-s|}{\rho}\right)\exp\left\{\dfrac{-\sqrt{3}|t-s|}{\rho}\right\}, \nonumber \\
K_{4}(t,s) & =\exp\left\{\dfrac{-|t-s|}{\rho}\right\}. \nonumber
\end{align}
Each kernel is from the Mat\'ern family of covariances \citep{stein2012interpolation} with smoothness parameters $\nu=\infty, 5/2, 3/2,1/2$ respectively, though the first is also known as the Gaussian or squared exponential kernel and the last is also known as the exponential, Laplacian, or Ornstein-Uhlenbeck kernel. There are also different options to choose the adaptive weights. One can use the data driven weights $\tw_i=\nicefrac{1}{\norm{\tbet_i}_{\mbH}}$ where  $\tboldb=(\tbet_1,\dots,\tbet_I)^\top$ 
% \todo{in general I don't like $a^T$ and prefer $a^\top$, but you can decide.  Go through the paper and make sure it is consistent though, because I have a few "tops" floating around.}
is the FSL parameter estimate \citep{barber2017function}, which has the added benefit of screening out small effects before fitting the more complicated AFSSEN model.
% So it helps for dimension reduction, implies asymptotically unbiased estimates and achieves an oracle property. 
Another option is to run a nonadaptive version of AFSSEN, setting $\tw_i=1$,  then compute the parameter estimate $\hboldb$ and finally define the new weights $\tw_i=\nicefrac{1}{\norm{\hat \beta_i}_{\mbH}}$ for running the adaptive step. 
\citet{huang2008adaptive} suggest using one over the norm of the parameter estimation come from running a marginal regression. The second method, running the nonadaptive step with all wights set to one, is the approach we take in the simulation section. Finally for determining the penalty parameters $\la$ and $\laa$, we consider a fine range and then find their optimal values based on cross-validation, though there are other options, such as BIC \citep{barber2017function}.
\section{Theoritical Properties} \label{3:theorem}
In this section, we present our main theoretical results. We begin by explicitly introducing more technical assumptions needed on the tuning parameters.  We decompose our assumptions into three sets.  The first, Assumption \ref{a:main}, ensures that AFSSEN, asymptotically, recovers the true support of $\bbeta$.  The second, Assumption \ref{assumption:normH}, ensures that AFSSEN is also asymptotically equivalent to the oracle estimate in the $\mbH$ topology, which completes the strong oracle property.  Finally, under Assumption \ref{assumption:normK}, one can show that AFSSEN also achieves the strong oracle property in the stronger $\mbK$ topology, which we have not seen from any other estimator, but is useful for estimating quantities such as derivatives.  
\begin{assumption}  \label{a:main}
Suppose Assumption \ref{assumption:AFSSEN} is satisfied.  Denote the true support as $\mcS$.  We assume the following six conditions hold.
\begin{enumerate} \label{assumption:4p}
\item \textbf{Minimum Signal.} Let $b_{N}=\min\limits_{i \in \mcS} \norm{\beta^{\star}_i}_{\mbH}$, then we assume 
\begin{align}  \label{a:1.1}
b_{N}^2 \gg \frac{I_{0}^2 \log(I)}{N}.
\end{align}
\item \textbf{Sparsity Tuning Parameter.} We assume the sparsity tuning parameter, $\laa$, satisfies
\begin{align}
\frac{\sqrt{I_{0}}\log(I)}{N} \ll \laa \ll \frac{b_{N}^2}{\sqrt{I_{0}}}.
\end{align}
\item \textbf{Design Matrix.} Let $\hsig=N^{-1} \bX_{1}^\top \bX_{1} $, the design matrix for true predictors, then we assume minimum and maximum eigenvalue of $\hsig$ will be bounded by
\begin{align} \label{a:1.3}
\frac{1}{\tau} \leq \sigma_{min}(\hsig) \leq \sigma_{max}(\hsig) \leq \tau,
\end{align}
where $\tau$ is a fixed positive number.
\item \textbf{ّIrrepresentable Condition.} Let $\hsigg=N^{-1} \bX_{2}^\top \bX_{1} $, the cross covariance between the true and null predictors, then we assume that
\begin{align} \label{a:1.4}
\| \hsigg \hsig^{-1} \|_{op} \leq \phi < 1,
\end{align}
where $\norm{A}_{op} = \sup_{\norm{x}=1}\norm{Ax}$ is an operator norm defined for the arbitrary matrix A.
\item \textbf{Maximum Signal.} Let $d_N=\max\limits_{i \in \mcS} \norm{\beta^{\star}_i}_{\mbK}$, we assume the smoothing tuning parameter, $\la$, satisfies
\begin{align}
\la \ll \frac{b_N^2}{I_0 d_N^2}.
\end{align}
\item \textbf{Smoothing Tuning Parameter.} then we assume
\begin{align}
\frac{I_0 \sqrt{\log(I)} d_N}{\sqrt{N}} \ll \frac{\laa}{\sqrt{\la}}.
\end{align}
\end{enumerate}
\end{assumption}
The above assumptions are common in the high dimensional regression literature. The first condition indicates the minimum magnitude of the signals for detecting the relevant predictors. It allows the smallest value of $\norm{\stb_i}_{\mbH}$ vary with the sample size, the number of significant and whole predictors, $I_0$ and $I$, but cannot be too small. The second condition, on the sparsity tuning parameter, states a familiar rate for $\laa$ allowing it to grow but not too fast. The third condition, on the design matrix, guarantees that the oracle estimator is well defined, which, in turn ensures that the AFSSEN estimates are well behaved when restricted to the true predictors. The Irrepresentable condition implies that the true and null predictors should not be too correlated. This is an essential assumption for achieving the oracle property \citep{zhao2006model}. The fifth condition, on the maximum signal, essentially indicates that the smoothing tuning parameter, $\la$, cannot be increased too quickly. Finally, the last condition gives a trade-off between the smoothing and sparsity parameters. It indicates that the sparsity parameter cannot be too small relative to the smoothing parameter.
%%%%%%%%%%%%%%%%%%%%%
%%%%%%%%%%%%%%%%%%%%%
%%%%%%%%%%%%%%%%%%%%%
%%%%%%%%%%%%%%%%%%%%%
%%%%%%%%%%%%%%%%%%%%%
%%%%%%%%%%%%%%%%%%%%%\\
%%%%%%%%%%%%%%%%%%%%%

The above assumptions will imply that AFSSEN is consistent in terms of variable selection. We require slightly stronger assumptions to show the AFSSEN estimates are asymptotically equivalent to the oracle estimates under $\norm{.}_{\mbH}$ and $\norm{.}_{\mbK}$.
\begin{assumption} \label{assumption:normH}
The sparsity tuning parameter $\laa$, satisfies
\begin{align*}
\laa \ll \frac{b_N}{\sqrt{N} \sqrt{I_0}}.
\end{align*}
\end{assumption}
%%%%%%%%%%%%%%%%%%%%%
%%%%%%%%%%%%%%%%%%%%%
%%%%%%%%%%%%%%%%%%%%%
%%%%%%%%%%%%%%%%%%%%%
\begin{assumption} \label{assumption:normK}
Assume that $\eta_j^2 \geq M \sqrt{\theta_j}$ where $\theta_j$ and $\eta_j$ are the eigenvalues of the $K$ and $L$, respectively, and $M>0$ is a constant scalar, then the smoothing and sparsity tuning parameters $\la,\laa$ satisfy
\begin{align*}
\frac{\laa}{\la} \ll \frac{b_N}{\sqrt{N} \sqrt{I_0}}.
\end{align*}
\end{assumption}
%%%%%%%%%%%%%%%%%%%%%
%%%%%%%%%%%%%%%%%%%%%
%%%%%%%%%%%%%%%%%%%%%
%%%%%%%%%%%%%%%%%%%%%\\
%%%%%%%%%%%%%%%%%%%%%
Assumption \ref{assumption:normH} assigns a tighter upper bound than the Sparsity Tuning Parameter condition in Assumption \ref{assumption:4p}. Assumption \ref{assumption:normK} gives another trade-off between $\la$ and $\laa$ and does not allow their ratio grow really fast.
Now with using the above assumptions, we can present our main theorem which shows the AFSSEN chooses the true support with probability one and their estimates are asymptotically equivalent with the oracle estimates.
%%%%%%%%%%%%%%%%%%%%%%%%%%%%%%%%%%%
%%%%%%%%%%%%%%%%%%%%%%%%%%%%%%%%%%%
\begin{theorem} \label{t:w.oracle}
Suppose $\tboldb$ and $\oboldb$ are the FSL \citep{barber2017function} and oracle estimates respectively. Assume $L$ is a self-adjoint nonnegative definite continuous linear operator with the same eigenfunctions as $K$.  Let $\hboldb$ be the the AFSSEN estimate with the data driven weights $\tw_i=\|\tbet_i\|_{\mbH}^{-1}$. If the regression model satisfies Assumptions \ref{assumption:AFSSEN} and \ref{assumption:4p}, the AFSSEN estimates $\hboldb$
\begin{enumerate}
\item has the correct support:
\begin{align*}
P(\hboldb \seq \stb) \rightarrow 1,
\end{align*}
\item is equivalent to oracle estimate under $\norm{.}_{\mbH}$ if Assumption \ref{assumption:normH} also holds:
\begin{align*}
\| \hboldb - \oboldb\|_\mbH =o_{P}(N^{-1/2}),
\end{align*}
\item and is equivalent to oracle estimate under $\norm{.}_{\mbK}$ if Assumption \ref{assumption:normK} also holds:
\begin{align*}
\|\hboldb - \oboldb\|_{\mbK}=o_{P}(N^{-1/2}).
\end{align*}
\end{enumerate}
\end{theorem}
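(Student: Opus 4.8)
The plan is to prove all three parts at once through a single oracle / primal--dual witness construction, exploiting that the target $L_\lambda$ in \eqref{AFSSEN} is convex (a quadratic, plus the convex squared $\mbK$-seminorm, plus a sum of norms), so that any point satisfying the first-order (KKT) conditions is automatically a global minimizer. First I would form the \emph{restricted} minimizer $\hboldb^{(r)}$ that solves \eqref{AFSSEN} over the subspace $\{\boldb:\beta_i=0,\ i\notin\mcS\}$, and extend it by zeros off $\mcS$. The whole argument reduces to showing that, with probability tending to one, this extended vector satisfies the KKT conditions of the \emph{unrestricted} problem and has all support coordinates nonzero; convexity then forces $\hboldb=\hboldb^{(r)}$, which is exactly the event $\{\hboldb\seq\stb\}$ of part 1.

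Writing the stationarity conditions in the $\mbH$ topology (identifying gradients by Riesz representation), the smoothing term contributes a self-adjoint operator $T$ with $\langle Tx,x\rangle_{\mbH}=\|Lx\|_{\mbK}^2$, and the adaptive term contributes the subdifferential of $\laa\tw_i\|\cdot\|_{\mbH}$. For $i\in\mcS$ I would check that $\hbi\neq0$: since the restricted coordinates stay within the oracle's estimation error of $\sbi$, the Minimum Signal condition \eqref{a:1.1}, together with the smoothing bound from the Maximum Signal condition, keeps $\|\hbi\|_{\mbH}$ bounded away from zero. The real work is the \emph{strict dual feasibility} off the support: for every $i\notin\mcS$ one must show
\[
\Big\|\tfrac{1}{N}\big[\bX_2^\top(\bX_1\hboldb^{(r)}_{\mcS}-\bY)\big]_i\Big\|_{\mbH} < \laa\,\tw_i .
\]
Substituting the support stationarity equation, the left side splits into a bias term governed by $\hsigg\hsig^{-1}$ and a noise term $\tfrac1N\bX_2^\top\ep$. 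The Irrepresentable Condition \eqref{a:1.4} contracts the bias below $\phi<1$ times the support subgradient magnitude, while the $C$-subgaussian tail inequalities control the noise at rate $\sqrt{\log(I)/N}$ uniformly over the $I-I_0$ null coordinates; crucially, the data-driven weights make $\tw_i$ \emph{large} on the null set (because $\tbet_i$ is small there) and $O(1/b_N)$ on $\mcS$, which opens the gap required by the Sparsity and Smoothing Tuning Parameter conditions. I expect this uniform strict-feasibility step, together with the preliminary lemma bounding the random FSL weights $\tw_i$ both above on $\mcS$ and below on $\mcS^c$, to be the main obstacle.

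Once part 1 holds we have $\hboldb_{\mcS}=\hboldb^{(r)}_{\mcS}$, and parts 2 and 3 become essentially deterministic given the weights. Subtracting the oracle stationarity equation (which carries only the ridge penalty) from the restricted AFSSEN equation, the data term $\tfrac1N\bX_1^\top\bY$ cancels and leaves, for $\Delta:=\hboldb_{\mcS}-\oboldb_{\mcS}$,
\[
\tfrac1N\big[\bX_1^\top\bX_1\Delta\big]_i+\la\,T\Delta_i \;=\; -\,\laa\,\tw_i\,\frac{\hbi}{\|\hbi\|_{\mbH}} \;=:\; r_i .
\]
For part 2 I would pair this with $\Delta$ in $\mbH$: the Design Matrix condition \eqref{a:1.3} gives $\tfrac1N\langle\bX_1^\top\bX_1\Delta,\Delta\rangle_{\mbH}\ge\tau^{-1}\|\Delta\|_{\mbH}^2$ and $\langle T\Delta,\Delta\rangle_{\mbH}\ge0$, so $\|\Delta\|_{\mbH}\le\tau\|r\|_{\mbH}$. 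Since $\|r_i\|_{\mbH}=\laa\tw_i$ and $\sum_{i\in\mcS}\tw_i^2\lesssim I_0/b_N^2$, this yields $\|\Delta\|_{\mbH}\lesssim\laa\sqrt{I_0}/b_N$, which is $o_P(N^{-1/2})$ precisely under Assumption \ref{assumption:normH}.

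For part 3 the $\mbK$-norm is stronger, so the naive pairing is too lossy and the smoothing operator must be used to invert the system. Diagonalizing in the common eigenbasis $\{v_j\}$ of $K$ and $L$, the coordinates $a_{ij}=\langle\Delta_i,v_j\rangle_{\mbH}$ decouple across $j$ into the $I_0$-dimensional systems $\big(\hsig+\la(\eta_j^2/\theta_j)\bI\big)a_{\cdot j}=\rho_{\cdot j}$, with $\rho_{ij}=\langle r_i,v_j\rangle_{\mbH}$. Using $\smin(\hsig)\ge\tau^{-1}$ and then the eigenvalue hypothesis $\eta_j^2\ge M\sqrt{\theta_j}$ of Assumption \ref{assumption:normK}, I would bound the amplification factor uniformly in $j$,
\[
\frac{1}{\theta_j}\Big(\smin(\hsig)+\la\frac{\eta_j^2}{\theta_j}\Big)^{-2} \;\le\; \frac{\theta_j}{\la^2\eta_j^4} \;\le\; \frac{1}{M^2\la^2},
\]
which converts the $\mbH$-bound on $r$ into $\|\Delta\|_{\mbK}\le(M\la)^{-1}\|r\|_{\mbH}\lesssim\laa\sqrt{I_0}/(\la b_N)$. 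This is $o_P(N^{-1/2})$ exactly under the ratio bound $\laa/\la\ll b_N/(\sqrt{N}\sqrt{I_0})$ of Assumption \ref{assumption:normK}, completing the strong oracle property in the $\mbK$ topology.
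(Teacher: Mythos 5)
Your proposal is correct and follows essentially the same route as the paper: part 1 is the paper's primal--dual witness argument (restricted estimator, nonzeroness on $\mcS$ via the Minimum/Maximum Signal conditions, strict dual feasibility off $\mcS$ via the irrepresentable condition, subgaussian tails, and the FSL weight bounds), and parts 2--3 reduce, exactly as in the paper, to bounding $\laa\|G_{I_0}\sbb\|$ in $\mbH$ and $\mbK$ using $\|G_{I_0}\|_{op}\le\tau$ and the eigenvalue condition $\eta_j^2\ge M\sqrt{\theta_j}$. Your coercivity pairing and common-eigenbasis diagonalization are just equivalent reformulations of the paper's resolvent and tensor-product eigenvalue computations.
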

%where $X_n=o_P(a_n^{-1})$ under $\norm{.}_{\mbH}$ means $\lim_{n \rightarrow \infty} P\left(\norm{X_n}_{\mbH} \geq a_n \epsilon \right)=0$ for any positive number $\epsilon$, with an analogous definition for $\mbK$.% There is a same definition for  $X_n=o_P(a_n)$ under $\norm{.}_{\mbK}$. Obviously because the $\normk$ is stronger than $\normh$, stronger assumptions need to be hold for it. Such convergence in RKHS is critical for estimating derivatives of the parameters.
%%%%%%%%%%%%%%%%%%%%%%%%%%%%%%%%%%%%%%
%%%%%%%%%%%%%%%%%%%%%%%%%%%%%%%%%%%%%%
%%%%%%%%%%%%%%%%%%%%%%%%%%%%%%%%%%%%%%
%%%%%%%%%%%%%%%%%%%%%%%%%%%%%%%%%%%%%%
%%%%%%%%%%%%%%%%%%%%%%%%%%%%%%%%%%%%%%
%%%%%%%%%%%%%%%%%%%%%%%%%%%%%%%%%%%%%%
%%%%%%%%%%%%%%%%%%%%%%%%%%%%%%%%%%%%%%
%%%%%%%%%%%%%%%%%%%%%%%%%%%%%%%%%%%%%%
\section{Implementation} \label{3:implementation}
Here we present a coordinate descent algorithm to find the parameter estimates efficiently. We employ functional subgradients to update the individual parameter estimates in each step. Subgradients extend derivatives to non necessarily differentiable convex functionals.
% Lets begin to define the concept of \textit{subgradient} and \textit{subifferential} for a convex functional and then show how we can use them to find the minimum value. 
%\textbf{Subgradient:} 
We call $h \in \mbH$ a subgradient of $f$ at $x_0 \in \mbH$ if
\begin{align} \label{intro:derivative:subg}
f(x) \geq f(x_0) + \langle h , x - x_0 \rangle_{\mbH} \quad \quad \forall x \in \mbH.
\end{align}
%\textbf{Subdifferential:} 
The collection of the all subgradients of $f$ at $x_0 \in \mbH$ is called the subdifferential of $f$ at $x_0$ and denoted by $\partial f(x_0)$.
% \begin{align}  \label{intro:derivative:subd}
% \partial f(x) = \left\{ \ The \ subgradients \ of \ f \ in \ x \ \right\}
% \end{align}
It is clear from \eqref{intro:derivative:subg} that if $0 \in \partial f(x_0)$, then $x_0$ is a minimizer of $f$. 
For more details and background we refer interested readers to  \citet{boyd2004convex,bauschke2011convex,barbu2012convexity,shor2012minimization}. We show in the supplemental material that the subgradient of the target function \eqref{AFSSEN} is
\begin{align*}
\frac{\partial L_{\lambda}(\beta)}{\partial \beta_{i}}=K(- N^{-1} \bX_{.i}^\top (\bY-\bX\boldb))+\la L^2(\bi) + \laa \tw_{i} \left\{
                \begin{array}{ll}
                K(\beta_{i}) \norm{\beta_{i}}_{\mbH}^{-1} & \bi \neq 0  \\
                  \\ \\
\{h ; \  \small{ \norm{\KKn(h)}_{\mbK} \leq 1} \}  & \bi =  0, \\
                \end{array}
              \right.
\end{align*}

where $\bX_{.i}^\top=\left( X_{1i},\dots,X_{Ni} \right) \in \mbR^{N}$ is the $i^{th}$ column of the design matrix $\bX$. Then we can conclude the following useful lemma.
% With using lemma \eqref{l:subdiff} we can present the following lemma for providing the algorithm.
\begin{lemma} \label{l:hbeta}
The AFSSEN estimate satisfies the equations
\begin{align*}
\left\{
\begin{array}{ll}
    \hbi=0 \qquad & \quad  \norm{\chb_{i}}_{\mbH} \leq \laa \tw_{i}\\
    \hbi=\left((1+\dfrac{\laa \tw_{i}}{\|\hat{\beta}_i\|_{\mbH}})\mathds{I} 
    + \la K^{-1} L^2 \right)^{-1} \chb_{i} \qquad & \quad \| \chb_{i}\|_{\mbH} > \laa \tw_{i} \\
\end{array}
\right.
\end{align*}
% \todo{your $\mathds{I}$ still isn't defined anywhere.}
where $\mathds{I}$ is the identity operator from $\mbH$ to $\mbH$ and $\chb_{i} = N^{-1} {\sum\limits}_{n=1}^{N} \ X_{ni} (Y_{n} - {\sum\limits}_{j \neq i} \ X_{nj} \hat \beta_{j}))$.
\end{lemma}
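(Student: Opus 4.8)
The plan is to use the first-order optimality characterization for convex functionals: since $L_\lambda$ is convex and $\hboldb$ is its minimizer, $\hboldb$ is characterized by $0 \in \partial L_\lambda(\hboldb)$, which can be checked one coordinate at a time via the subdifferential formula displayed above. So I would fix an index $i$, evaluate the $i$-th coordinate subgradient at $\hboldb$, and impose that $0$ lie in it, then read off the two cases. Before splitting into cases, I would first rewrite the data-fidelity piece. Isolating the $i$-th coordinate in the fitted values and using that the design columns are standardized so that $N^{-1}\sum_n X_{ni}^2 = 1$, one gets $N^{-1}\bX_{.i}^\top(\bY - \bX\hboldb) = \chb_i - \hbi$, which is exactly what turns the stationarity equation into one phrased through the partial-residual correlation $\chb_i$.

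For the branch $\hbi \neq 0$, the $\mbH$-norm penalty is differentiable with subgradient $K(\hbi)\|\hbi\|_{\mbH}^{-1}$, so setting the whole $i$-th subgradient to zero yields
\begin{align*}
-K(\chb_i) + K(\hbi) + \la L^2(\hbi) + \laa \tw_i \|\hbi\|_{\mbH}^{-1} K(\hbi) = 0.
\end{align*}
Collecting the $K(\hbi)$ terms and applying $K^{-1}$ gives $\big[(1 + \laa \tw_i\|\hbi\|_{\mbH}^{-1})\mathds{I} + \la K^{-1}L^2\big]\hbi = \chb_i$, and inverting the bracketed operator produces the stated closed form. Here I would note that the operator in brackets is invertible: when $L$ and $K$ share eigenfunctions, $K^{-1}L^2$ is diagonal in the basis $\{v_i\}$ with nonnegative eigenvalues $\eta_i^2/\theta_i$, so the bracketed operator has eigenvalues bounded below by $1 + \laa\tw_i\|\hbi\|_{\mbH}^{-1} > 1$ and is therefore boundedly invertible on the relevant subspace.

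For the branch $\hbi = 0$, the penalty term $\la L^2(\hbi)$ vanishes and the subgradient of $\|\cdot\|_{\mbH}$ at $0$ is the set $\{h : \|\KKn(h)\|_{\mbK} \leq 1\}$, so optimality requires some such $h$ with $-K(\chb_i) + \laa \tw_i h = 0$, i.e. $h = K(\chb_i)/(\laa\tw_i)$. Substituting into the constraint $\|\KKn(h)\|_{\mbK}\leq 1$ and using $\KKn K = \KK$ reduces it to $\|\KK(\chb_i)\|_{\mbK} \leq \laa\tw_i$. The concluding step is the isometry $\|\KK(x)\|_{\mbK} = \|x\|_{\mbH}$, which follows immediately from the spectral definitions: $\langle \KK(x), v_j\rangle_{\mbH} = \sqrt{\theta_j}\langle x, v_j\rangle_{\mbH}$, so $\|\KK(x)\|_{\mbK}^2 = \sum_j \theta_j\langle x, v_j\rangle_{\mbH}^2/\theta_j = \|x\|_{\mbH}^2$. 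This converts the threshold into $\|\chb_i\|_{\mbH} \leq \laa\tw_i$, matching the statement.

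I expect the main obstacle to be the functional-analytic bookkeeping rather than the algebra: one must verify that $K^{-1}$ can legitimately be applied (that $\chb_i$ and $L^2(\hbi)$ lie in the range where $K^{-1}L^2$ is well-defined, using $\hboldb \in \mbK^I$ and the spectral structure) and that the two branches agree exactly at the threshold $\|\chb_i\|_{\mbH} = \laa\tw_i$, so the characterization is consistent. By contrast, the null-case isometry is the cleanest ingredient and the standardization reduction is routine; the delicate part is ensuring the operator inversion in the active branch is rigorous in the infinite-dimensional setting.
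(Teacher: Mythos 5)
Your proposal is correct and follows essentially the same route as the paper's proof: coordinate-wise stationarity via the subgradient of the target function (the paper's Lemma \ref{l:subdiff}), the case $\hbi=0$ handled by solving for $h$ and converting the constraint $\|K^{-\nicefrac{1}{2}}(h)\|_{\mbK}\leq 1$ into $\|\chb_i\|_{\mbH}\leq\laa\tw_i$ via the isometry $\|K^{\nicefrac{1}{2}}(x)\|_{\mbK}=\|x\|_{\mbH}$, and the case $\hbi\neq 0$ handled by collecting the $K(\hbi)$ terms and applying $K^{-1}$. The only differences are that you make explicit two steps the paper leaves implicit, namely the standardization identity $N^{-1}\bX_{.i}^\top(\bY-\bX\hboldb)=\chb_i-\hbi$ and the invertibility of the bracketed operator.
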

%%%%%%%%%%%%%%%%%%%%%%%%
%%%%%%%%%%%%%%%%%%%%%%%%
%%%%%%%%%%%%%%%%%%%%%%%%
%%%%%%%%%%%%%%%%%%%%%%%%
The only challenge in using the Lemma \ref{l:hbeta} is the presence of $ \norm{\hat{\beta}_i}_{\mbH}$
% in the right hand side of the below case. 
which the following Lemma \ref{l:num} can help us to derive an estimation for it.

\begin{lemma} \label{l:num}
The $\norm{\hat{\beta}_i}_{\mbH}$ in Lemma \ref{l:hbeta} can be solved numerically by
\begin{align*}
1 = \sum_{j=1}^{\infty} \frac{ \langle \chb_{i} , v_{j} \rangle^2 }{\left( (1 +\la\eta_j^2 \theta_j^{-1} )\norm{\hat{\beta}_i}_{\mbH} + \laa \tw_{i}\right)^2},
\end{align*}
\end{lemma}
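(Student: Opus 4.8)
The plan is to treat Lemma~\ref{l:num} as a direct consequence of the closed form for $\hbi$ established in Lemma~\ref{l:hbeta} in the active case $\|\chb_i\|_{\mbH} > \laa \tw_i$. Writing $\gamma := \norm{\hat{\beta}_i}_{\mbH}$ for the unknown scalar, that closed form reads $\hbi = A^{-1}\chb_i$ with
\begin{align*}
A = \left(1 + \frac{\laa \tw_i}{\gamma}\right)\mathds{I} + \la K^{-1} L^2.
\end{align*}
The idea is simply to take the $\mbH$-norm of both sides and exploit the fact that $\gamma$ also appears on the left, turning the identity into a single scalar self-consistency equation for $\gamma$.

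First I would diagonalize $A$. Because $L$ is assumed self-adjoint and nonnegative definite with the same eigenfunctions $\{v_j\}$ as $K$, the operators $K^{-1}$ and $L^2$ are simultaneously diagonal in that orthonormal basis, with eigenvalues $\theta_j^{-1}$ and $\eta_j^2$ respectively. Hence $A$ is diagonal with eigenvalues $a_j = 1 + \la \eta_j^2 \theta_j^{-1} + \laa \tw_i \gamma^{-1} \geq 1$, so $A^{-1}$ is bounded and $\hbi = \sum_j a_j^{-1}\langle \chb_i, v_j\rangle_{\mbH} v_j$. Parseval in $\mbH$ then gives $\gamma^2 = \|\hbi\|_{\mbH}^2 = \sum_j a_j^{-2}\langle \chb_i, v_j\rangle_{\mbH}^2$. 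The final step is purely algebraic: multiplying the $j$-th eigenvalue through by $\gamma$ rewrites $a_j \gamma = (1 + \la \eta_j^2 \theta_j^{-1})\gamma + \laa \tw_i$, so that $a_j^{-2} = \gamma^2\left[(1 + \la \eta_j^2 \theta_j^{-1})\gamma + \laa \tw_i\right]^{-2}$; substituting this and cancelling the common factor $\gamma^2$ (legitimate since $\gamma > 0$ in the active case) yields exactly the stated equation with $\gamma = \norm{\hat{\beta}_i}_{\mbH}$.

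The computation itself is routine; the only points requiring care are structural. I would first confirm that $A^{-1}$ is well defined: although $A$ is unbounded (its eigenvalues diverge through the $\la \eta_j^2 \theta_j^{-1}$ term), every $a_j \geq 1$, so $A^{-1}$ has eigenvalues in $(0,1]$, is bounded on all of $\mbH$, and the defining series for $\hbi$ converges. I would also flag that the result is not an explicit formula but a fixed-point relation, since $\gamma$ sits inside each denominator through the $\laa \tw_i$ term as well as standing alone as the multiplier of $(1 + \la \eta_j^2 \theta_j^{-1})$; this is precisely why the lemma states that $\norm{\hat{\beta}_i}_{\mbH}$ must be solved \emph{numerically}. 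The main obstacle, such as it is, is therefore bookkeeping: invoking the simultaneous-diagonalization hypothesis on $L$ and $K$ correctly and justifying the cancellation of $\gamma^2$ by the case assumption $\hbi \neq 0$.
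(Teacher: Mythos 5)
Your proposal is correct and follows essentially the same route as the paper: take the $\mbH$-norm of the closed form from Lemma \ref{l:hbeta}, expand in the common eigenbasis $\{v_j\}$ of $K$ and $L$ via Parseval, and clear $\gamma=\norm{\hat\beta_i}_{\mbH}$ from the eigenvalue denominators to obtain the scalar self-consistency equation. Your added remarks on the boundedness of $A^{-1}$ and the legitimacy of cancelling $\gamma^2$ are sound refinements of the paper's more terse computation.
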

where the $\eta_j$ and $\theta_j$ are the eigenvalue of the $L$ and $K$ operators respectively for common eigenfunction $v_i$.
Now one can run the coordinate descend algorithm iteratively and obtain a sequence $\hboldb^{(t)}$ from the estimated parameters which converges to desired $\hboldb$ asymptotically. 

In practice, we follow an approach similar to the one outlined in FLAME \citep{parodi2017functional}.  We run our algorithm in \textit{nonadaptive} and \textit{adaptive} steps. First, in the  nonadaptive step, we set all $\tw_i=1$ and find the estimated $\hat{\beta}_j^{ndp}$. Then for adaptive step, we set $\tw_i=\nicefrac{1}{\norm{\hat{\beta}_j^{ndp}}_{\mbK}}$. 
To choose penalty parameters,
% The $\la$ and $\laa$ change on a grid to achieves the best prediction. 
we select $\la$ from $\{10,1,0.01,0.0001,0\}$ and $\laa$ from 100 points between $\lambda_{max}$ to $r_{\lambda}\lambda_{max}$ where $\lambda_{max}$ is the smallest tuning parameter such that all parameters are set to zero, while $r_{\lambda}$ is a specified ratio. % should be selected some how the number of significant predictors selected by $\laa=r_{\lambda}\lambda_{max}$ is larger than 
In order to increase the computational efficiency, for any fixed $\la$, we start with $\laa=\lambda_{max}$ and initial $\boldsymbol{\beta}=0$. Then we decrease the $\laa$ and in each step, using a \textit{warm start} which means the previous estimated $\boldsymbol{\hat{\beta}}$ is used as the initial value of $\boldsymbol{\beta}$.
We employ a \textit{kill switch} variable, where this iterative process is stopped once if the number of active predictors exceeds a chosen threshold (since one is search for spase solutions). Small changes in $\laa$ combined with a warm start imply a very quick convergence of $\boldsymbol{\hat{\beta}}$ in each step.
It is also more efficient to define a maximum number of iterations $T$ for $t$ and a \textit{threshold} as stopping criteria on the improvement parameter estimation 
$\norm{\boldsymbol{\hat{\beta}}^{(t)} - \boldsymbol{\hat{\beta}}^{(t-1)}}_{\mbH^{I}}$.
We use a 10-fold cross validation to find the optimum values of $\la$ and $\laa$. Finally, we run 100 iterations for each setting to find the average of prediction error
$\left(\sum_{n=1}^{N} \norm{\bX_{n.}^{\top} \boldsymbol{\hat{\beta}} - \bX_{n.}^{\top} \boldsymbol{{\beta}}^{\star} }_{\mbH} \right)$, prediction error derivatives 
$\left(\sum_{n=1}^{N} \norm{\bX_{n.}^{\top} \boldsymbol{\hat{\beta}^{'}}- \bX_{n.}^{\top} {\boldsymbol{\beta}^{\star}}^{'}}_{\mbH}\right)$ and the number of true and false positive predictors. 
%%%%%%%%%%%%%%%%%%%%%%%%%%%%%%%%%%%%%%
%%%%%%%%%%%%%%%%%%%%%%%%%%%%%%%%%%%%%%
%%%%%%%%%%%%%%%%%%%%%%%%%%%%%%%%%%%%%%
%%%%%%%%%%%%%%%%%%%%%%%%%%%%%%%%%%%%%%
%%%%%%%%%%%%%%%%%%%%%%%%%%%%%%%%%%%%%%
%%%%%%%%%%%%%%%%%%%%%%%%%%%%%%%%%%%%%%
%%%%%%%%%%%%%%%%%%%%%%%%%%%%%%%%%%%%%%
%%%%%%%%%%%%%%%%%%%%%%%%%%%%%%%%%%%%%%
%%%%%%%%%%%%%%%%%%%%%%%%%%%%%%%%%%%%%%
%%%%%%%%%%%%%%%%%%%%%%%%%%%%%%%%%%%%%%
%%%%%%%%%%%%%%%%%%%%%%%%%%%%%%%%%%%%%%
\section{ٍEmpirical Study} \label{3:ems}
% \todo{MATT is HERE}
In this section, we compare the performance of AFSSEN with FLAME in two high-dimensional simulation settings, one with rougher and and one with smoother $\beta^{\star}$ coefficients. 
Mimicking FLAME, we generate $N=500$ functional observations from $\mbH = L^2[0,1]$ and $I = 1000$ scalar predictors, with $I_0 = 10$ significant. The design matrix ${\mathbf X}$ is generated %the standardized version of a 500-dimension Gaussian distribution with zero average and identity covariance $C=\mathds{I}_{500}$. 
using standard normal random variables. Observation errors $\varepsilon_n(t)$ are generated according to a 0-mean Matern process with parameters $(\nu = \nicefrac{3}{2}, \textrm{range} = \nicefrac{1}{4}, \sigma^2=1)$.

Here we consider the four different RKHS kernels, K, in \eqref{Kernels} with varying range parameters: $\{0.5,1,2,4,8,16,32\}$. Denote $\theta_1 \geq \theta_2 \geq \dots \geq 0$ and $v_1, v_2, \dots \in \mbH$ as the ordered eigenvalues and their corresponding eigenfunctions of $K$ and use the eigenfunctions, computed numerically on a grid of $m=50$ evenly spaced points between 0 and 1, as an orthonormal basis of $\mbH$. They allow us to compute the $\norm{.}_{\mbH}$ and $\norm{.}_{\mbK}$ quickly. In order to have more computational efficiency, we take the number of FPCs that explain more than $99\%$ of the variability in FPCA, which means $\sum_{i=1}^{M} \theta_i \geq 0.99 \sum_{i=1}^{\infty} \theta_i$.
% \begin{align*}
% \sum_{i=1}^{M} \theta_i \geq 0.99 \sum_{i=1}^{\infty} \theta_i
% \end{align*}
% A coordinate descent algorithm with backend in C++ is used to find the significant predictors. 
We also considered $r_{\lambda}=10^{-6}$ and a $0.001$ threshold as the stopping criteria for the coefficient increments ($\norm{ \boldsymbol{\hat{\beta}}^{(T)} - \boldsymbol{\hat{\beta}}^{(T-1)}}_{\mbH^I} \leq 0.001$) and a kill switch $2I_0=20$ for maximum number of non zero predictors before we stop decreasing $\laa$. 

\subsection{Rough Setting}
In this scenario, The true coefficients $\beta_i^{\star}(t)$ are sampled from a Matern process with 0 average and parameters $(\nu = \nicefrac{5}{2}, \textrm{range}  = \nicefrac{1}{4}, \sigma^2=1)$. The results are presented in Figure \ref{Rough_AFSSEN}.
Turning to average of prediction error and prediction error derivative, the AFSSEN performs 5 to 10 times better than FLAME. They also looks to be consistent in terms of range parameter for all kernels which is a significant improvement than FLAME. The behavior of AFSSEN in variable selection is not as much consistent and seems to work better for smaller range parameters and can beat FLAME in those cases. It seems the FLAME is the winner for larger range parameters. However, in all AFSSEN situations, the average of false positive number of predictors are still remain less than one which shows a quite small uncertainty. Lastly, the rougher kernel, i.e. exponential, in AFSSEN seems to be more efficient than the others for rough $\beta_i^{\star}$ predictors.
\begin{figure}
	\centering
	\includegraphics[width=10.8cm,height=6.7cm]{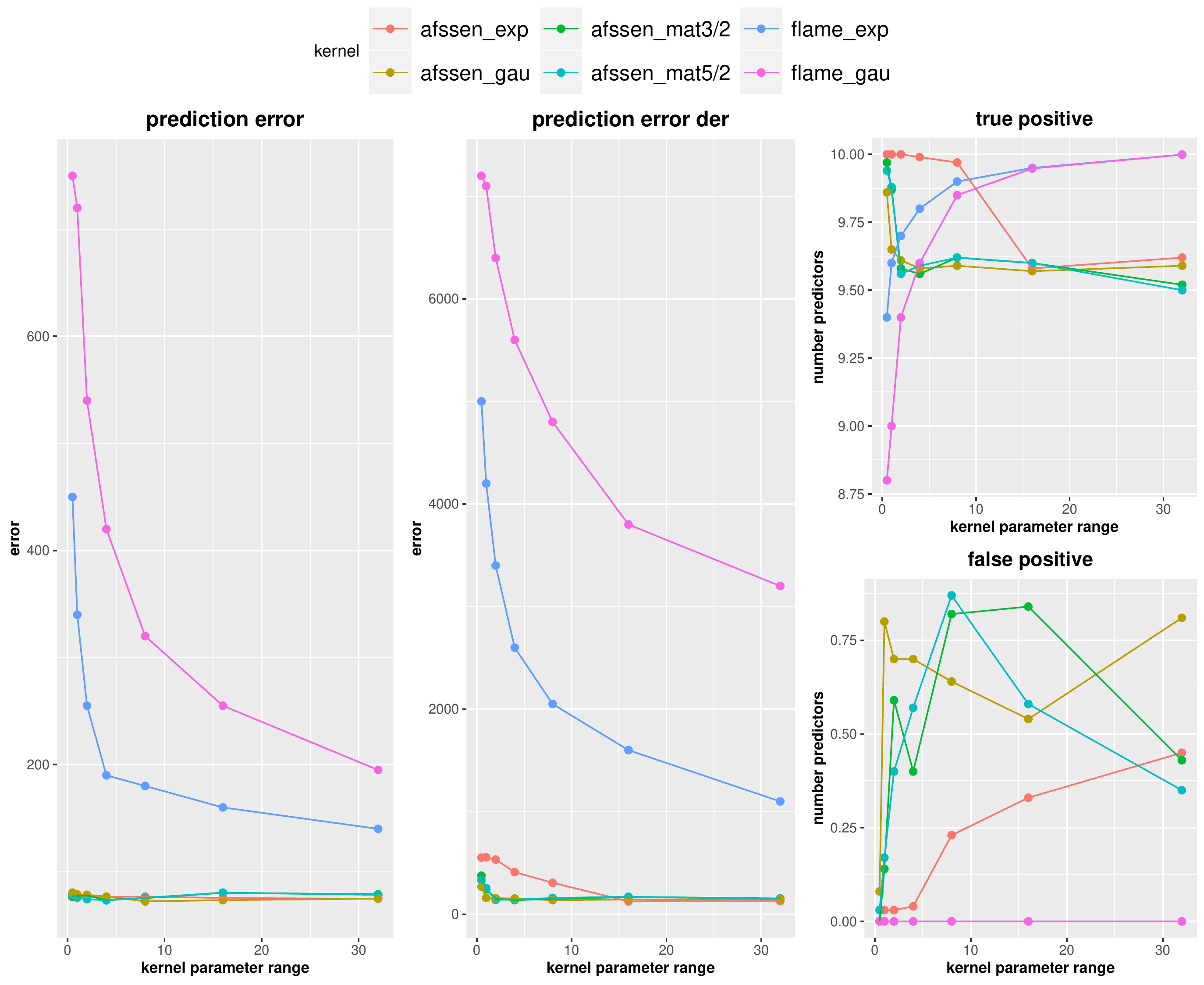}
	\caption{Summary of the simulations varying kernel for the rough case.}
% 	The prediction error, the prediction error on derivatives, and the number of true and false positive predictors.}
	\label{Rough_AFSSEN}
\end{figure}

\subsection{Smooth Setting}
For the smooth setting, we just generate the true coefficients from a Matern process with 0 average and parameters $(\nu=\nicefrac{7}{2},$range$=1,\sigma^2=1)$ and keep the other parameters same as rough setting. Figure \ref{Smooth_AFSSEN} illustrates the AFSSEN performs $50-200\%$ better than FLAME in prediction error and prediction derivative error. For false positive predictors, the AFSSEN beats FLAME but still cannot force them to be zero. In the number of true positive predictors, the AFSSEN works better for smaller range parameters but not for the larger ones. In the smooth setting, it seems using the smoother kernels implies the smaller prediction errors but performs worth in variable selections. A final remark is consistency of AFSSEN than FLAME in prediction errors and also number of true positive predictors. However in terms of false positive, FLAME is more consistent but has higher values than AFSSEN.
\begin{figure}
	\centering
	\includegraphics[
 	width=10.8cm,height=6.7cm]{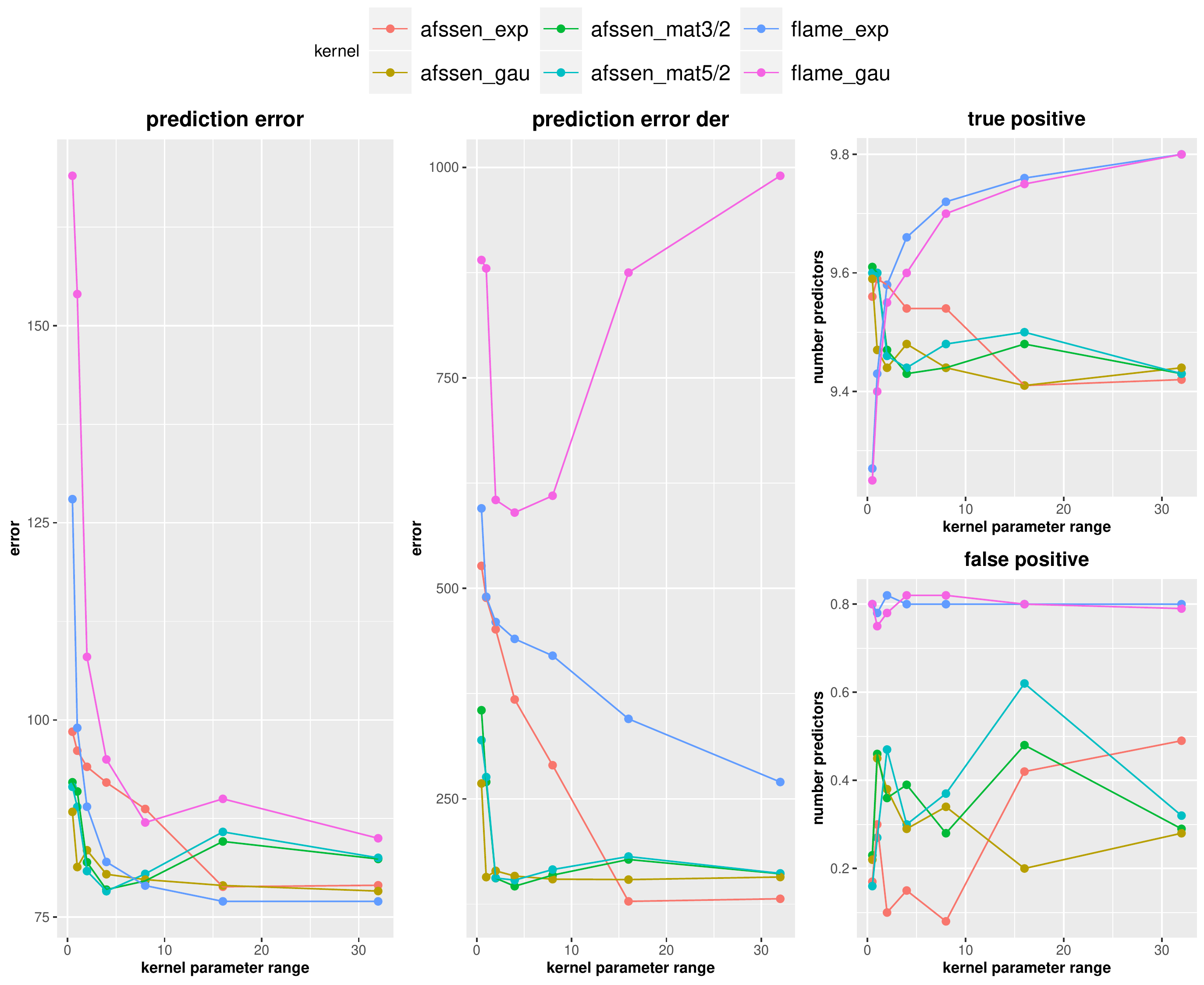}
	\caption{Summary of the simulations varying kernel for the smooth case. }
% 	The prediction error, the prediction error on derivatives, and the number of true and false positive predictors.}
\label{Smooth_AFSSEN}
\end{figure}  
%%%%%%%%%%%%%%%%%%%%%%%%%%%%%%%%%%%%%%
%%%%%%%%%%%%%%%%%%%%%%%%%%%%%%%%%%%%%%
%%%%%%%%%%%%%%%%%%%%%%%%%%%%%%%%%%%%%%
%%%%%%%%%%%%%%%%%%%%%%%%%%%%%%%%%%%%%%
%%%%%%%%%%%%%%%%%%%%%%%%%%%%%%%%%%%%%%
%%%%%%%%%%%%%%%%%%%%%%%%%%%%%%%%%%%%%%
%%%%%%%%%%%%%%%%%%%%%%%%%%%%%%%%%%%%%%
%%%%%%%%%%%%%%%%%%%%%%%%%%%%%%%%%%%%%%
%%%%%%%%%%%%%%%%%%%%%%%%%%%%%%%%%%%%%%
%%%%%%%%%%%%%%%%%%%%%%%%%%%%%%%%%%%%%%
%%%%%%%%%%%%%%%%%%%%%%%%%%%%%%%%%%%%%%
\section{Conclusion} \label{3:conclusion}
We have presented a method, called AFSSEN, which can control the dimension reduction and smoothness of the parameter estimates in a high-dimensional function-on-scalar linear model with a sub-Gaussian errors. In our work, the parameters live in a Reproducing Kernel Hilbert Space (RKHS), $\mbK$,
and inherit its properties, such as smoothness or periodicity. In our framework, the data is not enforced to lie in the RKHS. We showed under some non-strong assumptions, our parameter estimates and the true parameters have the same support and then illustrated the strong functional oracle property would be achieved under both norm $\mbH$ and norm $\mbK$. 
% Some lemmas were proved about how to apply coordinate descent algorithm to find the parameter estimates efficiently.
Using a simulation study, we depicted a hugely improvement on prediction error and prediction error derivative and consistency using AFSSEN than previous works. Additionally, in terms of true and false positive error, we showed AFSSEN beats FLAME in smooth coefficient parameters and have a highly reliable performance in the rough cases.
%%%%%%%%%%%%%%%%%%%%%%%%%%%%%%%%%%%%%%
%%%%%%%%%%%%%%%%%%%%%%%%%%%%%%%%%%%%%%
%%%%%%%%%%%%%%%%%%%%%%%%%%%%%%%%%%%%%%
%%%%%%%%%%%%%%%%%%%%%%%%%%%%%%%%%%%%%%
%%%%%%%%%%%%%%%%%%%%%%%%%%%%%%%%%%%%%%
%%%%%%%%%%%%%%%%%%%%%%%%%%%%%%%%%%%%%%
%%%%%%%%%%%%%%%%%%%%%%%%%%%%%%%%%%%%%%
\bibliographystyle{chicago}
\addcontentsline{toc}{chapter}{Bibliography}
\bibliography{AFSSEN}

\begin{thebibliography}{}

\bibitem[\protect\citeauthoryear{Algamal and Lee}{Algamal and
  Lee}{2015}]{algamal2015regularized}
Algamal, Z.~Y. and M.~H. Lee (2015).
\newblock Regularized logistic regression with adjusted adaptive elastic net
  for gene selection in high dimensional cancer classification.
\newblock {\em Computers in biology and medicine\/}~{\em 67}, 136--145.

\bibitem[\protect\citeauthoryear{Antonioni}{Antonioni}{1997}]{Antonioni1997sub}
Antonioni (1997).
\newblock Subgaussian random variable in hilbert spaces.

\bibitem[\protect\citeauthoryear{Barber, Reimherr, Schill, et~al.}{Barber
  et~al.}{2017}]{barber2017function}
Barber, R.~F., M.~Reimherr, T.~Schill, et~al. (2017).
\newblock The function-on-scalar lasso with applications to longitudinal gwas.
\newblock {\em Electronic Journal of Statistics\/}~{\em 11\/}(1), 1351--1389.

\bibitem[\protect\citeauthoryear{Barbu and Precupanu}{Barbu and
  Precupanu}{2012}]{barbu2012convexity}
Barbu, V. and T.~Precupanu (2012).
\newblock {\em Convexity and optimization in Banach spaces}.
\newblock Springer Science \& Business Media.

\bibitem[\protect\citeauthoryear{Bauschke and Combettes}{Bauschke and
  Combettes}{2011}]{bauschke2011convex}
Bauschke, H.~H. and P.~L. Combettes (2011).
\newblock {\em Convex analysis and monotone operator theory in Hilbert spaces}.
\newblock Springer Science \& Business Media.

\bibitem[\protect\citeauthoryear{Bawa}{Bawa}{2005}]{bawa2005spline}
Bawa, R.~K. (2005).
\newblock Spline based computational technique for linear singularly perturbed
  boundary value problems.
\newblock {\em Applied mathematics and computation\/}~{\em 167\/}(1), 225--236.

\bibitem[\protect\citeauthoryear{Berlinet and Thomas-Agnan}{Berlinet and
  Thomas-Agnan}{2011}]{berlinet2011reproducing}
Berlinet, A. and C.~Thomas-Agnan (2011).
\newblock {\em Reproducing kernel Hilbert spaces in probability and
  statistics}.
\newblock Springer Science \& Business Media.

\bibitem[\protect\citeauthoryear{Bierut, Madden, Breslau, Johnson, Hatsukami,
  Pomerleau, Swan, Rutter, Bertelsen, Fox, et~al.}{Bierut
  et~al.}{2006}]{bierut2006novel}
Bierut, L.~J., P.~A. Madden, N.~Breslau, E.~O. Johnson, D.~Hatsukami, O.~F.
  Pomerleau, G.~E. Swan, J.~Rutter, S.~Bertelsen, L.~Fox, et~al. (2006).
\newblock Novel genes identified in a high-density genome wide association
  study for nicotine dependence.
\newblock {\em Human molecular genetics\/}~{\em 16\/}(1), 24--35.

\bibitem[\protect\citeauthoryear{Bogachev}{Bogachev}{1998}]{bogachev1998gaussian}
Bogachev, V.~I. (1998).
\newblock {\em Gaussian measures}.
\newblock Number~62. American Mathematical Soc.

\bibitem[\protect\citeauthoryear{Boyd and Vandenberghe}{Boyd and
  Vandenberghe}{2004}]{boyd2004convex}
Boyd, S. and L.~Vandenberghe (2004).
\newblock {\em Convex optimization}.
\newblock Cambridge university press.

\bibitem[\protect\citeauthoryear{Buldygin and Kozachenko}{Buldygin and
  Kozachenko}{1980}]{buldygin1980sub}
Buldygin, V.~V. and Y.~V. Kozachenko (1980).
\newblock Sub-gaussian random variables.
\newblock {\em Ukrainian Mathematical Journal\/}~{\em 32\/}(6), 483--489.

\bibitem[\protect\citeauthoryear{Chen, Goldsmith, and Ogden}{Chen
  et~al.}{2016}]{chen2016variable}
Chen, Y., J.~Goldsmith, and R.~T. Ogden (2016).
\newblock Variable selection in function-on-scalar regression.
\newblock {\em Stat\/}.

\bibitem[\protect\citeauthoryear{Craig, Blankenberg, Parodi, Paul, Birch,
  Savage, Marini, Stokes, Nekrutenko, Reimherr, et~al.}{Craig
  et~al.}{2018}]{craig2018child}
Craig, S.~J., D.~Blankenberg, A.~C.~L. Parodi, I.~M. Paul, L.~L. Birch, J.~S.
  Savage, M.~E. Marini, J.~L. Stokes, A.~Nekrutenko, M.~Reimherr, et~al.
  (2018).
\newblock Child weight gain trajectories linked to oral microbiota composition.
\newblock {\em Scientific reports\/}~{\em 8\/}(1), 14030.

\bibitem[\protect\citeauthoryear{Dunford and Schwartz}{Dunford and
  Schwartz}{1963}]{dunford1963linear}
Dunford, N. and J.~T. Schwartz (1963).
\newblock {\em Linear operators: Part II: Spectral Theory: Self Adjoint
  Operators in Hilbert Space}.
\newblock Interscience Publishers.

\bibitem[\protect\citeauthoryear{Eddelbuettel and Fran\c{c}ois}{Eddelbuettel
  and Fran\c{c}ois}{2011}]{Rcpp}
Eddelbuettel, D. and R.~Fran\c{c}ois (2011).
\newblock {Rcpp}: Seamless {R} and {C++} integration.
\newblock {\em Journal of Statistical Software\/}~{\em 40\/}(8), 1--18.

\bibitem[\protect\citeauthoryear{Fan, James, Radchenko, et~al.}{Fan
  et~al.}{2015}]{fan2015functional}
Fan, Y., G.~M. James, P.~Radchenko, et~al. (2015).
\newblock Functional additive regression.
\newblock {\em The Annals of Statistics\/}~{\em 43\/}(5), 2296--2325.

\bibitem[\protect\citeauthoryear{Fan and Reimherr}{Fan and
  Reimherr}{2017}]{fan2017high}
Fan, Z. and M.~Reimherr (2017).
\newblock High-dimensional adaptive function-on-scalar regression.
\newblock {\em Econometrics and Statistics\/}~{\em 1}, 167--183.

\bibitem[\protect\citeauthoryear{Gertheiss, Maity, and Staicu}{Gertheiss
  et~al.}{2013}]{gertheiss2013variable}
Gertheiss, J., A.~Maity, and A.-M. Staicu (2013).
\newblock Variable selection in generalized functional linear models.
\newblock {\em Stat\/}~{\em 2\/}(1), 86--101.

\bibitem[\protect\citeauthoryear{Hebiri, Van De~Geer, et~al.}{Hebiri
  et~al.}{2011}]{hebiri2011smooth}
Hebiri, M., S.~Van De~Geer, et~al. (2011).
\newblock The smooth-lasso and other ℓ1+ ℓ2-penalized methods.
\newblock {\em Electronic Journal of Statistics\/}~{\em 5}, 1184--1226.

\bibitem[\protect\citeauthoryear{Hsing and Eubank}{Hsing and
  Eubank}{2015}]{hsing2015theoretical}
Hsing, T. and R.~Eubank (2015).
\newblock {\em Theoretical foundations of functional data analysis, with an
  introduction to linear operators}.
\newblock John Wiley \& Sons.

\bibitem[\protect\citeauthoryear{Hsu, Kakade, Zhang, et~al.}{Hsu
  et~al.}{2012}]{hsu2012tail}
Hsu, D., S.~M. Kakade, T.~Zhang, et~al. (2012).
\newblock A tail inequality for quadratic forms of subgaussian random vectors.
\newblock {\em Electron. Commun. Probab\/}~{\em 17\/}(52), 1--6.

\bibitem[\protect\citeauthoryear{Huang, Ma, and Zhang}{Huang
  et~al.}{2008}]{huang2008adaptive}
Huang, J., S.~Ma, and C.-H. Zhang (2008).
\newblock Adaptive lasso for sparse high-dimensional regression models.
\newblock {\em Statistica Sinica\/}, 1603--1618.

\bibitem[\protect\citeauthoryear{Kokoszka and Reimherr}{Kokoszka and
  Reimherr}{2017}]{Kokoszka2017Introduction}
Kokoszka, P. and M.~Reimherr (2017).
\newblock {\em Introduction to Functional Data Analysis}.

\bibitem[\protect\citeauthoryear{Lian}{Lian}{2013}]{lian2013shrinkage}
Lian, H. (2013).
\newblock Shrinkage estimation and selection for multiple functional
  regression.
\newblock {\em Statistica Sinica\/}, 51--74.

\bibitem[\protect\citeauthoryear{Matsui and Konishi}{Matsui and
  Konishi}{2011}]{matsui2011variable}
Matsui, H. and S.~Konishi (2011).
\newblock Variable selection for functional regression models via the l1
  regularization.
\newblock {\em Computational Statistics \& Data Analysis\/}~{\em 55\/}(12),
  3304--3310.

\bibitem[\protect\citeauthoryear{Morris}{Morris}{2015}]{morris2015functional}
Morris, J.~S. (2015).
\newblock Functional regression.
\newblock {\em Annual Review of Statistics and Its Application\/}~{\em 2},
  321--359.

\bibitem[\protect\citeauthoryear{Parodi and Reimherr}{Parodi and
  Reimherr}{2017}]{parodi2017functional}
Parodi, A. and M.~Reimherr (2017).
\newblock {\em FLAME: Simultaneous variable selection and smoothing for high
  dimensional function-on-scalar regression}.
\newblock Ph.\ D. thesis.

\bibitem[\protect\citeauthoryear{{R Core Team}}{{R Core Team}}{2018}]{R}
{R Core Team} (2018).
\newblock {\em R: A Language and Environment for Statistical Computing}.
\newblock Vienna, Austria: R Foundation for Statistical Computing.

\bibitem[\protect\citeauthoryear{Ramsay and Silverman}{Ramsay and
  Silverman}{2007}]{ramsay2007applied}
Ramsay, J.~O. and B.~W. Silverman (2007).
\newblock {\em Applied functional data analysis: methods and case studies}.
\newblock Springer.

\bibitem[\protect\citeauthoryear{Reiss, Huang, and Mennes}{Reiss
  et~al.}{2010}]{reiss2010fast}
Reiss, P.~T., L.~Huang, and M.~Mennes (2010).
\newblock Fast function-on-scalar regression with penalized basis expansions.
\newblock {\em International Journal of Biostatistics\/}~{\em 6\/}(1).

\bibitem[\protect\citeauthoryear{Repapi, Sayers, Wain, Burton, Johnson,
  Obeidat, Zhao, Ramasamy, Zhai, Vitart, et~al.}{Repapi
  et~al.}{2010}]{repapi2010genome}
Repapi, E., I.~Sayers, L.~V. Wain, P.~R. Burton, T.~Johnson, M.~Obeidat, J.~H.
  Zhao, A.~Ramasamy, G.~Zhai, V.~Vitart, et~al. (2010).
\newblock Genome-wide association study identifies five loci associated with
  lung function.
\newblock {\em Nature genetics\/}~{\em 42\/}(1), 36.

\bibitem[\protect\citeauthoryear{Scott, Mohlke, Bonnycastle, Willer, Li, Duren,
  Erdos, Stringham, Chines, Jackson, et~al.}{Scott
  et~al.}{2007}]{scott2007genome}
Scott, L.~J., K.~L. Mohlke, L.~L. Bonnycastle, C.~J. Willer, Y.~Li, W.~L.
  Duren, M.~R. Erdos, H.~M. Stringham, P.~S. Chines, A.~U. Jackson, et~al.
  (2007).
\newblock A genome-wide association study of type 2 diabetes in finns detects
  multiple susceptibility variants.
\newblock {\em science\/}~{\em 316\/}(5829), 1341--1345.

\bibitem[\protect\citeauthoryear{Shor}{Shor}{2012}]{shor2012minimization}
Shor, N.~Z. (2012).
\newblock {\em Minimization methods for non-differentiable functions},
  Volume~3.
\newblock Springer Science \& Business Media.

\bibitem[\protect\citeauthoryear{Stein}{Stein}{2012}]{stein2012interpolation}
Stein, M.~L. (2012).
\newblock {\em Interpolation of spatial data: some theory for kriging}.
\newblock Springer Science \& Business Media.

\bibitem[\protect\citeauthoryear{Yuan, Cai, et~al.}{Yuan
  et~al.}{2010}]{yuan2010reproducing}
Yuan, M., T.~T. Cai, et~al. (2010).
\newblock A reproducing kernel hilbert space approach to functional linear
  regression.
\newblock {\em The Annals of Statistics\/}~{\em 38\/}(6), 3412--3444.

\bibitem[\protect\citeauthoryear{Zhang et~al.}{Zhang
  et~al.}{2010}]{zhang2010nearly}
Zhang, C.-H. et~al. (2010).
\newblock Nearly unbiased variable selection under minimax concave penalty.
\newblock {\em The Annals of statistics\/}~{\em 38\/}(2), 894--942.

\bibitem[\protect\citeauthoryear{Zhao and Yu}{Zhao and
  Yu}{2006}]{zhao2006model}
Zhao, P. and B.~Yu (2006).
\newblock On model selection consistency of lasso.
\newblock {\em Journal of Machine learning research\/}~{\em 7\/}(Nov),
  2541--2563.

\bibitem[\protect\citeauthoryear{Zou}{Zou}{2006}]{zou2006adaptive}
Zou, H. (2006).
\newblock The adaptive lasso and its oracle properties.
\newblock {\em Journal of the American statistical association\/}~{\em
  101\/}(476), 1418--1429.

\bibitem[\protect\citeauthoryear{Zou and Hastie}{Zou and
  Hastie}{2005}]{zou2005regularization}
Zou, H. and T.~Hastie (2005).
\newblock Regularization and variable selection via the elastic net.
\newblock {\em Journal of the Royal Statistical Society: Series B (Statistical
  Methodology)\/}~{\em 67\/}(2), 301--320.

\end{thebibliography}

\newpage

\section*{Supplementary Material}
In this section, we provide the proof of lemmas and theorems discussed in the main content.
We start defining some notations which are necessary throughout this section.
% Throughout this section we 
% use the following notations.
\begin{definition} \label{d:notation}
Let $K: \mbH \to \mbH$ be an operator.  We define the coordinate wise extension, $K_M$ from $\mbH^M$ to $\mbH^M$ as
\begin{align*}
K_{M}(\bh) := 
%  \begin{bmatrix}
%     K & & \\
%     & \ddots & \\
%     & & K
%   \end{bmatrix} \bh = 
\left(K(h_1),\dots,K(h_M)\right) \in \mbH^{M} \qquad \text{where } \bh=(h_1,\dots,h_M) \in \mbH^{M}.
\end{align*}
We define $L_M: \mbK^M \to \mbH^M$ analogously
%We also consider $L$ as a self-adjoint linear operator from $\mbK$ to $\mbK$ with eigenvalues $\eta_1 > \eta_2 > \dots \geq 0$ and the same eigenfunction as K. Similarly, for any $M \geq 2$, we define
%\begin{align*}
%L_{M}(\textbf{y}) = (L(y_1),\dots,L(y_M)) \in \mbK^{M} \hspace{20mm} \forall \textbf{y}=(y_1,\dots,y_M) \in \mbK^{M}
%\end{align*}
\end{definition}
\begin{definition} \label{d:notationT}
Let $\Sigma \in \mbR^{M \times M}$ be a matrix.  Then, using an abuse of notation, we define the linear operation $\Sigma: \mbH^M \to \mbH^M$ as
\[
\Sigma \bh := \left\{\sum_{j=1}^M \Sigma_{1j}h_j, \dots,\sum_{j=1}^M \Sigma_{Mj}h_j \right\}.
\]
%Let's denote $T$ as a linear operator from $\mbH$ and $\mbH$ and $T_M$ as an extension of $T$ on $\mbH^M$ as follows
%\begin{align*}
%T_{M}(\textbf{x}) = (T(x_1),\dots,T(x_M)) \in \mbH^{M} \hspace{20mm} \forall \textbf{x}=(x_1,\dots,x_M) \in \mbH^{M}
%\end{align*}
%Then we define $\Sigma T_M$, a product between an scalar matrix $\Sigma \in \mbR^{M \times M}$ and a linear operator $T_M : \mbH^{M} \rightarrow \mbH^{M}$, as follows
%\begin{align*}
%\Sigma T_M(\textbf{x}) = T_M(\Sigma \textbf{x}) \in \mbH^{M} \hspace{30mm} \forall \textbf{x}=(x_1,\dots,x_M) \in \mbH^{M}
%\end{align*}
%Since an scalar matrix $\mbR^{M \times M}$ can be viewed as a linear transformation from $\mbR^{M}$ to $\mbR^M$ \citep{kokoszka2017introduction}, the definition is well-defined.
\end{definition}
%%%%%%%%%%%%%%%%%%%%%%%%
%%%%%%%%%%%%%%%%%%%%%%%%
Note that, as defined, the operators $\Sigma$ and $K_M$ are interganchable in the sense that $\Sigma K_M \bh = K_M \Sigma \bh$. The following lemma are needed for demonstrating the main theoretical properties.
%%%%%%%%%%%%%%%%%%%%%%%%%%%%%%%
\begin{lemma} \label{l:subgrad}
Let $\partial f(x)$ denote the subdifferential of a functional $f:\mbK \to \mbR$ at $x$.  Then we have the following.
\begin{enumerate}
\item Consider the functional $f(x)=\norm{x}_{\mbH}^2$. Then $f$ is convex and everywhere differrentiable with respect to $\norm{.}_{\mbK}$ with 
\begin{align*}
\partial f(x)=2K(x).
\end{align*}
\item Consider the functional $f(x)=\norm{x}_{\mbH}$. Then $f$ is convex and differrentiable with respect to $\norm{.}_{\mbK}$ when $x \neq 0$ with
\begin{align*}
\partial f(x)=K(x) \norm{x}_{\mbH}^{-1},
\end{align*}
and when $x=0$ with
\begin{align*}
\partial f(0)= \{ h \in \mbH ; \  \| K^{-\nicefrac{1}{2}} (h) \|_{\mbK} \leq 1 \}.
\end{align*}
\item Consider the functional $f(x)=\norm{L(x)}_{\mbK}^2$ where $L$ is a self-adjoint linear operator from $\mbK$ to $\mbK$. Then $f$ is convex and everywhere differrentiable with respect to $\norm{.}_{\mbK}$ with
\begin{align*}
\partial f(x)= 2L^2(x).
\end{align*}
\end{enumerate}
\end{lemma}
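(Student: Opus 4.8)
The plan is to reduce all three parts to a single first-order expansion in the $\mbK$-topology, using one bridging identity to translate between the $\mbH$- and $\mbK$-geometries. Since here $f:\mbK\to\mbR$ and differentiability is understood with respect to $\norm{.}_{\mbK}$, a subgradient is an element $g\in\mbK$ satisfying $f(y)\geq f(x)+\langle g,y-x\rangle_{\mbK}$ for all $y\in\mbK$; this $\mbK$-representation is precisely what produces the factors of $K$ in the stated formulas. The crucial preliminary observation is that for every $x,h\in\mbK$ one has $\langle K(x),h\rangle_{\mbK}=\langle x,h\rangle_{\mbH}$: expanding both elements in the eigenbasis $\{v_j\}$ and using $\langle K(x),v_j\rangle_{\mbH}=\theta_j\langle x,v_j\rangle_{\mbH}$, the factor $\theta_j$ cancels the $\theta_j^{-1}$ built into $\langle\cdot,\cdot\rangle_{\mbK}$. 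Together with the continuous embedding $\norm{h}_{\mbH}\leq\sqrt{\theta_1}\,\norm{h}_{\mbK}$, which is immediate from the spectral representation, this identity converts any $\mbH$-gradient into its $\mbK$-subgradient representative.

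For the first functional I would expand $\norm{x+h}_{\mbH}^2=\norm{x}_{\mbH}^2+2\langle x,h\rangle_{\mbH}+\norm{h}_{\mbH}^2$ and rewrite the linear term via the bridging identity as $\langle 2K(x),h\rangle_{\mbK}$. The remainder satisfies $\norm{h}_{\mbH}^2\leq\theta_1\norm{h}_{\mbK}^2=o(\norm{h}_{\mbK})$, which establishes Fréchet differentiability in $\mbK$, so the subdifferential is the singleton $\{2K(x)\}$; convexity is inherited since $f$ is the square of a norm. The third functional is handled identically: expanding $\norm{L(x+h)}_{\mbK}^2$ and using self-adjointness $\langle L(x),L(h)\rangle_{\mbK}=\langle L^2(x),h\rangle_{\mbK}$ gives the cross term $\langle 2L^2(x),h\rangle_{\mbK}$, with remainder $\norm{L(h)}_{\mbK}^2\leq\norm{L}_{op}^2\norm{h}_{\mbK}^2$, yielding $\partial f(x)=2L^2(x)$ everywhere.

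For the second functional away from the origin I would write $\norm{x+h}_{\mbH}=\norm{x}_{\mbH}\bigl(1+\norm{x}_{\mbH}^{-2}(2\langle x,h\rangle_{\mbH}+\norm{h}_{\mbH}^2)\bigr)^{1/2}$ and expand the square root, so the first-order term is $\langle x,h\rangle_{\mbH}\norm{x}_{\mbH}^{-1}$; the bridging identity rewrites this as $\langle K(x)\norm{x}_{\mbH}^{-1},h\rangle_{\mbK}$, giving the claimed gradient, while convexity follows from the triangle inequality for $\norm{.}_{\mbH}$. The substantive case is $x=0$, where $f$ is nondifferentiable and the entire subdifferential must be identified directly from the defining inequality, which for $x=0$ reduces (testing $y$ and $-y$) to $\langle g,y\rangle_{\mbK}\leq\norm{y}_{\mbH}$ for all $y\in\mbK$.

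The main obstacle is this last characterization. I would pass to spectral coordinates, setting $g_j=\langle g,v_j\rangle_{\mbH}$ and $y_j=\langle y,v_j\rangle_{\mbH}$, so the constraint reads $\sum_j\theta_j^{-1}g_jy_j\leq(\sum_j y_j^2)^{1/2}$. Taking the supremum of the left side over $\norm{y}_{\mbH}\leq1$ is a dual-norm computation evaluating to $(\sum_j\theta_j^{-2}g_j^2)^{1/2}$, so the subdifferential is exactly $\{g:\sum_j\theta_j^{-2}g_j^2\leq1\}$. I would then recognize this as $\norm{K^{-\nicefrac{1}{2}}(g)}_{\mbK}\leq1$, since $K^{-\nicefrac{1}{2}}(g)$ has $j$-th coefficient $\theta_j^{-\nicefrac{1}{2}}g_j$ and hence $\mbK$-norm squared $\sum_j\theta_j^{-1}(\theta_j^{-\nicefrac{1}{2}}g_j)^2=\sum_j\theta_j^{-2}g_j^2$. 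The delicate point is that the exact dual optimizer $y_j\propto\theta_j^{-1}g_j$ need not lie in $\mbK$ because $\theta_j\to0$; I would therefore justify the supremum through finite-coordinate truncations $y^{(N)}$, each a finite combination of the $v_j$ and hence in $\mbK$, whose normalized values approach $(\sum_j\theta_j^{-2}g_j^2)^{1/2}$, so the characterization is valid over $\mbK$ rather than all of $\mbH$.
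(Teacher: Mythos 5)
Your proposal is correct, and it differs from the paper's proof in two substantive ways, both of which make your argument somewhat more complete. The paper verifies the subgradient inequality $f(y)-f(x)\geq\langle g,y-x\rangle_{\mbK}$ globally for each candidate $g$, using the identity $\norm{x}_{\mbH}=\|K^{\nicefrac{1}{2}}(x)\|_{\mbK}$ and a single application of Cauchy--Schwarz in each part; you instead establish Fr\'echet differentiability in the $\mbK$-topology by a first-order expansion with an $o(\norm{h}_{\mbK})$ remainder and then invoke convexity to conclude the subdifferential is a singleton. Both routes rest on the same bridging identity $\langle K(x),h\rangle_{\mbK}=\langle x,h\rangle_{\mbH}$, so the core computation is shared, but your version delivers the \emph{exact} subdifferential at points of differentiability, whereas the paper only exhibits the claimed element as \emph{a} subgradient. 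The difference is most pronounced at $x=0$ in part 2: the paper shows only that every $h$ with $\norm{K^{-\nicefrac{1}{2}}(h)}_{\mbK}\leq 1$ is a subgradient (one inclusion), while you prove the reverse inclusion as well via the dual-norm computation $\sup_{\norm{y}_{\mbH}\leq 1}\langle g,y\rangle_{\mbK}=\bigl(\sum_j\theta_j^{-2}g_j^2\bigr)^{1/2}=\norm{K^{-\nicefrac{1}{2}}(g)}_{\mbK}$, and you correctly handle the fact that the dual optimizer need not lie in $\mbK$ by truncating in the eigenbasis. That equality is actually what is needed downstream (the zero-coefficient condition in Lemma \ref{l:hbeta} implicitly uses that the subdifferential at $0$ is no larger than the stated ball), so your added rigor buys something real. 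One minor caveat: in part 3 your remainder bound $\norm{L(h)}_{\mbK}^2\leq\norm{L}_{op}^2\norm{h}_{\mbK}^2$ requires $L$ to be a bounded operator on $\mbK$, which the lemma statement does not say explicitly but which the paper assumes elsewhere; you should flag that hypothesis.
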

%%%%%%%%%%%%%%%%%%%%%%%%%%%%%%%
% \todo{We have to be careful here.  If we take derivatives in the $\|\cdot\|_{\mbK}$ then the function has to be over $\mbK$, not over $\mbH$. I made the changes in a few places so you can see.  Since everything is a function of $\beta$ and they are in $\mbK$ I don't think this is an issue.}
\begin{proof} 
Recall if $f: \mbK \rightarrow \mbR$ is a convex functional, $h \in \mbK$ is called a subgradient of $f$ in $x \in \mbK$ with respect to $\norm{.}_{\mbK}$ when
\begin{align*}
f(y)-f(x) \geq \langle h , y - x \rangle_{\mbK} \quad \quad \forall y \in \mbK,
\end{align*} and
%\textbf{Subdifferential:} 
the collection of the all subgradients of $f$ at $x \in \mbK$ is called the subdifferential of $f$ in $x$ and denoted by $\partial f(x)$. \\ \\
Part 1: According to the fact that $\norm{x}_{\mbH}^2=\norm{\KK(x)}_{\mbK}^2$, we need to prove
\begin{align*}
 \| \KK(y) \|_{\mbK}^2 - \| \KK(x) \|_{\mbK}^2 \geq \langle 2K(x) , y-x \rangle_{\mbK}. \\
\end{align*}
The right hand side can be written as
\begin{align*}
\langle 2K(x) , y-x \rangle_{\mbK} = 2 \langle \KK(x) , \KK(y) \rangle_{\mbK} - 2\| \KK(x) \|_{\mbK}^2,
\end{align*}
and the left hand side is
\begin{align*}
 \| \KK(y) \|_{\mbK}^2 + \| \KK(x) \|_{\mbK}^2 -2 \| \KK(x) \|_{\mbK}^2.
\end{align*}
So \textit{Cauchy Schwarz} inequality gives the desired result. \\ \\
Part 2: For $x \neq 0$, we need to show that
\begin{align*}
 \| \KK(y) \|_{\mbK} - \| \KK(x) \|_{\mbK} \geq \langle \frac{K(x)}{ \| \KK(x) \|_{\mbK}} , y-x \rangle_{\mbK}, \\
\end{align*}
or equivalently
\begin{align*}
 \| \KK(y) \|_{\mbK}\| \KK(x) \|_{\mbK} - \| \KK(x) \|_{\mbK}^2 \geq \langle K(x) , y-x \rangle_{\mbK} = \langle \KK(x) , \KK(y) \rangle_{\mbK} -\norm{\KK(x)}_{\mbK}^2,\\
\end{align*}
which is true based on \textit{Cauchy Schwarz} inequality. 

Let's assume $x=0$. We should find all $h \in \mbH$ such that
\begin{align*}
 \| \KK(y) \|_{\mbK} - 0 \geq \langle h , y - 0 \rangle_{\mbK}. \\
\end{align*}
So based on the following application of \textit{Cauchy Schwarz} inequality
\begin{align*}
\langle h , y \rangle_{\mbK}=\langle \KKn(h) , \KK(y) \rangle_{\mbK} \leq \|\KKn(h)\|_{\mbK} \|\KK(y)\|_{\mbK}. \\
\end{align*}
The part 2 trivially holds when $\norm{\KKn(h)}_{\mbK} \leq 1 $. \\ \\
Part 3. It is enough to show that
\begin{align*}
 \norm{L(y)}_{\mbK}^2 - \norm{L(x)}_{\mbK}^2 \geq \langle 2L^2(x) , y-x \rangle_{\mbK}. \\
\end{align*}
The right hand side can be written as
\begin{align*}
\langle 2L^2(x) , y-x \rangle_{\mbK} = 2 \langle L(x) , L(y) \rangle_{\mbK} - 2\norm{L(x)}_{\mbK}^2.
\end{align*}
Same as part 1, the inequality is satisfied with using the \textit{Cauchy Schwarz} inequality.

\end{proof}
%%%%%%%%%%%%%%%%%%%%%%%%%%%%%%%%
%%%%%%%%%%%%%%%%%%%%%%%%%%%%%%%%
%%%%%%%%%%%%%%%%%%%%%%%%%%%%%%%%
%%%%%%%%%%%%%%%%%%%%%%%%%%%%%%%%
\begin{lemma} \label{l:subdiff}
The subgradient of target function \eqref{AFSSEN} is
\begin{align*}
\frac{\partial L_{\lambda}(\beta)}{\partial \beta_{i}}=K(-N^{-1} \bX_{.i}^\top (\bY-\bX\boldb))+\la L^2(\bi) + \laa \tw_{i} \left\{
                \begin{array}{ll}
                K(\beta_{i}) \norm{\beta_{i}}_{\mbH}^{-1} \qquad & \bi \neq 0  \\
                  \\ \\
\{h ; \  \small{ \norm{\KKn(h)}_{\mbK} \leq 1} \} \qquad & \bi =  0 \\
                \end{array}
              \right.
\end{align*}
where $\bX_{.i}^\top=\left( X_{1i},\dots,X_{Ni} \right) \in \mbR^{N}$ is the vector of $i^{th}$ column of design matrix $\bX$.
\end{lemma}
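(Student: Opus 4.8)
The plan is to exploit the additivity of the subdifferential together with Lemma \ref{l:subgrad}, which already disposes of the two penalty terms. Writing the target as $L_{\lambda}(\boldb)=g(\boldb)+\frac{\la}{2}\sum_{i}\norm{L(\beta_i)}_{\mbK}^2+\laa\sum_{i}\tw_i\norm{\beta_i}_{\mbH}$, where $g(\boldb)=\frac{1}{2N}\norm{\bY-\bX\boldb}^2$ is the data-fidelity term, I first observe that each summand is convex in $\boldb$ over $\mbK^I$, and that the fidelity and smoothing terms are everywhere Fr\'echet differentiable with respect to $\norm{\cdot}_{\mbK}$. Since a finite differentiable convex functional is continuous, the Moreau--Rockafellar qualification holds, so the subdifferential of the sum equals the (Minkowski) sum of the individual subdifferentials, and these may be computed coordinate-wise in $\beta_i$. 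I would therefore treat the three terms separately and add them at the end.

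For the smoothing and sparsity terms I can simply quote Lemma \ref{l:subgrad}. Part 3, applied to $\norm{L(\beta_i)}_{\mbK}^2$ and scaled by $\la/2$, contributes the single element $\la L^2(\beta_i)$. Part 2, applied to $\norm{\beta_i}_{\mbH}$ and scaled by $\laa\tw_i$, contributes $\laa\tw_i K(\beta_i)\norm{\beta_i}_{\mbH}^{-1}$ when $\beta_i\neq 0$ and the ball $\laa\tw_i\{h:\norm{\KKn(h)}_{\mbK}\leq 1\}$ when $\beta_i=0$. These already coincide with the last term of the claimed formula, so no further computation is needed there.

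The substantive step is the fidelity term $g$. I would compute its G\^ateaux derivative directly in $\mbH$: expanding $\norm{\bY-\bX\boldb}^2=\sum_{n=1}^N\norm{Y_n-\sum_j X_{nj}\beta_j}_{\mbH}^2$ and differentiating along a direction $v\in\mbK$ placed in the $i$-th slot gives $\langle r_i,v\rangle_{\mbH}$ with residual $r_i=-N^{-1}\bX_{.i}^\top(\bY-\bX\boldb)\in\mbH$. The hard and conceptually central point is that the subgradients of Lemma \ref{l:subgrad}, and hence the whole lemma, are taken with respect to the $\mbK$ inner product, whereas this derivative is naturally an $\mbH$ inner product; the two cannot be equated directly. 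I would resolve this by Riesz representation in $\mbK$: using the spectral expansion $K(x)=\sum_j\theta_j\langle v_j,x\rangle_{\mbH}v_j$ together with the identity $\langle x,y\rangle_{\mbK}=\sum_j\theta_j^{-1}\langle x,v_j\rangle_{\mbH}\langle y,v_j\rangle_{\mbH}$, a term-by-term comparison gives $\langle K(r_i),v\rangle_{\mbK}=\langle r_i,v\rangle_{\mbH}$ for every $v\in\mbK$, since the factors $\theta_j$ and $\theta_j^{-1}$ cancel.

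Consequently the $\mbK$-gradient of $g$ in the $i$-th coordinate is $K(r_i)=K(-N^{-1}\bX_{.i}^\top(\bY-\bX\boldb))$, which is exactly the first term of the asserted expression, and this conversion is precisely where the operator $K$ enters the formula. Summing this with the smoothing contribution $\la L^2(\beta_i)$ and the sparsity contribution from Lemma \ref{l:subgrad} Part 2 yields the stated subgradient, completing the proof. I expect the only genuine obstacle to be the $\mbH$-to-$\mbK$ gradient conversion above; the remaining manipulations are routine applications of convexity and of the already-established Lemma \ref{l:subgrad}.
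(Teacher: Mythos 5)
Your proposal is correct and takes essentially the same route as the paper: decompose the target into the fidelity, smoothing, and sparsity terms, quote Lemma \ref{l:subgrad} for the two penalties, and reduce the fidelity term to the identity $\langle K(r) , v \rangle_{\mbK} = \langle r , v \rangle_{\mbH}$. The only difference is presentational --- the paper obtains the fidelity contribution from Lemma \ref{l:subgrad} Part 1 via an informal chain rule, whereas you compute the G\^ateaux derivative directly and make the $\mbH$-to-$\mbK$ conversion explicit, which is if anything slightly more careful.
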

%%%%%%%%%%%%%%%%%%%%%%%%%%%%%%%%
%%%%%%%%%%%%%%%%%%%%%%%%%%%%%%%%
\begin{proof}
\begin{align*}
L_{\lambda}(\beta)& =\dfrac{1}{2N} \|\bY-\bX\boldb\|_{\mbH}^2+\frac{\la}{2} \sum_{i=1}^{I} \|L(\beta_{i})\|_{\mbK}^2 + \laa \sum_{i=1}^{I} \tw_{i} \| \beta_{i}\|_{\mbH} \\
& = \dfrac{1}{2N} \sum\limits_{n=1}^{N} \| Y_{n}-\bX_{n.}^\top \boldb\|_{\mbH}^2+\frac{\la}{2} \sum_{i=1}^{I} \|L(\beta_{i})\|_{\mbK}^2 + \laa \sum_{i=1}^{I} \tw_{i} \norm{\beta_{i}}_{\mbH}, \\
\end{align*}
where $Y_n \in \mbH$ is the $n^{th}$ observation and $\bX_{n.}^\top=\left(X_{n1},\dots,X_{nI} \right) \in \mbR^{I}$ is the $n^{th}$ row of the design matrix $\bX$. According to Lemma \ref{l:subgrad}, we can take the subgradient of $L_{\lambda}(\beta)$ for any $\bi$ with respect to $\norm{.}_\mbK$ as follows
\begin{align*}
& \frac{\partial L_{\lambda}(\beta)}{\partial \beta_{i}} = \dfrac{1}{2N} \sum\limits_{n=1}^{N} \frac{\partial \norm{Y_{n}-\bX_{n.}^\top \boldb }_{\mbH}^2 }{\partial (Y_{n}-\bX_{n.}^\top \boldb)} . \frac{\partial (Y_{n}-\bX_{n.}^\top \boldb)}{\partial \bi}  +\frac{\la}{2} \frac{\partial \| L(\bi)\|_{\mbK}^2}{\partial \bi} + \laa \tw_{i} \frac{\partial \norm{\beta_i}_{\mbH}}{\partial \bi} \\ \\
& = \dfrac{1}{2N} \sum\limits_{n=1}^{N} 2K(Y_{n}-\bX_{n.}^\top\boldb)(-X_{ni})  +\la L^2(\bi) + \laa \tw_{i}  \left\{
                \begin{array}{ll}
                  K(\beta_{i}) \norm{\beta_{i}}_{\mbH}^{-1} \qquad & \bi \neq 0  \\
                  \\ \\
\{h ; \  \small{ \norm{\KKn(h)}_{\mbK} \leq 1} \} \qquad & \bi =  0 \\
                \end{array}
              \right. \\ \\
 & \hspace{20mm} =K (-N^{-1} \bX_{.i}^\top (\bY-\bX\boldb))+\la L^2(\bi) + \laa \tw_{i} \left\{
                \begin{array}{ll}
                 K(\beta_{i}) \norm{\beta_{i}}_{\mbH}^{-1} \qquad & \bi \neq 0  \\
                  \\ \\
\{h ; \  \small{ \norm{\KKn(h)}_{\mbK} \leq 1} \} \qquad & \bi =  0 \\
                \end{array}
              \right.                        
\end{align*}
\end{proof}
%%%%%%%%%%%%%%%%%%%%%%%%%
%%%%%%%%%%%%%%%%%%%%%%%%%
%%%%%%%%%%%%%%%%%%%%%%%%%
%%%%%%%%%%%%%%%%%%%%%%%%%
%%%%%%%%%%%%%%%%%%%%%%%%%
Now we introduce the lemma which will play an important role in proof of the functional oracle property.
%%%%%%%%%%%%%%%%%%%%%%%%%
\begin{lemma} \label{l:hbeta2}
Let's assume the AFSSEN estimation $\hboldb$ and true parameters $\boldb^{\star}$ have the same support, $\mcS = \{1,\dots,I_0\}$.
% According to definition \ref{d:notationT}, 
The nonzero parts of $\hboldb=(\hboldb_1,\textbf{0})$ can be written
concisely by
\begin{align*}
\hat{\boldb}_{1} = \hsig G_{I_0} (\boldb^{\star}_{1}) + G_{I_0} \left(  N^{-1} \bX_{1}^\top \ep - \laa \sbb \right),
\end{align*}
where 
\begin{align*}
& \hsig = \dfrac{1}{N} \bX_1^\top \bX_1 \in \mbR^{I_0 \times I_0}, \\
& \sbb=\{ \tilde{w}_{i} \hbi \norm{\hbi}_{\mbH}^{-1} ; i \in \mcS \} \in \mbK^{I_0}, \\
& G_{I_0} = \left(\hsig \mathds{I}_{I_0}  + \la K_{I_0}^{-1} L^2_{I_0} \right)^{-1}.
\end{align*}
\end{lemma}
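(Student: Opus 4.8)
The plan is to read the representation off the first-order optimality condition, using the hypothesis that $\hboldb$ and $\boldb^{\star}$ share the support $\mcS$. Since $L_\lambda$ is convex and $\hboldb$ is a minimizer, $0 \in \partial L_\lambda(\hboldb)$; and because each $\hbi$ with $i \in \mcS$ is nonzero, Lemma \ref{l:subdiff} gives that on the support the subdifferential is the single element
\begin{align*}
0 = K\bigl(-N^{-1}\bX_{.i}^\top(\bY - \bX\hboldb)\bigr) + \la L^2(\hbi) + \laa \tw_i K(\hbi)\norm{\hbi}_{\mbH}^{-1}, \qquad i \in \mcS.
\end{align*}
First I would substitute the model of Assumption \ref{assumption:AFSSEN}, $\bY = \bX_1\boldb^{\star}_1 + \ep$, together with $\bX\hboldb = \bX_1\hboldb_1$ (legitimate precisely because the supports agree), so that the residual rewrites as $\bX_1(\boldb^{\star}_1 - \hboldb_1) + \ep$.

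Next I would stack the $I_0$ scalar identities into a single equation over $\mbH^{I_0}$ using the coordinate-wise operators $K_{I_0}, L_{I_0}$ of Definition \ref{d:notation} and the matrix-as-operator convention $\hsig$ of Definition \ref{d:notationT}. Applying $K_{I_0}$ coordinate-wise to the data term turns the stacked residual into $K_{I_0}\bigl(\hsig(\boldb^{\star}_1 - \hboldb_1) + N^{-1}\bX_1^\top\ep\bigr)$, while by linearity of $K$ the adaptive penalty terms assemble into $\laa K_{I_0}(\sbb)$ with $\sbb$ as defined. This yields
\begin{align*}
0 = -K_{I_0}\bigl(\hsig(\boldb^{\star}_1 - \hboldb_1) + N^{-1}\bX_1^\top\ep\bigr) + \la L^2_{I_0}(\hboldb_1) + \laa K_{I_0}(\sbb).
\end{align*}

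Then I would apply $K_{I_0}^{-1}$ to every term to remove the common $K$ factor and collect the $\hboldb_1$ terms. This is the step demanding care: $K$ is compact, so $K^{-1}$ is unbounded, but each argument lies in the Cameron--Martin space $\mbK$ (indeed $\hbi \in \mbK$), and since $L$ shares eigenfunctions with $K$ the composite $K_{I_0}^{-1}L^2_{I_0}$ acts diagonally by $\eta_j^2/\theta_j$ on the common basis (the very quantity in Lemma \ref{l:num}), so it is well defined on the relevant domain. After clearing $K_{I_0}$, the operator multiplying $\hboldb_1$ is exactly $\hsig\mathds{I}_{I_0} + \la K_{I_0}^{-1}L^2_{I_0} = G_{I_0}^{-1}$, giving $G_{I_0}^{-1}(\hboldb_1) = \hsig\boldb^{\star}_1 + N^{-1}\bX_1^\top\ep - \laa\sbb$.

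Finally I would invert $G_{I_0}^{-1}$ and split the right-hand side into the stated two pieces. The only nonroutine point here is commutativity: by the remark following Definition \ref{d:notationT}, a matrix operator interchanges with any coordinate-wise operator, so $\hsig$ commutes with $K_{I_0}^{-1}L^2_{I_0}$ and hence with $G_{I_0}$; this lets me write $G_{I_0}(\hsig\boldb^{\star}_1) = \hsig G_{I_0}(\boldb^{\star}_1)$ and arrive at the claimed formula. The main obstacle is bookkeeping rather than analysis: passing rigorously from the $I_0$ pointwise subgradient identities to the compact operator equation, justifying that $K_{I_0}^{-1}$ may be applied on the domain where all arguments live, and verifying the commutativity that allows $\hsig$ to pass through $G_{I_0}$.
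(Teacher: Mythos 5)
Your proposal is correct and follows essentially the same route as the paper: read off the first-order condition from Lemma \ref{l:subdiff} on the common support, stack the $I_0$ identities with $K_{I_0}$, $L_{I_0}$, and $\hsig$, clear the common $K_{I_0}$ factor, invert $G_{I_0}^{-1}$, and use the interchangeability of matrix and coordinate-wise operators to pull $\hsig$ through $G_{I_0}$. The only cosmetic difference is that you substitute $\bY=\bX_1\boldb^{\star}_1+\ep$ before solving rather than after, and you are somewhat more explicit than the paper about why applying $K_{I_0}^{-1}$ is legitimate and where the commutativity is used.
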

%%%%%%%%%%%%%%%%%%%%%%%%%%%%%%%
%%%%%%%%%%%%%%%%%%%%%%%%%%%%%%%
\begin{proof}
Let's denote $\boldb^{\star}=(\boldb^{\star}_1,\textbf{0})$ with true support $\mcS=\{1, \dots, I_{0}\}$. We assumed the AFSSEN estimation $\hboldb=(\hboldb_1,\textbf{0})$ have the same support as $\boldb^{\star}$, so we can consider $\hbi \neq 0$ for all $i \in \mcS$  and $\hbi = 0$ for $i \not\in \mcS$. Since $\hboldb$ is going to be the minimizer of the convex function \eqref{AFSSEN}, according to Lemma \ref{l:subdiff}, 
for $i \not\in \mcS$ 
% we have $\hat \beta_i = 0 $ and then 
\begin{align*}
 K(\frac{1}{N} \bX_{.i}^\top (\bY-\bX\hboldb)) - \laa \tw_{i}h = 0 \hspace{20mm}  \{ h ; \norm{\KKn(h)}_{\mbK} \leq 1\}, \\
 \KK (\frac{1}{N} \bX_{.i}^\top (\bY-\bX\hboldb))=\laa \tw_{i} \KKn(h) \hspace{20mm}  \{ h ; \norm{\KKn(h)}_{\mbK} \leq 1\}.
\end{align*}
So the above equality exists when
\begin{align*}
\norm{\KK (\frac{1}{N} \bX_{.i}^\top (\bY-\bX\hboldb))}_{\mbK} \leq \laa \tw_{i},
\end{align*}
or equivalently
\begin{align*}
\norm{\frac{1}{N} \bX_{.i}^\top (\bY-\bX\hboldb)}_{\mbH} \leq \laa \tw_{i}.
\end{align*}
In the other side, when $ \norm{\dfrac{1}{N} \bX_{.i}^\top (\bY-\bX\hboldb)}_{\mbH} > \laa \tw_{i}$, we will have $\hbi \neq 0$ for $i \in \mcS$ and then
\begin{align*}
K(-\frac{1}{N} \bX_{.i}^\top (\bY-\bX\hboldb)) +\la L^2(\hbi) + \laa \tw_{i} K(\hbi) \norm{\hbi}_{\mbH}^{-1} =0.
\end{align*}
According to Definition \ref{d:notation}, $\hsig=\frac{1}{N} \bX_1^\top \bX_1 \in \mbR^{I_0 \times I_0}$ and $\sbb=\{ \tilde{w}_{i} \hbi \norm{\hbi}_{\mbH}^{-1} ; i \in \mcS \}$ we have
\begin{align*}
& K_{I_0}(-\dfrac{1}{N} \bX_{1}^{\top} (\bY-\bX_1\hboldb_1)) + \la L_{I_0}^2(\hboldb_1) + \laa K_{I_0}(\sbb) = 0, \\ \nonumber 
& K_{I_0}(-\frac{1}{N} \bX_{1}^{\top} \bY + \hsig \hboldb_1) + \la L_{I_0}^2(\hboldb_1) + \laa K_{I_0}(\sbb) = 0.
\end{align*}
According to Definition  \ref{d:notationT}, we can simplify it by
\begin{align*}
& \left( \hsig K_{I_0} + \la L_{I_0}^2 \right) \hboldb_1 =  K_{I_0}(N^{-1} \bX_{1}^\top \bY) - \laa K_{I_0}(\sbb),
\\
& \left( \hsig \mathds{I}_{I_0} + \la K_{I_0}^{-1} L_{I_0}^2 \right) \hboldb_1 =  N^{-1} \bX_{1}^\top \bY - \laa \sbb,
\end{align*}
then
\begin{align*}
\hboldb_1 = \left( \hsig  \mathds{I}_{I_0} + \la K_{I_0}^{-1} L_{I_0}^2 \right)^{-1} \left(N^{-1} \bX_{1}^\top \bY - \laa \sbb \right).
\end{align*} 
With substitution of $\bY=\bX_{1} \stb_{1} + \ep$, we will have
\begin{align*}
\hboldb_{1} & =\left( \hsig  \mathds{I}_{I_0} + \la K_{I_0}^{-1} L_{I_0}^2\right)^{-1}\left(\hsig \stb_{1} + \left(N^{-1} \bX_{1}^\top \ep - \laa \sbb \right) \right).
% \hboldb_{1} & =\hsig \left( \hsig K_{I_0}^{\nicefrac{1}{2}} +\la K_{I_0}^{-\nicefrac{1}{2}} L_{I_0}^2 \right)^{-1} K_{I_0}^{\nicefrac{1}{2}} \left(\stb_{1} + \hsig^{-1} \left( N^{-1}  \bX_{1}^\top \ep - \laa \sbb \right) \right)
\end{align*}
% Finally with introducing $G=\left( \hsig K_{I_0}^{\nicefrac{1}{2}} +\la K_{I_0}^{-\nicefrac{1}{2}} L_{I_0}^2 \right)^{-1}$, we will have
% \begin{align*}
% \hboldb_{1} & = \hsig G K_{I_0}^{\nicefrac{1}{2}} (\stb_{1}) + G K_{I_0}^{\nicefrac{1}{2}} \left(  N^{-1} \bX_{1}^\top \ep - \laa \sbb \right)
% \end{align*}
Finally with introducing $G_{I_0}=\left( \hsig  \mathds{I}_{I_0} + \la K_{I_0}^{-1} L_{I_0}^2\right)^{-1}$ as a linear operator from $\mbH^{I_0}$ to $\mbH^{I_0}$, we will have
\begin{align*}
\hboldb_{1} & = \hsig G_{I_0} (\stb_{1}) + G_{I_0} \left(  N^{-1} \bX_{1}^\top \ep - \laa \sbb \right).
\end{align*}

\end{proof}
We now introduce the following lemmas which are useful in proof of the Theorem \ref{t:w.oracle}.
%%%%%%%%%%%%%%%%%%%%%%%%%%
% \begin{lemma} \label{l:LX}
% if $X$ is a $C$-subgaussian in $\mbH$ and $L$ is a linear operator from $\mbH$ to $\mbH$, then $LX$ will be a $LCL^{\star}-subgaussian$ in $\mbH$ where $L^{\star}$ is the adjoint operator of L.
% \end{lemma}
% \begin{proof}[Proof.]
% for any $x \in \mbH$
% \begin{align*}
% E e^{ \langle x , LX \rangle }  =  E e^{ \langle L^{\star}(x) , X \rangle } \leq e^{\frac{1}{2}\langle L^{\star}(x) , C(L^{\star}(x)) \rangle} 
%  = e^{\frac{1}{2}\langle x , LCL^{\star}(x) \rangle}
% \end{align*}
% \end{proof}
%%%%%%%%%%%%%%%%%%%%%%
%%%%%%%%%%%%%%%%%%%%%%
%%%%%%%%%%%%%%%%%%%%%%
\begin{lemma} \label{l:multi-sub}
Let $\ep=(\epsilon_1,\dots,\epsilon_N)$ where $\epsilon_i$s are independent mean zero $C$-subgaussian process in $\mbH$ and $\bT$ is an arbitrary operator from $\mbH^N$ to $\mbH$, then $\bT \ep$ will be a $C_T$-subgaussian process in $\mbH$ with
\begin{align*}
C_T= \bT C_N {\bT^{*}}^{\top}.
\end{align*}
\begin{proof}
Since $\epsilon_i$s are independent, we can write
\begin{align*}
& \E \exp\left( { \langle x , \bT \ep \rangle_{\mbH} }\right) = \E \exp\left( { \sum\limits_{i =1}^{N} \langle x , T_i \epsilon_i \rangle_{\mbH} } \right) = \E \exp \left({\langle T_1^{*}  x , \epsilon_1 \rangle }\right) \dots \E \exp\left({ \langle T_N^{*} x , \epsilon_N \rangle }\right) \\
& \leq \exp\left({\frac{1}{2} \langle T_1^{*} x , C(T_1^{*} x) \rangle_{\mbH} }\right) \dots \exp\left( {\frac{1}{2} \langle T_N^{*} x , C(T_N^{*} x) \rangle_{\mbH} } \right) \\ 
& = \exp\left({\frac{1}{2} \langle x , T_1C(T_1^{*}x) \rangle_{\mbH}}\right) \dots \exp\left({\frac{1}{2} \langle x , T_NC(T_N^{*}x) \rangle_{\mbH}}\right)
= \exp\left({\frac{1}{2} \langle x , \bT C_N {\bT^{*}}^{\top}(x) \rangle_{\mbH}}\right).
\end{align*}
\end{proof}
\end{lemma}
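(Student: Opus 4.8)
The plan is to verify directly the defining moment-generating-function inequality for $C_T$-subgaussianity, namely that $\E[\exp(\langle x, \bT\ep\rangle_{\mbH})] \leq \exp(\frac{1}{2}\langle x, C_T(x)\rangle_{\mbH})$ for every $x \in \mbH$, and then to read off $C_T$ from the resulting exponent. First I would express the operator $\bT:\mbH^N \to \mbH$ through its coordinate components $T_1,\dots,T_N$, each a map $\mbH \to \mbH$, so that $\bT\ep = \sum_{i=1}^N T_i \epsilon_i$; the adjoint $\bT^{*}:\mbH \to \mbH^N$ then has components $T_i^{*}$. With this decomposition the pairing splits as $\langle x, \bT\ep\rangle_{\mbH} = \sum_{i=1}^N \langle T_i^{*} x, \epsilon_i\rangle_{\mbH}$, which isolates the dependence on each independent noise element and moves the test vector onto the appropriate factor.

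The second step is to exploit independence of the $\epsilon_i$ to factorize the moment generating function: $\E[\exp(\sum_i \langle T_i^{*} x, \epsilon_i\rangle_{\mbH})] = \prod_{i=1}^N \E[\exp(\langle T_i^{*} x, \epsilon_i\rangle_{\mbH})]$. To each factor I would apply the $C$-subgaussian hypothesis on $\epsilon_i$ with test element $T_i^{*} x \in \mbH$, yielding $\E[\exp(\langle T_i^{*} x, \epsilon_i\rangle_{\mbH})] \leq \exp(\frac{1}{2}\langle T_i^{*} x, C(T_i^{*} x)\rangle_{\mbH})$. Multiplying these bounds collapses the product into a single exponential whose exponent is $\frac{1}{2}\sum_{i=1}^N \langle T_i^{*} x, C(T_i^{*} x)\rangle_{\mbH}$.

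The final step is bookkeeping: moving each $T_i^{*}$ back across the inner product via its adjoint gives $\sum_i \langle x, T_i C T_i^{*} x\rangle_{\mbH} = \langle x, (\sum_i T_i C T_i^{*})(x)\rangle_{\mbH}$, and the operator $\sum_{i=1}^N T_i C T_i^{*}$ is precisely the block product $\bT C_N {\bT^{*}}^{\top}$ once $C_N$ is understood as the coordinate-wise extension of $C$ to $\mbH^N$ in the sense of Definition \ref{d:notation}. This identifies $C_T = \bT C_N {\bT^{*}}^{\top}$ and confirms the required inequality. Mean-zero-ness of $\bT\ep$ follows immediately from linearity of $\bT$ together with $\E[\epsilon_i]=0$.

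I expect the only genuinely delicate point to be the last one: correctly matching the abstract composition $\sum_i T_i C T_i^{*}$ with the paper's block notation $\bT C_N {\bT^{*}}^{\top}$, so that the coordinate-wise convention for $C_N$ and the transpose-of-adjoint convention for ${\bT^{*}}^{\top}$ line up consistently. Everything else is a routine chain of factorization over independent coordinates followed by the defining subgaussian bound, so no concentration or tail estimates are needed here.
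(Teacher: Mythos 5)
Your proposal is correct and follows essentially the same route as the paper's proof: decompose $\bT\ep = \sum_i T_i\epsilon_i$, factorize the moment generating function over the independent coordinates, apply the $C$-subgaussian bound to each factor with test element $T_i^{*}x$, and recombine the exponents into $\tfrac{1}{2}\langle x, \bT C_N {\bT^{*}}^{\top}(x)\rangle_{\mbH}$. Your explicit attention to matching $\sum_i T_i C T_i^{*}$ with the block notation is a reasonable point of care, but there is no substantive difference from the paper's argument.
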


%%%%%%%%%%%%%%%%%%%%%%
%%%%%%%%%%%%%%%%%%%%%%
%%%%%%%%%%%%%%%%%%%%%%
% \begin{lemma} \label{l:multi-sub}
% Let $\ep=(\epsilon_i,\dots,\epsilon_N)$ where $\epsilon_i$ for $i=1,\dots,N$ are independent $C$-subgaussian process in $\mbH$ and $\textbf{P}=(P_1,\dots,P_N)^\top$ is an arbitrary real vector, then $\textbf{P}^\top \ep$ will be a $C_T-subgaussian$ process in $\mbH$ where
% \begin{align*}
% C_T= \textbf{P}^\top \textbf{P} \ C
% \end{align*}
% \end{lemma}
%%%%%%%%%%%%%%%%%%%%%%
%%%%%%%%%%%%%%%%%%%%%%
%%%%%%%%%%%%%%%%%%%%%%
%%%%%%%%%%%%%%%%%%%%%%
The following lemma can be considered as an extension of the lemma used in \citep{parodi2017functional} for $C$-subgaussian noise.
\begin{lemma} \label{l:subg-Basic}
Let's consider $X$ is a mean zero $C$-subgaussian process in Hilbert space $\mbH$, then we have
\begin{align*}
P\left(\norm{ X }_{\mbH}^2 \geq \left(\| C\|_{1} + 2 \|C \|_{2} \sqrt{t} + 2 \| C\|_{\infty} t \right) \right) \leq e^{-t},
\end{align*}
where $\|C\|_{1}$ , $\|C \|_{2}$ and $ \|C\|_{\infty}$ represent $\sum\limits_{i=1}^{\infty} \gamma_{i}$, $\sqrt{\sum\limits_{i=1}^{\infty} \gamma_{i}^2}$ and $\max\limits_{i}{\gamma_{i}}$ respectively when $\gamma_{i}$ are eigenvalues of Covariance operator $C$.
\begin{proof}
The idea is same as \citep{barber2017function}. Let's assume $\gamma_i > 0$ and $\psi_i \in \mbH$ as the eigenvalues and corresponding eigenfunctions of $C$. According to the \textit{KL-expansion} theorem
\begin{align*}
\norm{X}_{\mbH}^2=  \mathlarger{\sum\limits_{j=1}^{\infty}} \langle X , \psi_j \rangle_{\mbH}^2 =  \mathlarger{\sum\limits_{j=1}^{\infty}} \gamma_j \langle X , \dfrac{\psi_j}{\sqrt{\gamma_j}} \rangle_{\mbH}^2 = \mathlarger{\sum\limits_{j=1}^{\infty}} \gamma_j Z_j^2,
\end{align*}
where $Z_j$ is a subgaussian process in $\mbR$ \citep{Antonioni1997sub} with parameter $\left\langle \dfrac{\psi_j}{\sqrt{\gamma_j}}  , \dfrac{C(\psi_j)}{\sqrt{\gamma_j}} \right\rangle = 1$. Define the events
\begin{align*}
A_J=\bigg\{ \mathlarger{\sum\limits_{j=1}^{J}} \gamma_j Z_j^2 \geq \| C\|_{1} + 2 \|C \|_{2} \sqrt{t} + 2 \| C\|_{\infty} t \bigg\} \ \ \ \ \ \ \ J = 1 ,2 , \dots
\end{align*}
Since $\norm{C}_1 \geq \sum\limits_{i=1}^{J} \gamma_{i}$ and $\norm{C}_2^2 \geq \sum\limits_{i=1}^{J} \gamma_{i}^2$, based on \citep{hsu2012tail} we can see
\begin{align*}
P(A_J) \leq P \left( \sum\limits_{j=1}^{J} \gamma_j Z_j^2 \geq \sum\limits_{i=1}^{J} \gamma_{i}  + 2 \sqrt{\sum\limits_{i=1}^{J} \gamma_{i}^2} \sqrt{t} + 2 \| C\|_{\infty} t \right) \leq e^{-t}.
\end{align*}
Since $A_1 \subset A_2 \subset \hdots $ and using continuity from below, we can conclude
\begin{align*}
P\left( \sum\limits_{j=1}^{\infty} \gamma_j Z_j^2 \geq \| C\|_{1} + 2 \|C \|_{2} \sqrt{t} + 2 \| C\|_{\infty} t \right) \leq e^{-t}.
\end{align*}
\end{proof}
\end{lemma}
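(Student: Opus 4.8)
The plan is to diagonalize $\norm{X}_{\mbH}^2$ in the eigenbasis of $C$ and reduce the claim to a tail bound for a quadratic form in a subgaussian vector, for which I would invoke the Hsu--Kakade--Zhang inequality \citep{hsu2012tail}. Let $\gamma_j > 0$ and $\psi_j$ denote the nonzero eigenvalues and corresponding orthonormal eigenfunctions of $C$. By the Karhunen--Lo\`eve expansion I would write
\begin{align*}
\norm{X}_{\mbH}^2 = \sum_{j=1}^{\infty} \langle X, \psi_j\rangle_{\mbH}^2 = \sum_{j=1}^{\infty} \gamma_j Z_j^2, \qquad Z_j := \gamma_j^{-1/2}\langle X, \psi_j\rangle_{\mbH},
\end{align*}
so that the event in question concerns the weighted sum of squares $\sum_j \gamma_j Z_j^2$.

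The crucial step is to show that the coordinate vector $(Z_1,\dots,Z_J)$ is, for every finite $J$, subgaussian with the identity as its parameter, so that the matrix appearing in the quadratic form is the diagonal $\Sigma = \mathrm{diag}(\gamma_1,\dots,\gamma_J)$. Using the $C$-subgaussian definition with the test element $x = \sum_j \alpha_j \gamma_j^{-1/2}\psi_j$ together with $C(\psi_k) = \gamma_k\psi_k$, I would compute $\langle x, C(x)\rangle_{\mbH} = \sum_j \alpha_j^2$, which yields $\E\exp\!\big(\sum_j \alpha_j Z_j\big) \leq \exp(\tfrac{1}{2}\norm{\alpha}^2)$; in particular each $Z_j$ is a standard ($1$-)subgaussian scalar, matching the parameter $1$ asserted in the statement.

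Then I would apply the Hsu--Kakade--Zhang tail bound to the truncated quadratic form $\sum_{j=1}^J \gamma_j Z_j^2$ with covariance matrix $\mathrm{diag}(\gamma_1,\dots,\gamma_J)$, whose trace, squared Frobenius norm, and operator norm are $\sum_{j\le J}\gamma_j$, $\sum_{j\le J}\gamma_j^2$, and $\max_{j\le J}\gamma_j$ respectively. Since $\norm{C}_1 \geq \sum_{j\le J}\gamma_j$, $\norm{C}_2^2 \geq \sum_{j\le J}\gamma_j^2$ and $\norm{C}_\infty \geq \max_{j\le J}\gamma_j$, the threshold $\norm{C}_1 + 2\norm{C}_2\sqrt{t} + 2\norm{C}_\infty t$ dominates the one produced by the inequality, so each truncated event
\begin{align*}
A_J = \Big\{\textstyle\sum_{j=1}^J \gamma_j Z_j^2 \geq \norm{C}_1 + 2\norm{C}_2\sqrt{t} + 2\norm{C}_\infty t\Big\}
\end{align*}
has probability at most $e^{-t}$.

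Finally, since the partial sums $\sum_{j\le J}\gamma_j Z_j^2$ are nondecreasing in $J$, the events $A_J$ are nested increasing, and continuity from below gives $P(\bigcup_J A_J) = \lim_J P(A_J) \leq e^{-t}$, which is the desired bound for the full series $\norm{X}_{\mbH}^2$. I expect the main obstacle to be the second step: verifying that the Karhunen--Lo\`eve coordinates inherit the \emph{joint} (not merely marginal) subgaussianity with exactly the identity parameter, since this is precisely what makes the relevant covariance matrix diagonal and aligns the inequality's threshold with the three norms $\norm{C}_1, \norm{C}_2, \norm{C}_\infty$. The truncation-and-limit passage is then routine.
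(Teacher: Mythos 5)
Your proposal follows essentially the same route as the paper's proof: Karhunen--Lo\`eve diagonalization in the eigenbasis of $C$, the Hsu--Kakade--Zhang bound applied to the truncated quadratic forms $\sum_{j\le J}\gamma_j Z_j^2$, and passage to the limit via nested events and continuity from below. In fact you are slightly more careful than the paper at the one delicate point, explicitly verifying that the coordinates $(Z_1,\dots,Z_J)$ are \emph{jointly} subgaussian with identity parameter (the paper only asserts the marginal statement), which is exactly what justifies taking the covariance in the quadratic-form bound to be $\mathrm{diag}(\gamma_1,\dots,\gamma_J)$.
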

%%%%%%%%%%%%%%%%%%%%%%
%%%%%%%%%%%%%%%%%%%%%%
%%%%%%%%%%%%%%%%%%%%
%%%%%%%%%%%%%%%%%%%%
\begin{lemma} \label{l:idempotent}
If $Q$ is an operator in $\mbH$ such that $\norm{Q^2x} \leq \norm{Qx}$ for any $x \in \mbH$, the eigenvalues of $Q$ will be in $[0,1]$.
\end{lemma}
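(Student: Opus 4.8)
The plan is to test the hypothesis directly on eigenvectors, since an eigenvalue constraint only needs the inequality evaluated at the corresponding eigenfunctions. Let $\mu$ be an eigenvalue of $Q$ with a nonzero eigenvector $x$, so that $Qx=\mu x$. First I would compute $Q^2 x = Q(\mu x)=\mu\, Qx = \mu^2 x$, and then evaluate the two norms appearing in the assumption at this $x$. Because scaling pulls out of the norm, $\norm{Qx}_{\mbH}=\norm{\mu x}_{\mbH}=|\mu|\,\norm{x}_{\mbH}$ and $\norm{Q^2 x}_{\mbH}=\norm{\mu^2 x}_{\mbH}=\mu^2\,\norm{x}_{\mbH}$.

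Next I would apply the hypothesis $\norm{Q^2 x}_{\mbH}\le\norm{Qx}_{\mbH}$ to this particular $x$, giving $\mu^2\,\norm{x}_{\mbH}\le|\mu|\,\norm{x}_{\mbH}$. Dividing through by $\norm{x}_{\mbH}>0$ yields $\mu^2\le|\mu|$, hence $|\mu|\le 1$ for every $\mu\neq 0$ (and $\mu=0$ is trivially admissible). This already shows that all eigenvalues lie in $[-1,1]$.

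The main obstacle is the lower bound: the pointwise inequality only controls magnitudes, so it cannot by itself rule out negative eigenvalues (for instance $Q=-\tfrac12\mathds{I}$ satisfies $\norm{Q^2 x}_{\mbH}\le\norm{Qx}_{\mbH}$ yet has eigenvalue $-\tfrac12$). To reach the stated interval $[0,1]$ rather than $[-1,1]$, I would invoke the standing structural property of $Q$ in the application where this lemma is used, namely that $Q$ is a nonnegative self-adjoint operator, so its eigenvalues are automatically $\ge 0$. Combining $\mu\ge 0$ with $|\mu|\le 1$ then gives $\mu\in[0,1]$, completing the argument.
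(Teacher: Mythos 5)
Your proof takes exactly the same route as the paper's: evaluate the hypothesis on an eigenvector to get $\mu^2\norm{x}_{\mbH}\le|\mu|\,\norm{x}_{\mbH}$ and hence $|\mu|\le 1$. Where you differ is only at the final step, and there you are the more careful party: the paper concludes $0\le\theta_i\le 1$ directly from $\theta_i^2\le|\theta_i|$, which, as your counterexample $Q=-\tfrac12\mathds{I}$ shows, only yields $\theta_i\in[-1,1]$; the lemma as stated silently needs nonnegativity of $Q$, which does hold for the operator $\bX_1 P\bX_1^\top$ to which it is applied in Step 6 (its eigenvalues $(1+\la\tau_i\theta_j^{-1}\eta_j^2)^{-1}$ are manifestly positive). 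So your addendum is a legitimate patch to the paper's argument rather than a gap in yours.
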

\begin{proof}
let's denote $\theta_j$ and $v_j$ as the eigenvalues and eigenfunctions of $Q$. So $Qv_i=\theta_i v_i$ and then
\begin{align*}
Q^2 v_i = \theta_i Q v_i = \theta_i^2 v_i.
\end{align*}
Then
\begin{align*}
\theta_i^2 \norm{v_i} = \norm{Q^2 v_i} \leq \norm{Qv_i}=|\theta_i| \norm{v_i},
\end{align*}
or equivalently
\begin{align*}
\theta_i^2 \leq |\theta_i|,
\end{align*}
which implies $ 0 \leq \theta_i \leq 1$.
\end{proof}
%%%%%%%%%%%%%%%%%%%%
%%%%%%%%%%%%%%%%%%%%
\begin{lemma} \label{l:PCP}
Assume that $Q$ is a continuous linear operator and $C$ a covariance operator over $\mbH$.  For an arbitrary covariance operator, $A$, let $\| A \|_m$ denote the $m$-norm of the eigenvalues of $A$.  Then we have that  
\begin{align*}
\norm{QCQ^{\star}}_m \leq \norm{Q}_{op}^2 \norm{C}_m \ \ \ \ \ \forall m=1,\dots,\infty,
\end{align*}
where $\norm{Q}_{op}$ is the operator norm of $Q$, equivalently the largest singular value of $Q$.
\end{lemma}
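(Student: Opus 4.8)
The plan is to reduce the inequality to the elementary singular value bound $s_j(AB)\le\norm{A}_{op}\,s_j(B)$ applied twice, and then to take $\ell^m$ norms of the resulting termwise estimate. First I would observe that $QCQ^{\star}$ is self-adjoint and positive semidefinite, since for every $x\in\mbH$ we have $\langle QCQ^{\star}x,x\rangle_{\mbH}=\langle C(Q^{\star}x),Q^{\star}x\rangle_{\mbH}\ge 0$ by positivity of $C$. Consequently its eigenvalues coincide with its singular values, and similarly the covariance operator $C$ has eigenvalues equal to its singular values. Because $C$ is trace class it is compact, and since $Q,Q^{\star}$ are bounded, $QCQ^{\star}$ is compact as well; thus both operators possess a well-defined decreasing sequence of singular values $s_1\ge s_2\ge\cdots\to 0$. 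By definition $\norm{QCQ^{\star}}_m$ is the $\ell^m$ norm of $(s_j(QCQ^{\star}))_j$ and $\norm{C}_m$ the $\ell^m$ norm of $(s_j(C))_j$, so it suffices to bound these sequences term by term.

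Next I would establish the auxiliary inequality $s_j(TB)\le\norm{T}_{op}\,s_j(B)$ for a bounded operator $T$ and a compact operator $B$, using the Schmidt--Eckart--Young characterization $s_j(M)=\min\{\norm{M-F}_{op}:\operatorname{rank}(F)\le j-1\}$, valid for compact operators on a Hilbert space and attained by truncated singular value decomposition. Letting $F_0$ be an optimal rank $\le j-1$ approximant of $B$, so that $\norm{B-F_0}_{op}=s_j(B)$, the operator $TF_0$ again has rank $\le j-1$, and hence $s_j(TB)\le\norm{TB-TF_0}_{op}=\norm{T(B-F_0)}_{op}\le\norm{T}_{op}\norm{B-F_0}_{op}=\norm{T}_{op}\,s_j(B)$.

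With this in hand I would apply the inequality twice. First, $s_j(QCQ^{\star})\le\norm{Q}_{op}\,s_j(CQ^{\star})$. Then, using the general identity $s_j(M)=s_j(M^{\star})$ together with $C^{\star}=C$, I rewrite $s_j(CQ^{\star})=s_j\big((CQ^{\star})^{\star}\big)=s_j(QC)\le\norm{Q}_{op}\,s_j(C)$. Combining gives the termwise estimate $s_j(QCQ^{\star})\le\norm{Q}_{op}^2\,s_j(C)$ for every $j$. Taking the $\ell^m$ norm of both sides (and the supremum in the case $m=\infty$) yields $\norm{QCQ^{\star}}_m\le\norm{Q}_{op}^2\norm{C}_m$ for all $m=1,\dots,\infty$, as claimed.

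I expect the main obstacle to be purely the infinite-dimensional bookkeeping: one must justify that the singular value sequences are well defined and that the min-max/best-approximation characterization of $s_j$ holds, both of which follow from the compactness noted above. A slightly cleaner alternative that sidesteps one of the two applications is the factorization $QCQ^{\star}=(QC^{1/2})(QC^{1/2})^{\star}$, whose nonzero eigenvalues are the squares of the singular values of $QC^{1/2}$; since $s_j(QC^{1/2})\le\norm{Q}_{op}\,s_j(C^{1/2})=\norm{Q}_{op}\sqrt{s_j(C)}$, squaring recovers the same termwise bound. Either route reduces the whole statement to the single ideal-property inequality for singular values.
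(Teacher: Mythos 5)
Your proof is correct, but it takes a genuinely different route from the paper's. The paper argues termwise on eigenvalues via a variational (Courant--Fischer type) characterization: it writes the $i$-th eigenvalue of $QCQ^{\star}$ as $\sup_{x\in\mbH_i}\langle QCQ^{\star}(x),x\rangle_{\mbH}$ over unit vectors orthogonal to the first $i-1$ eigenfunctions, factors $QCQ^{\star}=(QC^{\haf})(QC^{\haf})^{\star}$ so that the quadratic form becomes $\|(QC^{\haf})^{\star}(x)\|_{\mbH}^2$, passes to $\|QC^{\haf}(x)\|_{\mbH}^2\le\norm{Q}_{op}^2\langle C(x),x\rangle_{\mbH}$, and concludes $\theta_i'\le\norm{Q}_{op}^2\theta_i$ before taking $\ell^m$ norms --- exactly the termwise bound you obtain. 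You instead derive that bound from the ideal property of singular values, $s_j(TB)\le\norm{T}_{op}s_j(B)$, proved via the Schmidt--Eckart--Young best-rank-$(j-1)$ approximation characterization, applied twice with $s_j(M)=s_j(M^{\star})$. Your route is somewhat more robust: the paper's argument quietly swaps $\|(QC^{\haf})^{\star}(x)\|$ for $\|QC^{\haf}(x)\|$ inside a supremum restricted to $\mbH_i$, and does not make explicit whether the $v_j$ defining $\mbH_i$ are eigenfunctions of $QCQ^{\star}$ or of $C$ (a full min-max argument is needed to make both suprema simultaneously legitimate), whereas the approximation-number argument sidesteps this bookkeeping entirely. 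The factorization alternative you mention at the end is essentially the paper's decomposition, so the two approaches meet there; both require only compactness of $C$ and boundedness of $Q$, and both deliver the result for all $m=1,\dots,\infty$ including the operator-norm case.
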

\begin{proof}
Let's define $\theta^{\prime}_i$ as eigenvalues of $QCQ^{\star}$, then we have
\begin{align*}
\theta^{\prime}_i =\sup_{x \in \mbH_i} \langle QCQ^{\star}(x) , x \rangle_{\mbH},
\end{align*}
where $\mbH_i=\{ x ; \ \ \norm{x}_{\mbH}=1  \ \ \& \ \  \langle x , v_j \rangle_{\mbH} =0 \ \ \ \forall j=1,\dots, i-1\}$. Since $C$ is a covariance operator, it is self-adjoint with positive eigenvalues, then we have \\
\begin{align*}
\theta^{\prime}_i & =\sup\limits_{x \in \mbH_i} \left\langle QCQ^{\star}(x) , x \right\rangle_{\mbH} = \sup\limits_{x \in \mbH_i} \left\langle QC^{\haf} (QC^{\haf})^{\star}(x) , x \right\rangle_{\mbH} =  \sup\limits_{x \in \mbH_i} \left\langle (QC^{\haf})^{\star}(x) , (QC^{\haf})^{\star}(x) \right\rangle_{\mbH} \\ \\ 
&= \sup\limits_{x \in \mbH_i} \norm{(QC^{\haf})^{\star}(x)}_{\mbH}^2 = \sup\limits_{x \in \mbH_i} \norm{QC^{\haf}(x)}_{\mbH}^2 \leq \norm{Q}_{op}^2 \sup\limits_{x \in \mbH_i} \norm{C^{\haf}(x)}^2_{\mbH} \\ \\ 
&= \norm{Q}_{op}^2  \sup\limits_{x \in \mbH_i} \left\langle C^{\haf}(x) , C^{\haf}(x) \right\rangle_{\mbH} =  \norm{Q}_{op}^2  \sup\limits_{x \in \mbH_i} \left\langle C(x) , x \right\rangle_{\mbH} = \norm{Q}_{op}^2 \theta_i. \\
\end{align*}
So based on $\theta_i^{\prime} \leq \norm{Q}_{op}^2 \theta_i$, we can conclude
\begin{align*}
\left(\sum\limits_{i=1}^{\infty} {\theta_i^{\prime}}^m\right)^{\frac{1}{m}} \leq \norm{Q}_{op}^2 \left(\sum\limits_{i =1}^{\infty} \theta_i^m\right)^{\frac{1}{m}},
\end{align*}
then we will have
\begin{align*}
\norm{QCQ^{\star}}_m \leq \norm{Q}_{op}^2 \norm{C}_m \hspace{1cm} \forall m=1,\dots,\infty.
\end{align*}
% Now we can start to prove step 6. Recall that
% \begin{align} \label{B3:1}
% B_{6}  =\bigg\{\frac{1}{N} \norm{X_{i}^\top H \epsilon }_{\mbH} \geq \dfrac{\laa\tw_{i}}{3} \hspace{1cm} ; i \not\in S \bigg\}
% \end{align}
% where  $H=\left(\textbf{I} - \bX_1 \left(\bX_{1}^\top \bX_{1} \textbf{I} + \la N L K^{-1} \right)^{-1} \bX_1^\top \right)$.
\end{proof}
%%%%%%%%%%%%%%%%%%%%%%%
Finally for some technical proofs, we recall the following lemma from \citet{barber2017function}.
\begin{lemma} \label{l:btilde}
If Assumption \ref{assumption:4p} holds, the FSL estimate $\tilde{\beta_i}$ and true coefficient $\beta_i^{\star}$ satisfy
\begin{align*}
\sup\limits_{i \in \mcS} \norm{\tbi - \beta^{\star}_i}_{\mbH}=O_{p}(r_N^{\haf}),
\end{align*}
where $r_N=\dfrac{I_0 \log(I)}{N}$.
\end{lemma}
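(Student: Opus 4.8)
The plan is to reconstruct the consistency proof for the Function-on-Scalar LASSO along the classical two-stage route (a basic inequality followed by concentration of the noise), carried out in the functional, $C$-subgaussian setting. Write the FSL objective as $L_F(\bbeta)=\frac{1}{2N}\norm{\bY-\bX\bbeta}_{\mbH^N}^2+\gamma_N\sumi\norm{\beta_i}_{\mbH}$, where $\gamma_N\asymp\sqrt{\log(I)/N}$ is the FSL tuning parameter. Since $\tboldb$ minimizes $L_F$, I would first record the \emph{basic inequality} $L_F(\tboldb)\le L_F(\stb)$. Substituting the model $\bY=\bX\stb+\ep$ and rearranging gives, after the usual cancellation, a bound in which the prediction error $\frac{1}{2N}\norm{\bX(\tboldb-\stb)}_{\mbH^N}^2$ is controlled by a \emph{noise} term plus the difference of penalties. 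Pairing coordinatewise via Cauchy--Schwarz, the noise term is dominated by $Z_N\sumi\norm{\tbi-\sbi}_{\mbH}$, where $Z_N=\max_i\norm{N^{-1}\bX_{.i}^\top\ep}_{\mbH}$ is the quantity that must be controlled uniformly over the $I$ predictors.

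The second step is to bound $Z_N$. I would view $N^{-1}\bX_{.i}^\top\ep$ as $\bT_i\ep$ for the linear map $\bT_i(h_1,\dots,h_N)=N^{-1}\sum_n X_{ni}h_n$ from $\mbH^N$ to $\mbH$; by Lemma \ref{l:multi-sub} this is a $C_i$-subgaussian element with covariance $C_i=\bT_i C_N {\bT_i^{*}}^{\top}$, which, because the columns of $\bX$ are standardized so that $\sum_n X_{ni}^2=N$, reduces to $C_i=N^{-1}C$. Lemma \ref{l:subg-Basic} then yields, for each fixed $i$, the tail bound $P(\norm{N^{-1}\bX_{.i}^\top\ep}_{\mbH}^2\ge N^{-1}(\norm{C}_1+2\norm{C}_2\sqrt{t}+2\norm{C}_\infty t))\le e^{-t}$. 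Taking $t=2\log(I)$ and applying a union bound over $i=1,\dots,I$ shows that, with probability tending to one, $Z_N^2\lesssim\norm{C}_\infty\log(I)/N$, i.e.\ $Z_N=O_p(\sqrt{\log(I)/N})$; this is precisely where the $\log(I)$ factor of $r_N$ enters.

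The third step converts the prediction bound into an estimation bound. On the event $\{\gamma_N\ge 2Z_N\}$ (which holds with probability tending to one by the previous step and the order of $\gamma_N$), the standard LASSO cone argument forces the off-support $\ell_1$ mass $\sum_{i\notin\mcS}\norm{\tbi}_{\mbH}$ to be dominated by the on-support deviation, so attention can be restricted to $\mcS$. The eigenvalue control of $\hsig$ in the Design Matrix condition of Assumption \ref{assumption:4p} then supplies, after the cone condition is in force, the restricted-eigenvalue-type inequality $\norm{\bX(\tboldb-\stb)}_{\mbH^N}^2/N\gtrsim\tau^{-1}\sum_{i\in\mcS}\norm{\tbi-\sbi}_{\mbH}^2$ needed to pass from prediction to estimation error. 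Combining this with the basic inequality, the bound $\sum_{i\in\mcS}\norm{\tbi-\sbi}_{\mbH}\le\sqrt{I_0}\,(\sum_{i\in\mcS}\norm{\tbi-\sbi}_{\mbH}^2)^{\haf}$, and $Z_N=O_p(\gamma_N)$ gives the aggregate bound $\sum_{i\in\mcS}\norm{\tbi-\sbi}_{\mbH}^2=O_p(I_0\gamma_N^2)=O_p(I_0\log(I)/N)=O_p(r_N)$. Finally, since a supremum over a finite index set is dominated by the aggregate norm, $\sup_{i\in\mcS}\norm{\tbi-\sbi}_{\mbH}^2\le\sum_{i\in\mcS}\norm{\tbi-\sbi}_{\mbH}^2=O_p(r_N)$, which is the claimed rate $\sup_{i\in\mcS}\norm{\tbi-\sbi}_{\mbH}=O_p(r_N^{\haf})$.

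The main obstacle I expect is the uniform control of the \emph{functional} noise term $Z_N$: unlike the scalar case, each $N^{-1}\bX_{.i}^\top\ep$ is an infinite-dimensional random element, so the union bound must be fed by the Hilbert-space subgaussian tail inequality of Lemma \ref{l:subg-Basic} rather than a one-dimensional tail, and one must verify that the trace-class quantities $\norm{C}_1,\norm{C}_2,\norm{C}_\infty$ are $O(1)$ so that the $\norm{C}_\infty\log(I)/N$ term dominates and produces the correct rate. A secondary technical point is justifying the restricted-eigenvalue inequality in the functional setting from the stated scalar eigenvalue bounds on $\hsig$ together with the Irrepresentable condition, i.e.\ checking that the cone condition emerging from the basic inequality is compatible with the operator interpretation of $\Sigma$ in Definition \ref{d:notationT}.
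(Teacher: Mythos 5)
The paper does not actually prove this lemma: it is imported wholesale from \citet{barber2017function} ("we recall the following lemma from..."), so there is no in-paper argument to compare yours against. Judged on its own merits, your reconstruction follows the standard LASSO template, and your Step 2 is sound and well adapted to this setting: the reduction of $N^{-1}\bX_{.i}^\top\ep$ to an $N^{-1}C$-subgaussian element via column standardization, the tail bound from Lemma \ref{l:subg-Basic} with $t\asymp\log I$, and the union bound over $I$ predictors correctly produce $Z_N=O_p(\sqrt{\log(I)/N})$, which is indeed where the $\log I$ in $r_N$ comes from.

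The genuine gap is in Step 3, and it is not the "secondary technical point" you label it as --- it is the crux. You assert that the Design Matrix condition, "after the cone condition is in force," supplies $\norm{\bX(\tboldb-\stb)}_{\mbH^N}^2/N\gtrsim\tau^{-1}\sum_{i\in\mcS}\norm{\tbi-\sbi}_{\mbH}^2$. But the eigenvalue bounds in Assumption \ref{a:main} control only $\hsig=N^{-1}\bX_1^\top\bX_1$, i.e.\ the quadratic form for increments supported on $\mcS$. A cone vector $\Delta$ has nonzero off-support coordinates, and $\norm{\bX\Delta}^2/N$ contains the cross term $2\Delta_{\mcS^c}^\top\hsigg\Delta_{\mcS}$ and the (possibly singular, since $I\gg N$) term $\Delta_{\mcS^c}^\top\hat\Sigma_{22}\Delta_{\mcS^c}$. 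The irrepresentable condition $\norm{\hsigg\hsig^{-1}}_{op}\le\phi<1$ only yields a cross-term bound of order $\phi\tau\norm{\Delta_{\mcS^c}}\,\norm{\Delta_{\mcS}}$, and feeding in the cone inequality $\sum_{\mcS^c}\norm{\Delta_i}_\mbH\le 3\sum_{\mcS}\norm{\Delta_i}_\mbH$ inflates this by a factor of order $\sqrt{I_0}$, which swamps the lower bound $\tau^{-1}\norm{\Delta_{\mcS}}^2$. So the restricted-eigenvalue inequality you need does not follow from the stated assumptions by this route; in the scalar LASSO literature the compatibility/RE condition is imposed as a separate hypothesis for exactly this reason, and it is what \citet{barber2017function} assume when they prove the rate you are reconstructing. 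To close the gap you would need either to add an RE/compatibility hypothesis (and note that the lemma as used in this paper implicitly carries one), or to find a genuinely different argument for the elementwise bound that does not pass through a prediction-to-estimation conversion over the full cone.
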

\section*{Proof of the Lemma \ref{l:hbeta}}
Let's fix an $i \in \{ 1,\dots,I\}$. We want to find the $\hat \beta_i$ which minimizes the target function \eqref{AFSSEN}. The idea is same as Lemma \ref{l:hbeta2} for a univariate case.
Lets denote $\chb_{i} = \dfrac{1}{N} \mathlarger{\sum\limits}_{n=1}^{N} X_{ni} ( Y_{n} - \mathlarger{\sum\limits}_{j \neq i} X_{nj} \hat\beta_{j})$.
According to the Lemma \ref{l:subdiff}, when $\hbi=0$
\begin{align*}
-K(\chb_{i})+\laa\tw_{i}h=0 \hspace{20mm}  \{ h ; \| K^{-\frac{1}{2}}(h) \|_{\mbK} \leq 1 \}, \\
K^{\frac{1}{2}}(\chb_{i})=\laa \tw_{i} K^{-\frac{1}{2}}(h) \hspace{20mm}  \{ h ; \| K^{-\frac{1}{2}}(h) \|_{\mbK} \leq 1\}.
\end{align*}
So the above equality exists when $\norm{K^{\frac{1}{2}}(\chb_{i})}_{\mbK} \leq \laa \tw_{i}$ or equivalently $ \norm{\chb_{i}}_{\mbH} \leq \laa \tw_{i}$. \\
In the other side, when $ \norm{\chb_{i}}_{\mbH} \geq \laa \tw_{i}$, we have
\begin{align}
& -K(\chb_{i})+K(\hbi)+ \la L^2(\hbi) + \laa \tw_{i} K(\hat{\beta}_i)\norm{\hat{\beta}_i}_{\mbH}^{-1} = 0, \nonumber \\ 
& \left(K+ \la L^2 +\frac{\laa\tw_{i}K}{\norm{\hat{\beta}_i}_{\mbH}} \right) \hbi=K(\chb_{i}), \nonumber \\ 
& \hbi=\left((1 +\frac{\laa\tw_{i}}{\norm{\hat{\beta}_i}_{\mbH}})K+ \la L^2 \right)^{-1} K(\chb_{i}), \nonumber \\
& \hbi=\left((1+\frac{\laa\tw_{i}}{\norm{\hat{\beta}_i}_{\mbH}}) \mathds{I} + \la K^{-1} L^2  \right)^{-1} \chb_{i}. \label{l:numeq}
\end{align} 
% Hence we can say $\hbi$ is the value implies $0 \in \dfrac{\partial L_{\lambda}}{\partial \beta_{i}}(\hbi)$. So if $\norm{\chb_{i}}_{\mbH} \leq \laa \tw_{i}$, then $0 \in \dfrac{\partial L_{\lambda}}{\partial \beta_{i}}(0)$ and the $\hbi=0$ is the value implies the minimum of  $L_{\lambda}$ when other $\beta_j$ for $j \neq i$ are fixed. In the other hand, if $\norm{\chb_{i}}_{\mbH} \geq \laa \tw_{i}$, then $0 \in \dfrac{\partial L_{\lambda}}{\partial \beta_{i}}(\hbi^\star)$ where $\hbi^\star=\left((1+\dfrac{\laa\tw_{i}}{\norm{\hat{\beta}_i}_{\mbH}}) \mathds{I} + \la L^2 K^{-1} \right)^{-1} \chb_{i}$ is the value implies the minimum of  $L_{\lambda}$ with respect to $\beta_i$ and fixed $\beta_j$ for any $j \neq i$.
%%%%%%%%%%%%%%%%%%%%%%
%%%%%%%%%%%%%%%%%%%%%%
%%%%%%%%%%%%%%%%%%%%%%
%%%%%%%%%%%%%%%%%%%%%%
%%%%%%%%%%%%%%%%%%%%%%
\section*{Proof of the Lemma \ref{l:num}}
Taking $\norm{.}_{\mbH}^{2}$ from the both hand side of equation \eqref{l:numeq} 
\begin{align*}
\norm{\hat{\beta}_i}_{\mbH}^2 & =\norm{\left((1+\frac{\laa\tw_{i}}{\norm{\hat{\beta}_i}_{\mbH}}) \mathds{I} + \la K^{-1} L^2  \right)^{-1} \chb_{i}}_{\mbH}^2. \\
\end{align*}
For ease of notation, we use $A=\left((1+\dfrac{\laa\tw_{i}}{\norm{\hat{\beta}_i}_{\mbH}}) \mathds{I} + \la K^{-1} L^2  \right)^{-1}$ for the following parts.
\begin{align*}
 \norm{\hat{\beta}_i}_{\mbH}^2 & =\norm{A \chb_i}_{\mbH}^2 =\sum_{j=1}^{\infty} \langle A\chb_{i} , v_{j} \rangle_{\mbH}^2 = \sum_{j=1}^{\infty} \langle \chb_{i} , Av_{j} \rangle_{\mbH}^2, \\ \\
& = \sum_{j=1}^{\infty} \langle \chb_{i} , \frac{1}{\left(1+\dfrac{\laa\tw_{i}}{\norm{\hat{\beta}_i}_{\mbH}}\right) + \la \eta_j^2 \theta_{j}^{-1}} v_{j} \rangle_{\mbH}^2, \\ \\
& = \sum_{j=1}^{\infty} \frac{\langle \chb_{i} , v_{j} \rangle_{\mbH}^2 } { \left( (1+\dfrac{\laa\tw_{i}}{\norm{\hat{\beta}_i}_{\mbH}}) + \la \eta_j^2 \theta_{j}^{-1} \right)^2},
\end{align*}
or equivalently
\begin{align*}
1 = \sum_{j=1}^{\infty} \frac{ \langle \chb_{i} , v_{j} \rangle_{\mbH}^2 }{\left( (1 +\la\eta_j^2 \theta_j^{-1} )\norm{\hat{\beta}_i}_{\mbH} + \laa \tw_{i}\right)^2}.
\end{align*}
%%%%%%%%%%%%%%%%%%%%%%%%%
%%%%%%%%%%%%%%%%%%%%%%%%%
%%%%%%%%%%%%%%%%%%%%%%%%%
%%%%%%%%%%%%%%%%%%%%%%%%%
%%%%%%%%%%%%%%%%%%%%%%%%%
%%%%%%%%%%%%%%%%%%%%%%%%%
\section*{Proof of Theorem \eqref{t:w.oracle}}
%%%%%%%%%%%%%%%%%%%%
%%%%%%%%%%%%%%%%%%%%
\textbf{{part 1:}} \\
One can see
\begin{align} \label{l:support}
\hat{\mcS} = \mcS \longleftrightarrow \left\{
                \begin{array}{ll}
                 \hbi \neq 0 \qquad & \quad \forall i \in \mcS \\
                  \\ \\
\hbi=0 \qquad & \quad\forall i \not\in \mcS \\
                \end{array}
\right.
\end{align}
where the $\hat{\mcS}$ and $\mcS$ are support of the estimated AFSSEN and true predictors respectively. So we can see \eqref{l:support} can be induced from\\
\begin{align*}
\left\{
                \begin{array}{ll}
                 \| \sbi - \hbi  \|_{\mbH} <  \| \sbi \|_{\mbH} \qquad & \quad \forall i \in \mcS \\
                  \\ \\
\norm{\dfrac{1}{N} \bX_{.i}^\top (\bY-\bX_{1}\hboldb_{1})}_{\mbH} \leq \laa\tw_{i} \qquad & \quad\forall i \not\in \mcS \\
                \end{array}
\right.
\end{align*} 
or equivalently
\begin{align*}
\left\{
                \begin{array}{ll}
                 \| \eit ( \stb_{1} - \hboldb_{1} ) \|_{\mbH}  <  \| \sbi \|_{\mbH} \qquad & \quad \forall i \in \mcS \\
                  \\ \\
\dfrac{1}{N} \norm{ \bX_{.i}^{\top} (\bY - \bX_{1} \hboldb_{1}) }_{\mbH} \leq \laa\tw_{i} \qquad & \quad\forall i \not\in \mcS \\
                \end{array}
\right.
\end{align*} 
According to Lemma \ref{l:hbeta2} we have
\begin{align*}
\hboldb_{1} & = \hsig G_{I_0} (\stb_{1}) + G_{I_0} \left(  N^{-1} \bX_{1}^\top \ep - \laa \sbb \right), \\
\hboldb_{1} - \stb_{1} & =  \left( \hsig G_{I_0} - \mathds{I}_{I_0} \right) \stb_{1} + G_{I_0} \left( N^{-1} \bX_{1}^\top \ep - \laa \sbb \right) \\
& =  G_{I_0} \left( \hsig \mathds{I}_{I_0} - G_{I_0}^{-1} \right) \stb_{1} + G_{I_0} \left( N^{-1} \bX_{1}^\top \ep - \laa \sbb \right) \\
& =  G_{I_0} \left( \hsig \mathds{I}_{I_0} - \left(\hsig \mathds{I}_{I_0}+\la K_{I_0}^{-1} L_{I_0}^2 \right) \right) \stb_{1} + G_{I_0} \left( N^{-1} \bX_{1}^\top \ep - \laa \sbb \right) \\
& = - \la G_{I_0} K_{I_0}^{-1} L^2_{I_0}  (\stb_{1}) + N^{-1} G_{I_0}(\bX_{1}^\top \ep) - \laa G_{I_0}(\sbb),
\end{align*} 
where $G_{I_0} = \left(\hsig \mathds{I}_{I_0}+\la K_{I_0}^{-1} L_{I_0}^2 \right)^{-1}$. In some sense, $ -\la G_{I_0} K_{I_0}^{-1} L^2_{I_0} (\stb_{1})$ and $ N^{-1} G_{I_0}(\bX_{1}^\top \ep) - \laa G_{I_0}(\sbb)$ play the role of \textit{Bias} and \textit{Variance} respectively.
%%%%%%%%%%%%%%%%%%%%%%%%%%%%%
Since $\hbi=0$ for all $i \not\in S$, we can see
\begin{align*}
& \dfrac{1}{N} \bX_{.i}^{\top} (\bY - \bX_{1} \hboldb_{1}) = \frac{1}{N} \bX_{.i}^{\top}  \left( \bX_{1} (\stb_{1} - \hboldb_{1} ) + \ep \right) \\
& = \frac{1}{N} \bX_{.i}^{\top}  \left[ \bX_{1} \left( \la G_{I_0} K_{I_0}^{-1} L^2_{I_0} (\stb_{1}) + \laa G_{I_0}(\sbb) - N^{-1} G_{I_0}(\bX_{1}^\top \ep) \right) + \ep \right]  \\
& = \frac{1}{N} \bX_{.i}^{\top} \left[ \la \bX_1 G_{I_0} K_{I_0}^{-1} L^2_{I_0} (\stb_{1}) + \laa \bX_1 G_{I_0}(\sbb) + \left(\ep - N^{-1} \bX_1 G_{I_0}(\bX_1^\top \ep) \right) \right] \\
& = \frac{1}{N} \bX_{.i}^{\top} \left[ \la \bX_1 G_{I_0} K_{I_0}^{-1} L^2_{I_0} (\stb_{1}) + \laa \bX_1 G_{I_0}(\sbb) + \left(\mathds{I}_{N} - \bX_1 \left(\bX_{1}^\top \bX_{1}  \mathds{I}_{I_0} + \la N K_{I_0}^{-1} L^2_{I_0} \right)^{-1} \bX_1^\top \right) \ep \right] \\
& = \frac{1}{N} \bX_{.i}^{\top} \left[ \la \bX_1 G_{I_0} K_{I_0}^{-1} L^2_{I_0} (\stb_{1}) + \laa \bX_1 G_{I_0}(\sbb) + H_N \ep \right],
\end{align*}
where
\begin{align*}
H_N=\left(\mathds{I}_{N} - \bX_1 \left(\bX_{1}^\top \bX_{1}  \mathds{I}_{I_0} + \la N K_{I_0}^{-1} L^2_{I_0} \right)^{-1} \bX_1^\top \right).
\end{align*}
%%%%%%%%%%%%%%%%%%%%%%%%%%%%
%%%%%%%%%%%%%%%%%%%%%%%%%%%%
So with using above achievements we can see \eqref{l:support} is equivalent to
\begin{align*}
\left\{
\begin{array}{ll}
    \norm{ \eit \left( - \la G_{I_0} K_{I_0}^{-1} L^2_{I_0}  (\stb_{1}) + N^{-1} G_{I_0}(\bX_{1}^\top \ep) - \laa G_{I_0}(\sbb) \right)}_{\mbH} < \norm{\sbi}_{\mbH} \qquad & \quad \forall i \in \mcS \\
                  \\ \\
\norm{ \frac{1}{N} \bX_{.i}^{\top} \left[ \la \bX_1 G_{I_0} K_{I_0}^{-1} L^2_{I_0} (\stb_{1}) + \laa \bX_1 G_{I_0}(\sbb) + H_N \ep \right] }_{\mbH} \leq \laa\tw_{i} \qquad & \quad\forall i \not\in \mcS \\
                \end{array}
\right.
\end{align*}
It is easy to see that $\{\hat{\mcS} \neq \mcS\} \subseteq \bigcup\limits_{i=1}^{6} B_{i}$ where
\begin{align*}
B_{1} & =\bigg\{\la \norm{ \eit G_{I_0} K_{I_0}^{-1} L^2_{I_0}  (\stb_{1})}_{\mbH} \geq \dfrac{\norm{\sbi}_{\mbH}}{3} & i \in \mcS \bigg\}, \\
B_{2} & =\bigg\{\frac{1}{N} \norm{ \eit G_{I_0}(\bX_{1}^\top \ep)}_{\mbH} \geq \dfrac{\norm{\sbi}_{\mbH}}{3}  & i \in \mcS \bigg\}, \\
B_{3} & =\bigg\{\laa \norm{ \eit G_{I_0}(\sbb) }_{\mbH}\geq \dfrac{\norm{\sbi}_{\mbH}}{3} & i \in \mcS \bigg\}, \\
B_{4} & =\bigg\{\frac{\la}{N} \norm{ \bX_{.i}^\top \bX_1 G_{I_0} K_{I_0}^{-1} L^2_{I_0} (\stb_{1})}_{\mbH} \geq \dfrac{\laa\tw_{i}}{3} & i \not\in \mcS \bigg\}, \\
B_{5} & =\bigg\{\frac{1}{N} \norm{\bX_{.i}^\top  \bX_1 G_{I_0}(\sbb)}_{\mbH} \geq \dfrac{\tw_{i}}{3}  & i \not\in \mcS \bigg\}, \\
B_{6} & =\bigg\{\frac{1}{N} \norm{\bX_{.i}^\top H_N \ep}_{\mbH} \geq \dfrac{\laa\tw_{i}}{3} & i \not\in \mcS \bigg\}. \\
\end{align*} 
So for proving Theorem \ref{t:w.oracle}, we just need to show that $P(B_{i})$ asymptotically goes to zero for all $i = 1, \dots, 6$. \\ \\
%%%%%%%%%%%%%%%%%%%%
%%%%%%%%%%%%%%%%%%%%
\textbf{Step 1:} $P(B_{1}) \rightarrow 0$ \\
Recall that
\begin{align*}
B_{1} & =\bigg\{\la \norm{ \eit G_{I_0} K_{I_0}^{-1} L^2_{I_0}  (\stb_{1})}_{\mbH} \geq \dfrac{\norm{\sbi}_{\mbH}}{3} & i \in \mcS \bigg\},
\end{align*}
where $G_{I_0} = \left(\hsig \mathds{I}_{I_0}+\la K_{I_0}^{-1} L_{I_0}^2 \right)^{-1}$. We aim to show that $\dfrac{\la \norm{ \eit G_{I_0} K_{I_0}^{-1} L^2_{I_0}  (\stb_{1})}_{\mbH}}{\norm{\sbi}_{\mbH}} \rightarrow 0 $. We can write
\begin{align}
& \frac{\la \norm{ \eit G_{I_0} K_{I_0}^{-1} L^2_{I_0}  (\stb_{1})}_{\mbH}}{\norm{\sbi}_{\mbH}} = \frac{\la \norm{ \eit G_{I_0} K_{I_0}^{-1/2} L^2_{I_0} \dfrac{K_{I_0}^{-1/2}(\stb_{1})}{\norm{K_{I_0}^{-1/2}(\stb_{1})}_{\mbH^{I_0}}}}_{\mbH} \norm{K_{I_0}^{-1/2}(\stb_{1})}_{\mbH^{I_0}}}{\norm{\sbi}_{\mbH}}  \nonumber \\ \nonumber \\
&\leq \frac{\la \norm{\eit G_{I_0} K_{I_0}^{-1/2} L^2_{I_0}}_{op} \norm{K_{I_0}^{-1/2}(\stb_{1})}_{\mbH^{I_0}} }{\norm{\sbi}_{\mbH}} \leq \frac{ \la \norm{\eit G_{I_0} K_{I_0}^{-1/2} L^2_{I_0}}_{op} \norm{\stb_{1}}_{\mbK^{I_0}} }{ \min\limits_{i \in \mcS}\norm{\sbi}_{\mbH}} \nonumber \\ \nonumber
& \leq \frac{ I_{0}^{\nicefrac{1}{2}} \max\limits_{i \in \mcS} \norm{\sbi}_{\mbK} }{ \min\limits_{i \in \mcS}\norm{\sbi}_{\mbH}} \la \norm{ \eit G_{I_0} K_{I_0}^{-1/2} L^2_{I_0}}_{op} \leq \dfrac{ I_0^{\nicefrac{1}{2}} d_{N}}{b_{N}} \la \norm{ \eit G_{I_0} K_{I_0}^{-1/2} L^2_{I_0}}_{op} \nonumber \\
& \leq \dfrac{ I_0^{\nicefrac{1}{2}} d_{N}}{b_{N}} \la \norm{\eit} \norm{ G_{I_0} K_{I_0}^{-1/2} L^2_{I_0}}_{op} = \dfrac{ I_0^{\nicefrac{1}{2}} d_{N}}{b_{N}} \la \norm{ G_{I_0} K_{I_0}^{-1/2} L^2_{I_0}}_{op},  \label{p:B1.1}
\end{align}
\\
where $d_{N}= \max\limits_{i \in \mcS}\norm{\sbi}_{\mbK} $ and $b_{N}=\min\limits_{i \in \mcS}\norm{\sbi}_{\mbH}$. So We need to find an upper bound for $ \norm{G_{I_0} K_{I_0}^{-1/2} L^2_{I_0}}_{op}$ which is  the maximum eigenvalue of $G_{I_0} K_{I_0}^{-1/2} L^2_{I_0}$. According to the tensor product definition \citep{Kokoszka2017Introduction}, we have
\begin{align*}
G_{I_0} K_{I_0}^{-1/2} L^2_{I_0} & = \left(\hsig \mathds{I}_{I_0}+\la K_{I_0}^{-1} L_{I_0}^2 \right)^{-1} K_{I_0}^{-1/2} L^2_{I_0} \\
& = \left(\hsig K_{I_0}^{1/2} L^{-2}_{I_0} + \la K_{I_0}^{-1/2} \right)^{-1} \\
& = (\hsig \otimes K^{1/2}L^{-2} + \la \bI_{I_0} \otimes K^{-1/2})^{-1}.
\end{align*}
where $\bI_{I_0}$ is an identity $I_0$ by $I_0$ matrix. In order to find the eigenvalues of $G_{I_0} K_{I_0}^{-1/2} L^2_{I_0}$, let's denote $u_i$ as the eigenfunction of $\hsig$ and $v_j$ as the eigenfunctions of $K$ and $L$. Then
\begin{align*}
(\hsig \otimes K^{1/2}L^{-2} + \la \bI_{I_0} \otimes K^{-1/2})(u_i \otimes v_j) & = (\tau_i \theta_j^{1/2}\eta_j^{-2} + \la \theta_j^{-1/2})^{-1}(u_i \otimes v_j),
\end{align*}
where $\tau_i$ is the eigenvalue of $\hsig$ and then $(\tau_i \theta_j^{1/2}\eta_j^{-2} + \la \theta_j^{-1/2})^{-1}$ can be considered as the eigenvalues of 
 $G_{I_0} K_{I_0}^{-1/2} L^2_{I_0}$.
Since
\begin{align} \label{t:GKL}
\dfrac{1}{\tau_i \theta_j^{1/2}\eta_j^{-2} + \la \theta_j^{-1/2}} \leq \dfrac{1}{\tau^{-1} \eta_1^{-2} \theta_j^{1/2} + \la \theta_j^{-1/2}},
\end{align}
the maximum value of \eqref{t:GKL} occures in $\theta_j = \la \tau \eta_1^2$. Then
\begin{align} \label{s:GK1/2L}
\norm{G_{I_0} K_{I_0}^{-1/2} L^2_{I_0}}_{op} \leq \dfrac{\eta_1 \sqrt{\tau}}{2 \sqrt{\la}}.
\end{align}
So we can conclude
\begin{align*}
\dfrac{ I_0^{\nicefrac{1}{2}} d_{N}}{b_{N}} \la \norm{ G_{I_0} K_{I_0}^{-1/2} L^2_{I_0}}_{op} \leq \dfrac{ I_0^{\nicefrac{1}{2}} d_{N}}{b_{N}}\dfrac{\eta_1 \sqrt{\la \tau}}{2}.
\end{align*}

% ap{l:}ncited based on the fact that
% \begin{align*}
% \left(\hsig L^{-1} K^\frac{1}{2}+\la K^{-\frac{1}{2}}\right) (v_{j}) \geq \left(\frac{\sqrt{\theta_{j}}}{\tau_1 \eta_j}+\frac{\la}{\sqrt{\theta_{j}}}\right) (v_{j}) 
% \end{align*}
% the upper bound for $\la GL$ will be
% \begin{align*}
% \la GL= \la \left(\hsig L^{-1} K^\frac{1}{2}+\la K^{-\frac{1}{2}}\right)^{-1}(v_j) \leq \frac{\la \tau_1 \eta_j \sqrt{\theta_j}}{\theta_j + \la \tau_1 \eta_j} (v_{j})
% \end{align*}
% So the $\la \norm{GL}_{op}$ will be bounded from above with the maximum value of $f(\theta_j , \eta_j)= \dfrac{\la \tau_1 \eta_j \sqrt{\theta_j}}{\theta_j + \la \tau_1 \eta_j }$. When  $\eta_1 \not = 0$ we can see
% \begin{align*}
% \frac{1}{f(\theta_j , \eta_j)} =\frac{\sqrt{\theta_j}}{\la \tau_1 \eta_i} + \frac{1}{\sqrt{\theta_j}} \geq \frac{\sqrt{\theta_j}}{\la \tau_1 \eta_1} + \frac{1}{\sqrt{\theta_j}}
% \end{align*}
% and then the minimum value of $\dfrac{\sqrt{\theta_j}}{\la \tau_1 \eta_1} + \dfrac{1}{\sqrt{\theta_j}}$ occures in $\theta_j=\la \tau_1 \eta_1$ and it will be $\dfrac{2}{\sqrt{\la \tau_1 \eta_1}}$.
% So we can conclude
% \begin{align} \label{p.B1:GL}
% \la \norm{e_i^\top G L}_{op} \leq \frac{\sqrt{\la \tau_1 \eta_1}}{2}
% \end{align}
% Hence based on \eqref{p:B1.1} we have
% \begin{align*}
% \frac{\la \norm{ e_{i}^\top G L K^{-\frac{1}{2}}(\stb_{1})}_{\mbH}}{\norm{\stb_{i}}_{\mbH}} \leq \dfrac{ I_0^{\haf} \sqrt{\la \tau_1 \eta_1} d_{N}}{2 b_{N}}
% \end{align*}
If we assume $\la\ll \dfrac{b_N^2}{d_N^2 I_0}$, we can easily see $P(B_1) \rightarrow 0$ asymptotically. \\ \\
 %%%%%%%%%%%%%%%%%%%%%%
%%%%%%%%%%%%%%%%%%%%%%
%%%%%%%%%%%%%%%%%%%%%%
%%%%%%%%%%%%%%%%%%%%
\textbf{Step 2:} $P(B_{2}) \rightarrow 0$ \\
Recall that 
\begin{align*}
B_{2}  =\bigg\{\frac{1}{N} \norm{ e_{i}^\top  G_{I_0}(\bX_{1}^\top \ep)}_{\mbH} \geq \dfrac{\norm{\sbi}_{\mbH}}{3}  \hspace{1cm}  ;i \in \mcS \bigg\},
\end{align*}
where $G_{I_0} = \left(\hsig \mathds{I}_{I_0}+\la K_{I_0}^{-1} L_{I_0}^2 \right)^{-1}$.
% where  
% \begin{align*}
% & e_i^\top \in \mbR^{I_0} \\
% & G \KIh =\left( \hsig K_{I_0}^{\nicefrac{1}{2}} +\la K_{I_0}^{-\nicefrac{1}{2}} L_{I_0}^2 \right)^{-1} \KIh \\ & \textbf{X}_1^\top \in \mbR^{I_0 \times N} \\
% & \ep \in \mbH^N.   
% \end{align*}
% In special case when $\lambda_K=0$, then we have $G \KIh = \hsig^{-1}$.
We notice $B_{1} = \bigcup\limits_{i \in \mcS} A_{i}$ such that
\begin{align*}
A_{i} & =\bigg\{\frac{1}{N} \norm{e_{i}^\top  G_{I_0}(\bX_{1}^\top \ep)}_{\mbH} \geq \dfrac{\norm{\sbi}_{\mbH}}{3}  \bigg\} \\
& = \bigg\{\norm{ Q_i(\bX_{1}^\top \ep)}_{\mbH}^2 \geq \dfrac{\norm{\sbi}_{\mbH}^2}{9}  \bigg\},
% & =\bigg\{ \| G \KK(T_{i}) \|_{\mbH} \geq  \dfrac{\norm{\sbi}_{\mbH}}{3} \bigg\} = \bigg\{ \| G \KK(T_{i}) \|_{\mbH}^2 \geq \ \dfrac{\norm{\sbi}_{\mbH}^2}{9} \bigg\} \\
\end{align*}
% where $T_i= N^{-1} e_i^\top \bX_1^\top \ep$. Then we can conclude
where 
\begin{align} \label{p:Qi}
Q_i = N^{-1} e_i^{\top} G_{I_0}
\end{align}
is a continuous linear operator from $\mbH^{I_0}$ to $\mbH$. Then we can see
\begin{align} \label{B1:1}
P(B_{1}) \leq \sum\limits_{i \in \mcS} P(A_{i}) = \sum\limits_{i \in \mcS} P\left( \norm{Q_i (\bX_{1}^\top \ep)}_{\mbH}^2 \geq \dfrac{\norm{\sbi}_{\mbH}^2}{9} \right) \leq  \sum\limits_{i \in \mcS} P\left( \norm{Q_i (\bX_{1}^\top \ep)}_{\mbH}^2 \geq \dfrac{b_{N}^2}{9}\right),
\end{align}
where $b_{N}=\min\limits_{i \in \mcS}\norm{\sbi}_{\mbH}$.
% \begin{align} \label{B1:1}
% P(B_{1}) \leq \sum\limits_{i \in S} P(A_{i}) = \sum\limits_{i \in S} P\left( \| G \KK(T_{i}) \|_{\mbH}^2 \geq \dfrac{\norm{\sbi}_{\mbH}^2}{9} \right) \leq  \sum\limits_{i \in S} P\left( \| G \KK(T_{i}) \|_{\mbH}^2 \geq \dfrac{b_{N}^2}{9}\right)
% \end{align}
So we just need to find an upper bound for the right hand side of \eqref{B1:1}. \\
Since $\ep=(\epsilon_1,\dots,\epsilon_N) \in \mbH^N$ where $\epsilon_i$ are independent mean zero $C$-subgaussian process in $\mbH$, then $\bX_{1}^{\top} \ep$ will be a $C_{1}$-subgaussian in $\mbH^{I_0}$ such that
\begin{align*}
C_1 = \bX_1^{\top} \bX_1 C_{I_0} = N \hsig C_{I_0},
\end{align*}
where, since $C_{I_0}$ is applied coordinate wise, $\hat \Sigma_{11}$ and $C_{I_0}$ are interchangeable and thus this is a valid covariance matrix.  Based on an extension of Lemma \ref{l:multi-sub} in $\mbH^{I_0}$, $Q_i (\bX_1^{\top} \ep)$ will be a $C_q$-subgaussian process with
% \todo[color=green]{check!} % Mat made small change
\begin{align*}
C_q = (Q_i)(N \hsig C_{I_0})(Q_i^{\top}).
\end{align*}
According to Lemma \ref{l:subg-Basic} we have
\begin{align*}
P\left(\norm{Q_i \bX_1^{\top} \ep}_{\mbH}^2 \geq  \norm{C_{q}}_{1} + 2 \norm{C_{q}}_{2} \sqrt{t} + 2 \norm{C_{q}}_{\infty} t \right) \leq e^{-t}.
\end{align*}
Then based on Lemma \ref{l:PCP} and \eqref{p:Qi}
% \todo{put an equation reference to where $G_{I_0}$ and and $Q_i$ are defined. I added an "op" to the norm with Sigma.}
\begin{align*}
\norm{C_q}_m \leq \norm{Q_i}_{op}^2 \norm{N \hsig}_{op} \norm{C}_{m} \leq N^{-1} \norm{G_{I_0}}_{op}^2 \tau \norm{C}_m.
\end{align*}
So we are going find an upper bound for $\norm{G_{I_0}}_{op}$. Using tensor product notation as in step 1, we have
% \todo{so the one identity is over $\mbH$, the other is a matrix? I would maybe change the notation a bit.  Maybe like a bold script I for identity matrix and your mathds I for the operator?}
\begin{align*}
G_{I_0} = \left(\hsig \mathds{I}_{I_0}+\la K_{I_0}^{-1} L_{I_0}^2 \right)^{-1} = (\hsig \otimes \mathds{I} + \la \bI_{I_0} \otimes K^{-1} L^2)^{-1}.
\end{align*}
where $\mathds{I}$ is an identity operator from $\mbH$ to $\mbH$ and $\bI_{I_0}$ is an identity $I_0$ by $I_0$ matrix. Now we can write the eigenvalues of $G_{I_0}$ as
\begin{align*}
G_{I_0} (u_i \otimes v_j) = \left( \tau_i + \la \theta_j^{-1} \eta_j^2 \right)^{-1} (u_i \otimes v_j).
\end{align*}
According to Assumption \ref{a:1.3}
\begin{align}  \label{t:GI0}
\norm{G_{I_0}}_{op} = \max  \left( \tau_i + \la \theta_j^{-1} \eta_j^2 \right)^{-1} = \dfrac{1}{\tau_i}\max \dfrac{\tau_i \theta_j}{\tau_i \theta_j + \la \eta_j^{2}} \leq \dfrac{1}{\tau_i} \leq \tau,
\end{align}
then we can conclude
\begin{align*}
\norm{C_q}_m \leq N^{-1} \tau^3 \norm{C}_m,
\end{align*}
and finally we will have
\begin{align*}
P\left(\norm{Q_i \bX_1^{\top} \ep}_{\mbH}^2 \geq \dfrac{\tau^3}{N} ( \norm{C}_{1} + 2 \norm{C}_{2} \sqrt{t} + 2 \norm{C}_{\infty} t) \right) \leq e^{-t}.
\end{align*}
So we are looking to find a $\hat{t}$ such that
\begin{align*}
\frac{b_{N}^2}{9} \geq \frac{\tau^3}{N} \left( \norm{C}_{1} + 2 \norm{C}_{2} \sqrt{\hat{t}} + 2 \norm{C}_{\infty} \hat{t} \right).
\end{align*}
Since $C$ is a covariance operator, its nuclear property will implify there exists a constant D which 
\begin{align} \label{l:CDT}
( \norm{C}_{1} + 2 \norm{C}_{2} \sqrt{t} + 2 \norm{C}_{\infty} t) \leq Dt,
\end{align}
then
\begin{align*}
\frac{b_{N}^2}{9} \geq \frac{ \tau^3}{N} D \hat{t}.
\end{align*}
We Choose $\hat{t} = \dfrac{N b_{N}^2}{9\tau^3 D } $. Therefore  \eqref{B1:1} will be written as
\begin{align} \label{B1:2}
P(B_{1}) \leq \sum\limits_{i \in \mcS} \exp\left(- \dfrac{N b_{N}^2}{9\tau^3 D }\right) \leq  \exp\left(- \dfrac{N b_{N}^2}{9\tau^3 D } + \log (I_{0})\right).
\end{align}
According to Assumption \ref{assumption:4p},
% \todo{make sure you capitalize Assumption, Theorem, etc (I fixed this one)}, 
the right hand side of \eqref{B1:2} goes to zero because $N b_N^2 \to \infty$ and  %\todo[color=green]{check!} % Matt simplified it.
\begin{align*}
N b_{N}^2 \gg I_{0}^2 \log(I)  
\Longrightarrow N b_N^2 \gg \log(I_0).
% -\frac{N b_{N}^2}{9 \tau^3 D} + \log(I_{0}) \ll - \frac{I_{0}^2 \log(I)}{9 \tau^3 D} + \log(I_{0}) \rightarrow - \infty.
\end{align*} \\ \\ 
\textbf{step 3:} $P(B_{3}) \rightarrow 0$ \\
Recall that
\begin{align} \label{B3:1}
B_{3} =\bigg\{\laa \norm{ \eit G_{I_0}(\sbb)}_{\mbH}\geq \dfrac{\norm{\sbi}_{\mbH}}{3} \hspace{1cm} ; \ i \in \mcS \bigg\},
\end{align}
where $\sbb=\{ \tilde{w}_{i} \hbi \norm{\hbi}_{\mbH}^{-1} ; i \in \mcS \}$. Our aim is to show that $\dfrac{\laa \norm{\eit G_{I_0}(\sbb)}_{\mbH}}{ \norm{\sbi}_{\mbH}} \rightarrow 0 $.
By using the Assumption \ref{assumption:4p} we can write
\begin{align} \label{p:B3}
\frac{\laa \norm{ \eit G_{I_0}(\sbb) }_{\mbH}}{\norm{\sbi}_{\mbH} } \leq \frac{\laa \norm{ \eit G_{I_0}\dfrac{\sbb}{\norm{\sbb}_{\mbH^{I_0}}}}_{\mbH} \norm{\sbb}_{\mbH^{I_0}}}{\min\limits_{i \in \mcS}\norm{\sbi}_{\mbH} } \leq  \frac{\laa \norm{G_{I_0}}_{op} \norm{\sbb}_{\mbH^{I_0}}}{b_N }.
\end{align}
So we need to find the upper bounds of $ \norm{ G_{I_0}}_{op}$ and $\norm{\sbb}_{\mbH^{I_0}}$. \\ \\
First, same as what we did in \eqref{t:GI0}, we have
\begin{align} \label{p:B3.GK}
\norm{G_{I_0}}_{op} \leq \tau.
\end{align}
Second, we can write
\begin{align*}
\norm{\sbb}_{\mbH^{I_0}}^2 =\sumiS \tw_i^2  \norm{\hbi}_{\mbH}^{-2} \norm{\hbi}_{\mbH}^2
= \sumiS w_i^2 + \sumiS (\tw_i^2 - w_i^2),
\end{align*}
where $\tw_i=\norm{\tilde{\beta}_i}_{\mbH}^{-1}$ and 
$w_i=\norm{\sbi}_{\mbH}^{-1}$. \\ \\
With using \textit{Taylor Expansion} for functional data $f(x+h)-f(x)= \langle h , f^{\prime}(x) \rangle + o(h^2)$ where  $f(x)=\dfrac{1}{\norm{x}^2}$ and $f^{\prime}(x) = \dfrac{-2x}{ \norm{x}^4}$, we can write
\begin{align*}
\tw_i^2 - w_i^2 = \frac{1}{\norm{\tilde{\beta}_i}_{\mbH}^2}-\frac{1}{\norm{\sbi}_{\mbH}^2} \approx \left\langle \tbi - \sbi , \frac{-2}{\norm{\sbi}_{\mbH}^4} \sbi \right\rangle_{\mbH}.
\end{align*}
According to \textit{Cauchy-Schwarz} inequality and Lemma \ref{l:btilde}
\begin{align*}
& \tw_i^2 - w_i^2 = \frac{2}{\norm{\sbi}_{\mbH}^4}\left\langle \tbi - \sbi ,  \sbi \right\rangle_{\mbH} \leq \frac{2}{\norm{\sbi}_{\mbH}^3}\norm{ \tbi - \sbi }_{\mbH} \\ \\
& \leq \frac{2}{b_{N}} \norm{\tbi-\sbi}_{\mbH} \frac{1}{\norm{\sbi}_{\mbH}^2} \leq 
\frac{2}{b_{N}} \left( \sup\norm{\tbi-\sbi}_{\mbH} \right) \frac{1}{\norm{\sbi}_{\mbH}^2} \\ \\
& = \frac{2}{ b_{N}} O_{p}(r_{N}^{\frac{1}{2}}) w_{i}^2,
\end{align*}
where $r_N = \dfrac{I_0 \log{(I)}}{N}$. 
% \todo{changed this formula so to be consistent with what you had before, double check it.  You might want to say exactly what $r_N$ is here.} 
By using Assumption \ref{assumption:4p} we can see $\dfrac{r_N^{\haf}}{b_N} \rightarrow 0$ and then
\begin{align} \label{B3:shat2}
\norm{\sbb}_{\mbH^{I_0}}^2 \leq \left(\sumiS w_{i}^2 \right) \left(1+\frac{ 2 O_{p}(r_{N}^{\frac{1}{2}})}{b_{N}}\right)= \left(\sumiS w_{i}^2 \right) \left(1 + o_p(1) \right) \leq \frac{I_{0}}{b_{N}^2} \left(1+o_{p}(1) \right).
\end{align}

According to \ref{p:B3}, we can conclude
\begin{align*}
\dfrac{\laa \norm{\eit G_{I_0} \sbb}_{\mbH}}{\norm{\sbi}_{\mbH} } \leq \frac{\laa \tau \sqrt{I_{0}} \left(1+o_{p}(1) \right)^{\haf} }{b_N^2 } \rightarrow 0,
\end{align*}
so if $\laa \ll \dfrac{b_N^2}{\sqrt{I_0}}$, then $P(B_3) \rightarrow 0$. \\ \\ 
\textbf{step 4:} $P(B_{4}) \rightarrow 0$ \\
The idea is same as Part 1. Recall that
\begin{align*}
B_{4} & =\bigg\{\frac{\la}{N} \norm{ \bX_{.i}^\top \bX_1 G_{I_0} K_{I_0}^{-1} L^2_{I_0} (\stb_{1})}_{\mbH} \geq \dfrac{\laa\tw_{i}}{3} \hspace{1cm} ;  i \not\in \mcS \bigg\},
\end{align*}
where $G_{I_0} = \left(\hsig \mathds{I}_{I_0}+\la K_{I_0}^{-1} L_{I_0}^2 \right)^{-1}$. We aim to show $ \dfrac{\la \norm{ \bX_{.i}^\top \bX_1 G_{I_0} K_{I_0}^{-1} L^2_{I_0} (\stb_{1})}_{\mbH}}{N \laa\tw_{i}  } \rightarrow 0 $.
According to Lemma \ref{l:btilde}, Assumption \ref{a:1.4} and equation \eqref{s:GK1/2L} we have \\
\begin{align*}
& \frac{ \la \norm{ \bX_{.i}^\top \bX_1 G_{I_0} K_{I_0}^{-1} L^2_{I_0} (\stb_{1})}_{\mbH}}{ N \laa\tw_{i}} \leq 
\frac{ \la \norm{\hsigg \hsig^{-1} }_{op} \norm{\hsig}_{op} \norm{ G_{I_0} K_{I_0}^{-1/2} L^2_{I_0}}_{op}\norm{K_{I_0}^{-1/2}(\stb_1)}_{\mbH^{I_0}}}{ \laa\tw_{i}} \\
& \leq  \frac{ \sqrt{\la \tau} \eta_1 \phi \tau \norm{\stb_1}_{\mbK^{I_0}}}{ 2 \laa} \sup\limits_{i \not\in \mcS}(\tw_{i}^{-1})
\leq \frac{\sqrt{\la \tau} \eta_1 \phi \tau I_{0}^{\haf} \max\limits_{i \in \mcS}\norm{\sbi}_{\mbK}}{ 2 \laa} \sup\limits_{i \not\in \mcS}(\norm{\tilde{\beta}_{i} -0}_{\mbH}) =  \frac{\sqrt{\la \tau} \eta_1 \phi \tau I_{0}^{\haf} d_N}{ 2 \laa} r_{N}^{\frac{1}{2}} O_{p}(1).
\end{align*}
% \todo{I thought there were issues, but it checks out.  You might want to add more detail.  Like remind what $r_N$ is and point out how you got the $I_0^{1/2}$.}
where $r_N = \dfrac{I_0 \log{(I)}}{N}$. So if $ \dfrac{I_0 \sqrt{\log(I)}d_N}{\sqrt{N}} \ll \dfrac{\laa}{\sqrt{\la}}$, then $P(B_4) \rightarrow 0$ asymptotically. \\ \\ 

% \newpage
% So based on \eqref{p.B1:GL} and assumption \eqref{assumption:4p} we can write
% \begin{align*}
% & \frac{\la \norm{X_{i}^\top \bX_1 G L K^{-\haf}(\stb_1) }_{\mbH}}{N \laa\tw_{i}} = \frac{\la \norm{ \hsigg G L K^{-\haf}(\stb_1)}_{\mbH}}{ \laa\tw_{i}} \\ \\
% & = \frac{ \la \norm{ \hsigg G L \dfrac{K^{-\haf}(\stb_1)}{\norm{K^{-\haf}(\stb_1)}_{\mbH}}}_{\mbH} \norm{K^{-\haf}(\stb_1)}_{\mbH}}{ \laa\tw_{i}} \leq \frac{\la \norm{ \hsigg G L}_{op}\norm{K^{-\haf}(\stb_1)}_{\mbH}}{ \laa\tw_{i}} \\ \\
% & = \frac{\la \norm{ \hsigg \hsig^{-1} \hsig G L}_{op}\norm{\stb_1}_{\mbK}}{ \laa\tw_{i}}
% \leq \frac{ \norm{\hsigg \hsig^{-1} }_{op} \norm{\hsig}_{op} \ \la \norm{ G L}_{op}\norm{\stb_1}_{\mbK}}{ \laa\tw_{i}} \\ \\
% & \leq  \frac{ \phi \tau_{1} \sqrt{\la \tau_1 \eta_1} \norm{\stb_1}_{\mbK}}{ 2 \laa} \sup\limits_{i \not\in S}(\tw_{i}^{1})
% \leq \frac{\phi \tau_{1} \sqrt{\la \tau_1 \eta_1} I_{0}^{\haf} \max\limits_{i \in S}\norm{\sbi}_{\mbK}}{ 2 \laa} \sup\limits_{i \not\in S}(\norm{\tilde{\beta}_{i}}_{\mbH}) =  \frac{\phi \tau_{1} \sqrt{\la \tau_1 \eta_1} I_{0}^{\haf} d_N}{ 2 \laa} r_{N}^{\frac{1}{2}} O_{p}(1)  \\
% \end{align*}
%  So if $ \dfrac{I_0 \sqrt{\log(I)}d_N}{\sqrt{N}} \ll \dfrac{\laa}{\sqrt{\la}}$ then $P(B_4) \rightarrow 0$ where $\tilde{\beta}_{i}$ is the \textit{FSL estimation} and $r_{N}=\dfrac{I_0 \log(I)}{N}$. \\ \\ 
%%%%%%%%%%%%%%%%%%%%%%
%%%%%%%%%%%%%%%%%%%%%%
%%%%%%%%%%%%%%%%%%%%%%
\textbf{step 5:} $P(B_{5}) \rightarrow 0$ \\
The idea is same as Part 3. Recall that
\begin{align} \label{B5:1}
B_{5} =\bigg\{\frac{1}{N} \norm{\bX_{.i}^\top  \bX_1 G_{I_0}(\sbb)}_{\mbH} \geq \dfrac{\tw_{i}}{3}  \hspace{1cm} ; i \not\in \mcS \bigg\},
\end{align}
where  $\sbb=\{ \tilde{w}_{i} \hbi \norm{\hbi}_{\mbH}^{-1} ; i \in \mcS \}$. We aim to show that $ \dfrac{\norm{\bX_{.i}^\top  \bX_1 G_{I_0}(\sbb)}_{\mbH}}{N \tw_{i}} \rightarrow 0$. By using Assumption \ref{a:1.4}, \eqref{p:B3.GK} and \eqref{B3:shat2} and the fact that
\begin{align*}
\sup\limits_{i \not\in \mcS} \tw_{i}^{-1} = \sup\limits_{i \not\in \mcS} \norm{\tilde{\beta}_i-0}_{\mbH}=O_{p}(r_{N}^{\haf}),
\end{align*} 
where $r_N = \dfrac{I_0 \log(I)}{N}$, we can write
\begin{align*}
& \frac{\norm{\bX_{.i}^\top  \bX_1 G_{I_0}(\sbb)}_{\mbH}}{N \tw_{i}} \leq \frac{ \norm{ \hsigg \hsig^{-1} }_{op}  \norm{\hsig}_{op} \norm{ G_{I_0}}_{op} \norm{\sbb}_{\mbH^{I_0}}}{ \tw_{i}} \leq \frac{\phi \tau^2 \sqrt{I_0} \left(1+o_{p}(1) \right)^{\haf}}{b_N \tw_{i}} \\
& \leq \frac{ \phi \tau^2 \sqrt{I_0} \left(1+o_{p}(1) \right)^{\haf}}{b_N} r_{N}^{\haf}O_{p}(1) \leq \frac{ I_0 \sqrt{\log(I)}}{\sqrt{N} b_N} O_{p}(1).
\end{align*}
So if $\dfrac{I_0^2 \log(I)}{N} \ll b_N^2$, then $P(B_5) \rightarrow 0$ asymptotically. \\
 
% \newpage
% \begin{align*}
% & \frac{\norm{\bX_{.i}^\top  \bX_1 G_{I_0} \KK(\sbb)}_{\mbH}}{N \tw_{i}} \leq \frac{ \norm{ \hsigg G \KK \dfrac{\sbb}{\norm{\sbb}_{\mbH}}}_{\mbH} \norm{\sbb}_{\mbH}}{ \tw_{i}} \\  \\
% & \leq \frac{ \norm{ \hsigg \hsig^{-1} }_{op}  \norm{\hsig}_{op} \norm{ G \KK}_{op} \norm{\sbb}_{\mbH}}{ \tw_{i}} \\ \\
% & \leq \frac{\phi \tau_1^2 \sqrt{I_0} \left(1+o_{p}(1) \right)^{\haf}}{b_N \tw_{i}}  \leq \frac{ \phi \tau_1^2 \sqrt{I_0} \left(1+o_{p}(1) \right)^{\haf}}{b_N} r_{N}^{\haf}O_{p}(1) \\ \\
% &  \leq \frac{ I_0 \sqrt{\log(I)}}{N^{\haf} b_N} O_{p}(1) \rightarrow 0  \\ \\
% \end{align*}
% If $\dfrac{I_0^2 \log(I)}{N} \ll b_N^2$, then we conclude $P(B_5) \rightarrow 0$. \\ \\
%%%%%%%%%%%%%%%%%%%%
%%%%%%%%%%%%%%%%%%%%
%%%%%%%%%%%%%%%%%%%%
%%%%%%%%%%%%%%%%%%%%
%%%%%%%%%%%%%%%%%%%%
%%%%%%%%%%%%%%%%%%%%
\textbf{Step 6:} $P(B_{6}) \rightarrow 0$ \\
Recall that 
\begin{align*}
B_{6} =\bigg\{\frac{1}{N} \norm{\bX_{.i}^\top H_N \ep}_{\mbH} \geq \dfrac{\laa\tw_{i}}{3} \hspace{1cm} ; i \not\in \mcS \bigg\},
\end{align*}
where $H_N=\left(\mathds{I}_{N} - \bX_1 \left(\bX_{1}^\top \bX_{1}  \mathds{I}_{I_0} + \la N K_{I_0}^{-1} L^2_{I_0} \right)^{-1} \bX_1^\top \right)$.
We notice that $B_{6} = \bigcup\limits_{i \not\in \mcS} A_{i}$ where $A_{i} =\bigg\{\frac{1}{N}  \norm{\bX_{.i}^\top H_N \epsilon }_{\mbH} \geq \dfrac{\laa\tw_{i}}{3}  \bigg\}$. According to Lemma \ref{l:btilde}
%%%%%%%%%%%%%%%%%%%%%%%%
\begin{align}  
\nonumber \\ P(B_6) & \leq \sum\limits_{i \not\in \mcS} P(A_i) = \sum\limits_{i \not\in \mcS} P\left( N^{-1} \norm{\tbi}_{\mbH} \norm{\bX_{.i}^\top H_N \ep }_{\mbH} \geq \frac{\laa}{3}\right) \leq \sum\limits_{i \not\in \mcS} P\left( N^{-1} O_p(r_N^{\haf}) \norm{\bX_{.i}^\top H_N \ep}_{\mbH} \geq \frac{\laa}{3}\right) \nonumber \\ \nonumber \\
& = \sum\limits_{i \not\in \mcS} P\left( O_p(1) \norm{\bX_{.i}^\top H_N \ep}_{\mbH} \geq \frac{N \laa}{3 r_N^{\haf}}\right) = \sum\limits_{i \not\in \mcS} P\left( T \norm{\bX_{.i}^\top H_N \ep}_{\mbH} \geq \frac{N \laa}{3 r_N^{\haf}}\right), \label{B6:prob} %\nonumber
\end{align}
where we denoted $T=O_p(1)$ which is bounded in probability with $M$ for $\dfrac{\varepsilon}{2 I}$. Using conditional probability on T, we have
\begin{align}
P(B_6) & \leq \sum\limits_{i \not\in \mcS} P\left( T \norm{\bX_{.i}^\top H_N \ep}_{\mbH} \geq \frac{N \laa}{3 r_N^{\haf}} \given[\Big] T > M \right)P(T>M) \nonumber \\ \nonumber  \\
& + \sum\limits_{i \not\in \mcS} P\left( T \norm{\bX_{.i}^\top H_N \ep}_{\mbH} \geq \frac{N \laa}{3 r_N^{\haf}} \given[\Big] T \leq M \right) P(T \leq M) \nonumber \\ \nonumber  \\
& \leq \sum\limits_{i \not\in \mcS} 1 \times P(T>M) + \sum\limits_{i \not\in \mcS} P\left( \norm{\bX_{.i}^\top H_N \ep}_{\mbH} \nonumber \geq \dfrac{N \laa}{3 M r_N^{\haf}}\right) \times 1 \nonumber \\ \nonumber \\
& \leq \frac{\varepsilon}{2} + \sum\limits_{i \not\in \mcS} P\left( \norm{\bX_{.i}^\top H \ep}_{\mbH}^2 \geq ( \frac{N \laa}{3 M r_N^{\haf}} )^2 \right). \nonumber
\end{align}
%%%%%%%%%%%%%%%%%%%%%%%%%%%%%%%%%%%%%%
We just need to show that $ \sum\limits_{i \not\in \mcS} P\left( \norm{\bX_{.i}^\top H_N \ep}_{\mbH}^2 \geq ( \dfrac{N \laa}{3 M r_N^{\haf}} )^2 \right)$ goes to zero.
%%%%%%%%%%%%%%%%%%%%%%%%%%%%%%%%%%%%%%
Since $\ep=(\epsilon_1,\dots,\epsilon_N)$ and $\epsilon_i$s are independent mean zero $C$-subgaussian process in $\mbH$, Lemma \ref{l:multi-sub} implies that 
$\bX_{.i}^\top H_N \ep$ is a $C_h$-subgaussian process where
\begin{align*}
C_{h}= \bX_{.i}^{\top} H_N C_{N} H_N \bX_{.i}.
\end{align*}
According to Lemma \ref{l:subg-Basic} we have
\begin{align} 
P\left(\norm{\bX_{.i}^\top H_N \ep}_{\mbH} \geq ( \norm{C_{h}}_{1} + 2 \norm{C_{h}}_{2} \sqrt{t} + 2 \norm{C_{h}}_{\infty} t )  \right) \leq \exp(-t), \label{l:Chinq2}
\end{align}
and based on Lemma \ref{l:PCP} and the fact that $\bX_{.i}$ is standardized, we have
% \todo{it looks correct. The second inequality basically follows from Cauchy schwartz.  When you compute the operator norm, find the max over $h \in \mbH^N$ of $\| X_i H_n h\|$. I think a triangle inequality plus cauchy-swartz lets you bound this by $|X_i| \| X_n h\|$, and then your inequality is good.}
\begin{align} \label{p:Ch}
\norm{C_h}_m \leq \norm{\bX_{.i}^{\top} H_N}_{op}^2 \norm{C}_m \leq \norm{X_{.i}^{\top}}^2 \norm{H_N}_{op}^2 \norm{C}_m = N \norm{H_N}_{op}^2 \norm{C}_m.
\end{align}
Now we just need to bound $\norm{H_N}_{op}$. Let's denote $P=\left(\bX_{1}^\top \bX_{1} \mathds{I}_{I_0}  + \la N K_{I_0}^{-1} L^2_{I_0} \right)^{-1}$ and then write $H_N =\mathds{I}_N - \bX_1 P \bX_1^\top$. So if we can prove eigenvalues of $\bX_1 P \bX_1^\top$ are in $[0,1]$, it will be obvious the eigenvalues of $H$ will be in $[0,1]$ and consequently $\norm{H_N}_{op} \leq 1$. 
% \todo{At this point I think you can just say that it is a basic linear algebra exercise to show that $\bX_1 P \bX_1^\top$ and $P \bX_1^\top \bX_1$ have the same eigenvalues. In fact, you can show that $A^\top B A$ always has the same eigenvalues as $B A A^\top$. But you don't need to mention this, just cite the matrix cookbook and move on.}
%%%%%%%%%%%%%%%%%%%%%%%%%%%%%%%%%%
For doing so, we want to use the Lemma \ref{l:idempotent} and prove
\begin{align*}
\norm{\bX_1 P \bX_1^\top \bX_1 P \bX_1^\top x }_{\mbH^N} \leq \norm{\bX_1 P \bX_1^\top x}_{\mbH^N} \hspace{1cm} \forall x \in \mbH^N.
\end{align*}
It is a basic linear algebra exercise to show that $\bX_1 P \bX_1^{\top}$ and $ P \bX_1^{\top} \bX_1$ have the same eigenvalues. Then
\begin{align*}
P \bX_1^\top \bX_1 & = \left(\bX_{1}^\top \bX_{1} \mathds{I}_{I_0}  + \la N K_{I_0}^{-1} L^2_{I_0} \right)^{-1} \bX_1^\top \bX_1 \\ 
& = \left( \mathds{I}_{I_0}  + \la \hsig^{-1} K_{I_0}^{-1} L^2_{I_0} \right)^{-1} = (\bI_{I_0} \otimes \mathds{I} + \la \hsig^{-1} \otimes K^{-1} L^2_{I_0})^{-1},
\end{align*}
where $\mathds{I}$ is an identity operator from $\mbH$ to $\mbH$ and $\bI_{I_0}$ is an identity $I_0$ by $I_0$ matrix. Then we can see 
\begin{align*}
P \bX_1^{\top} \bX_1 (u_i \otimes v_j) = (1+\la \tau_i \theta_j^{-1} \eta_j^{2})^{-1} (u_i \otimes v_j).
\end{align*}
Since $(1+\la \tau_i \theta_j^{-1} \eta_j^{2})^{-1} \leq 1$, the eigenvaluse of $P \bX_1^{\top} \bX_1$
are smaller than one and we can conclude
% \todo{can you pull out the $X_1$ first in some way? it isn't obvious when you have that first ${\bf X}_1$ in there.}
\begin{align*}
\norm{\bX_1 \left(P \bX_1^\top \bX_1 (P \bX_1^\top x)\right) }_{\mbH^N} \leq \norm{\bX_1 (P \bX_1^\top x)}_{\mbH^N} \hspace{1cm} \forall x \in \mbH^N,
\end{align*}
therefore $\norm{H_N}_{op} \leq 1$. So \eqref{p:Ch} will be simplified to
\begin{align}
\norm{C_h}_m \leq  N \norm{C}_m.
\end{align}
%%%%%%%%%%%%%%%%%%%%%%%%%%%%%%%%%%%
According to \eqref{l:Chinq2}, we can see
\begin{align*}
& P\left(\norm{\bX_{.i}^\top H_N \ep}_{\mbH} \geq N( \norm{C}_{1} + 2 \norm{C}_{2} \sqrt{t} + 2 \norm{C}_{\infty} t )  \right) \leq \exp(-t). 
\end{align*}
So we are looking for a $\hat t$ such that 
\begin{align*}
N (\norm{C}_{1} + 2 \norm{C}_{2} \sqrt{\hat{t}} + 2 \norm{C}_{\infty} \hat{t}) \leq \left(\frac{N \laa}{3 M r_N^{\haf}} \right)^2.
\end{align*}
Same as \eqref{l:CDT}, we have
\begin{align*}
\norm{C}_{1} + 2 \norm{C}_{2} \sqrt{t} + 2 \norm{C}_{\infty} t \leq Dt.
\end{align*}
So we just need to find $\hat{t}$ such that
\begin{align*}
\left(\frac{N \laa}{3 M r_N^{\haf}} \right)^2 \geq N D \hat{t}.
\end{align*}
Let's denote $D_2 = 9 D M^2$, then one can see $\hat t = \dfrac{N \laa^2}{ D_2 r_N}$ implies
\begin{align*}
P\left( \norm{\bX_{.i}^\top H_N \epsilon }_{\mbH} \geq \frac{N \laa}{3M r_N^{\haf}}\right) \leq \exp\left(- \frac{N \laa^2}{ D_2 r_N}\right).
\end{align*}
Based on \eqref{B6:prob}, we can bound
% \todo[color=gray]{I changed label below, to B6:prob2, check its correct}
\begin{align}  \label{B6:prob2}
P(B_6) \leq \frac{\varepsilon}{2} + I  \exp\left(- \frac{N \laa^2}{ D_2 r_N}\right) \leq \frac{\varepsilon}{2} + \exp\left(- \frac{N^2 \laa^2}{ D_2 I_0 \log(I)} + \log(I-I_0) \right) \rightarrow 0.
\end{align} 
So if $\laa \gg \dfrac{\sqrt{I_0} \log(I)}{N}$, then we can conclude $P(B_6) \rightarrow 0$ asymptotically.
\\ \\ \\

\textbf{{Proof of part 2:}} \\
We need to show that
\begin{align*}
\sqrt{N}(\hboldb -\oboldb)=o_{P}(1) \quad \quad under \ \ \normh,
\end{align*}
or equivalently, since the support is recovered with probability tending to one,
\begin{align*}
P \left( \sqrt{N} \norm{ \hboldb_{1} - \oboldb_1 }_{\mbH^{I_0}} \geq \varepsilon \right) \rightarrow 0,
\end{align*}
where $\hboldb=(\hboldb_1,\textbf{0})$ and $\oboldb=(\oboldb_1,\textbf{0})$. So 
% \todo[color=green]{check! oracle beta}
\begin{align*}
\hboldb_{1} & = \hsig G_{I_0} (\stb_{1}) + G_{I_0} \left(  N^{-1} \bX_{1}^\top \ep - \laa \sbb \right), \\
\oboldb_1 & = \left( \hsig  \mathds{I}_{I_0}  + \la K_{I_0}^{-1} L^2_{I_0} \right)^{-1} \left(\frac{1}{N} \bX_{1}^\top \bY \right) = \hsig G_{I_0} (\stb_{1}) + G_{I_0} \left(  N^{-1} \bX_{1}^\top \ep \right),
\end{align*}
% \todo{cite where your oracle estimate is explicitly calculate or argue that it can be obtained by taking the subgradient equations with $\laa=0$ or something.}
where $G_{I_0} = \left(\hsig \mathds{I}_{I_0}+\la K_{I_0}^{-1} L_{I_0}^2 \right)^{-1}$ and the oracle estimate is obtained by taking the subgradient of target function \eqref{AFSSEN} with $\laa = 0$ given the true support.
%%%%%%%%%%%%%%%%%%%%%%%%%%%%%%%%%
So the norm of the difference is given by
% We can write \todo{what is the below equation doing? isn't the same as before?  Just say, "the norm of the difference is given by"}
% \begin{align*}
% P \left( \sqrt{N} \norm{ \hboldb_{1} - \oboldb_1 }_{\mbH^{I_0}} \geq \varepsilon \right) \rightarrow 0,
% \end{align*}
\begin{align} \label{p:ineq-GKK}
\sqrt{N} \norm{ \hboldb_{1} - \oboldb_1 }_{\mbH^{I_0}} =\sqrt{N} \laa \norm{ G_{I_0} \sbb} _{\mbH^{I_0}} \leq \sqrt{N} \laa \norm{ G_{I_0}}_{op} \norm{\sbb}_{\mbH^{I_0}}.
\end{align}
%%%%%%%%%%%%%%%%%%%%%%%%%%%%%%%%%
According to \eqref{t:GI0} and \eqref{B3:shat2} we will have
\begin{align*}
&  \norm{G_{I_0}}_{op} \leq \tau, \\
& \norm{\sbb}_{\mbH^{I_0}} \leq \frac{I_{0}^{1/2}}{b_{N}} \left(1+o_{p}(1) \right),
\end{align*}
then we can conclude
\begin{align*}
\sqrt{N} \norm{ \hboldb_{1} - \oboldb_1 }_{\mbH^{I_0}} \leq  \sqrt{N} \laa \tau \frac{I_0^{1/2}}{b_N} \left(1+o_{p}(1) \right).
\end{align*}
So if $\laa \ll \dfrac{b_N}{\sqrt{N} \sqrt{I_0}}$, the probability asymptotically goes to zero.
\\ \\ \\
%%%%%%%%%%%%%%%%%%%%%%%%%
%%%%%%%%%%%%%%%%%%%%%%%%%
%%%%%%%%%%%%%%%%%%%%%%%%%
%%%%%%%%%%%%%%%%%%%%%%%%%
%%%%%%%%%%%%%%%%%%%%%%%%%
\textbf{{Proof of part 3:}} \\
Here we want to show that
\begin{align*}
\sqrt{N}(\hboldb -\oboldb)=o_{P}(1) \quad \quad under \ \ \normk,
\end{align*}
or equivalently, since the correct support is recovered with probability tending to one,
\begin{align*}
P \left( \sqrt{N} \norm{ \hboldb_{1} - \oboldb_{1} }_{\mbK^{I_0}} \geq \varepsilon \right) \rightarrow 0.
\end{align*}
%%%%%%%%%%%%%%%%%%%%%%%%%%%%%%%%%%
Similar to \eqref{p:ineq-GKK}, we can see
\begin{align*}
& \sqrt{N} \norm{\hboldb_{1} - \oboldb_{1}}_{\mbK^{I_0}} = \sqrt{N} \laa \norm{ G_{I_0} \sbb}_{\mbK^{I_0}}  =  \sqrt{N} \laa \norm{ K_{I_0}^{-1/2} G_{I_0}\sbb} _{\mbH^{I_0}} \\ \\
& =   \sqrt{N} \laa \norm{ K_{I_0}^{-1/2} \left(\hsig \mathds{I}_{I_0}+\la K_{I_0}^{-1} L_{I_0}^2 \right)^{-1}  \sbb} _{\mbH^{I_0}} \leq  \sqrt{N} \laa \norm{ \left( \hsig K_{I_0}^{1/2} + \la K^{-1/2}_{I_0} L_{I_0}^2 \right)^{-1}}_{op} \norm{\sbb}_{\mbH^{I_0}}.
\end{align*}
Since
\begin{align*}
\left(\hsig K_{I_0}^{1/2} + \la K^{-1/2}_{I_0} L_{I_0}^2 \right)^{-1} = \left( \hsig \otimes K^{1/2} + \la \bI_{I_0} \otimes K^{-1/2} L^2 \right)^{-1},
\end{align*}
where $\bI_{I_0}$ is an identity $I_0$ by $I_0$ matrix. Then we can see
\begin{align*}
\left( \hsig \otimes K^{1/2} + \la \bI_{I_0} \otimes K^{-1/2} L^2 \right)^{-1}(u_i \otimes v_j) = (\tau_i \theta_j^{1/2} + \la \theta_j^{-1/2} \eta_j^2)^{-1} (u_i \otimes v_j).
\end{align*}
Then we can conclude
\begin{align*}
\norm{\left( \hsig \otimes K^{1/2} + \la \bI_{I_0} \otimes K^{-1/2} L^2 \right)^{-1}}_{op} = \max \dfrac{1}{(\tau_i \theta_j^{1/2} + \la \theta_j^{-1/2} \eta_j^2)}.
\end{align*}
If there exists a constant $M>0$ such that $\eta_j^2 \geq M \sqrt{\theta_j}$, we have
\begin{align*}
\norm{\left( \hsig \otimes K^{1/2} + \la \bI_{I_0} \otimes K^{-1/2} L^2 \right)^{-1}}_{op} \leq \dfrac{1}{\la M},
\end{align*}
and then based on \eqref{B3:shat2}, we can see
\begin{align*}
\sqrt{N} \norm{\hboldb_{1} - \oboldb_{1}}_{\mbK^{I_0}} \leq \dfrac{\sqrt{N} \laa}{M \la} \norm{\sbb}_{\mbH^{I_0}} \leq \dfrac{\sqrt{N} \laa}{M \la} \frac{\sqrt{I_{0}}}{b_{N}} \left(1+o_{p}(1) \right)^{\haf}.
\end{align*}
So if $\dfrac{\laa}{\la} \ll \dfrac{b_N}{\sqrt{N} \sqrt{I_0}}$, the proof will be completed.

\end{document}